\newcommand{\lv}{\left\vert}
\newcommand{\rv}{\right\vert}
\newcommand{\esp}{^{\gamma}}
\newcommand{\espp}{^{(\lambda,\gamma)}}
\newcommand{\espl}{^{(\lambda)}}
\newcommand{\esppp}{^{(\lambda,\delta,\gamma)}}
\newcommand{\1}{_{(1)}}
\newcommand{\en}{_{(n)}}
\newcommand{\R}{\mathbb{R}}
\newcommand{\K}{\mathbb{K}}
\newcommand{\A}{\mathcal{A}}
\newcommand{\G}{\mathcal{G}}
\newcommand{\C}{\mathcal{C}}
\newcommand{\xin}{\varkappa}
\newcommand{\etan}{\varsigma}
\newtheorem{remark}{Remark}[section]
\newtheorem{note}{Note}[section]
\newtheorem{lemma}{Lemma}
\begin{document}

\title{Long time asymptotics of a Brownian particle coupled with a random environment with non-diffusive feedback force}
\author{Michela Ottobre\\
email address:  m.ottobre08@imperial.ac.uk\\
postal address: Department of Mathematics\\ 
Room 617   Huxley Building\\ 
   Imperial College London \\
   180 Queens Gate 
   London SW7 2BZ 
   }

\hyphenation{sub-diffusive} 
\hyphenation{super-diffusion} 
\maketitle

\begin{abstract}
We study the long time behavior of a Brownian particle moving in    an anomalously diffusing field, the evolution of which depends on the particle position. We prove that the process describing the asymptotic behaviour of the Brownian particle has bounded (in time) variance when the particle  interacts with a subdiffusive field; when the interaction is with a superdiffusive field the variance of the limiting process grows in time as $t^{2\gamma-1}$, $1/2<\gamma<1$. Two different kinds of   superdiffusing (random) environments are considered: one is described through the use of  the fractional Laplacian; the other   via the  Riemann-Liouville fractional integral. The subdiffusive field is modeled through the Riemann-Liouville fractional derivative.   
\end{abstract}
\textbf{Keywords}: anomalous diffusion, Riemann-Liouville fractional derivative (integral), fractional Laplacian, continuous time random walk, L\'evy flight, scaling limit, interface fluctuations.
\section{Introduction} 
In \cite{articolo}, L. Bertini et al considered the following system   of It\^o-SDEs, describing the evolution of a one-dimensional interface:
\begin{equation}\label{sistema_originale_diffusivo}
\left\{
\begin{array}{l}
dX(t)=\lambda dw(t)+ \alpha \langle\varphi_{X(t)},h(t) \rangle dt\\
\\
dh(t)=\frac{1}{2}\Delta h(t)dt - \varphi_{X(t)} dX(t),
\end{array}
\right.
\end{equation}
with initial conditions $X(0)=h(0)=0$.
In the above system $w(t)$ is a one dimensional  Brownian motion (BM) on the filtered probability space $(\Omega, \mathcal{F}, \mathcal{F}_t, P)$ 
($E$ is going to denote expectation with respect to $P$) and  
$\langle\cdot,\cdot\rangle$ is the scalar product of $L^2(\mathbb{R},dx)$.
More precisely, in \cite{articolo} the authors consider a system thermally isolated from the exterior, in a state in which two phases coexist. Under the assumption of planar symmetry for the system, the interface position is represented by the point $X(t)\in \mathcal{C}(\R_{+})$ separating the two phases. In equation (\ref{sistema_originale_diffusivo})$_1$ the interface displacements are described as the sum of two contributions: the first is a Brownian fluctuation, related to the macroscopic fluctuations of the system, the second is the interaction with a diffusive field, $h(t)=h(t,x)\in 
\mathcal{C}(\R_{+};\mathcal{C}(\R))$. Also, 
\begin{equation*}
\langle\varphi_{X(t)},h(t) \rangle=\int_{\R}dx\varphi(x-X(t))h(t,x),
\end{equation*}
where $\varphi(x)$ is a probability density in the Schwartz class  (regions of the field far from the interface do not significantly affect the interface evolution) and $\varphi_{X(t)}=\varphi(x-X(t))$.\\
On the other hand, equation (\ref{sistema_originale_diffusivo})$_2$ describes the field variation as the sum of a diffusive term plus a "feedback term" taking into account the latent heat effect.\\
The parameters $\lambda>0$ and $\alpha>0$  determine  the intensity of the Brownian noise and of the coupling with the field, respectively.
In \cite{articolo} the authors study a scaling limit of $X(t)$ as $\lambda\rightarrow0$ under the hypothesis $\alpha=\lambda$ of weak coupling.\\
Notice that the system (\ref{sistema_originale_diffusivo}) can also be interpreted as describing a Brownian motion weakly coupled with a (diffusive) random environment, the evolution of which depends on the position of the Brownian motion itself.
For further details about the model we refer to \cite{articolo} and \cite{articolo_2}.

Let $\xi(t)$ be the solution of the following integral equation
\begin{equation}\label{eqn_integrale_xi_diffusivo}
\xi(t)=\bar{b}(t)-\int_o^t ds\rho_{t-s}(0)\xi(s),
\end{equation}
where $\bar{b}(t)$ is the scaled BM  $\bar{b}(t)=\lambda w(t\lambda^{-2})$ and $\rho_t(x)=\rho(t,x)$ is the density of a centered Gaussian with variance $t$. In \cite{articolo} the  following asymptotics (\ref{risultato_diffusivo}) and (\ref{asintotica_di_xi_diffusivo})
are obtained: upon rescaling the interface position, i.e. considering the process
$X_{\lambda}(t)=X(t\lambda^{-2})$, we have  that $\forall N\in [1,\infty)\, \exists \,\tau=\tau(N)>0$ s.t.
\begin{equation}\label{risultato_diffusivo}
\lim_{\lambda\rightarrow 0} E\sup_{t\leq\tau\mid\log\lambda\mid}
\left\vert X_{\lambda}(t)-\xi(t)\right\vert^N=0\mbox{.}
\end{equation}
As noticed in \cite{articolo}, this implies that $X_{\lambda}$  converges weakly  to $\xi$ as $\lambda\rightarrow0$ in $\mathcal{C}(\R_{+})$ endowed with the topology of uniform convergence on compacts. Furthermore,  $\xi(t)$ is a centered Gaussian process such that 
\begin{equation}\label{asintotica_di_xi_diffusivo}
\lim_{t\rightarrow\infty}\frac{1}{\log(t)}E\left[\xi(t)\right]^2=
\frac{2}{\pi}\mbox{;}
\end{equation}
that is, the width of the interface fluctuations increases in time as $\log(t)$. .

However, a number of natural phenomena cannot be described by simple diffusion; e.g., the way some proteins diffuse across cell membranes or the motion of a particle in systems with geometric constraints, for example on the surface of a perfect crystal. Therefore, it can be of interest considering systems of SDEs analogous to (\ref{sistema_originale_diffusivo}) and in which the Brownian particle interacts with anomalously diffusing fields. The present paper is devoted to extending the results obtained in \cite{articolo} for system (\ref{sistema_originale_diffusivo}), to the case in which the interface fluctuations are due to  interactions with anomalously diffusing fields. In other words, we will study the long time behavior of a Brownian particle coupled with an anomalously diffusing environment 
(see systems (\ref{eqn:sistema_originale_subdiffusivo}), (\ref{eqn:sistema_originale_superdiffusivo_integrale_frazionario})
and (\ref{eqn:sistema_originale_superdiffusivo_laplac_fraz})). 

Anomalous diffusion processes are characterized by a mean square displacement which, instead of growing linearly in time, grows like
$t^{2\gamma}$, $\gamma>0,\, \gamma\neq \frac{1}{2}$.
When $0<\gamma<\frac{1}{2}$ the process is subdiffusive, when $\gamma>\frac{1}{2}$ it is superdiffusive.\\
Diffusion phenomena can be described at the microscopic level by BM and macroscopically by the heat equation, i.e. the parabolic problem associated with the Laplacian operator; the link between the two descriptions is, roughly speaking, the fact that the fundamental solution to the diffusion equation is the probability density associated with BM.\\
A similar picture can be obtained for anomalous diffusion. The main difference is that in nature a variety of anomalous diffusion phenomena can be observed and the question is how to characterize them from both the analytical and the statistical point of view. It has been shown that the microscopical (probabilistic) approach can be understood in the context of continuous time random walks (CTRW) and, in this framework, a process is uniquely determined once the probability density to move at distance $r$ in time $t$ is known (\cite{citare_per_CTRW_1}-
\cite{browniano_1}, \cite{Compte}, \cite{Klafter} and references therein). The analytical approach is based on the theory of fractional differentiation operators, where the derivative can be fractional either in time or in space
(see \cite{citare_per_differenziazione_fratta_1}-\cite{citare_per_differenziazione_fratta_3}, \cite{Taqqu} and references therein).

\nopagebreak[3]
For $f(s)$ regular enough (e.g. $f\in \C (0,t]$ with an integrable singularity at $s=0$), let us introduce the Riemann-Liouville fractional derivative,
\begin{equation}\label{operatore_differ_fratta}
D_t^{\gamma}(f):=\frac{1}{\Gamma(2\gamma)}\frac{d}{dt}\int_0^tds\frac{f(s)}{(t-s)^{1-2\gamma}},\qquad 0<\gamma<\frac{1}{2}\mbox{,}
\end{equation}
and the Riemann-Liouville fractional integral,
\begin{equation}\label{operatore_integraz_fratta}
I_t^{\gamma}(f):=\frac{1}{\Gamma(2\gamma-1)}\int_0^tds\frac{f(s)}{(t-s)^{2-2\gamma}},\qquad \frac{1}{2}<\gamma<1\mbox{,}
\end{equation}
where $\Gamma$ is the Euler Gamma function (\cite{citare_per_differenziazione_fratta_3}). Appendix \ref{app:motivation} contains a motivation for introducing such operators.
For $\frac{1}{2}<\gamma<1$ let us also introduce the fractional Laplacian $\Delta^{(\gamma)}$, defined through its Fourier transform:  if
the Laplacian corresponds, in Fourier space, to a multiplication by $-k^2$, the fractional Laplacian corresponds to a multiplication by
$-\mid k\mid^{\frac{1}{\gamma}}$. ($\ref{operatore_differ_fratta}$) and ($\ref{operatore_integraz_fratta}$) can be defined in a more general way (see \cite{citare_per_differenziazione_fratta_3}), but to our purposes the above definition is sufficient. Furthermore, notice that the operators in ($\ref{operatore_differ_fratta}$) and ($\ref{operatore_integraz_fratta}$) are fractional in time, whereas the fractional Laplacian is fractional in space.\\
Let us now consider the function $\rho^{\gamma}(t,x)$, solution to
\begin{equation}\label{eqn_soddisf_da_rho_sub_frazionaria}
\partial_t\rho^{\gamma}(t,x)=\frac{1}{\Gamma(2\gamma)}\frac{d}{dt} \int_0^t ds\,\frac{\partial^2_x\rho^{\gamma}(s,x)}{(t-s)^{1-2\gamma}} \quad\mbox{ when } 0<\gamma<\frac{1}{2}, 
\end{equation}
\begin{equation}\label{eqn_soddisf_da_rho_super_frazionaria}
\partial_t\rho^{\gamma}(t,x)=\frac{1}{\Gamma(2\gamma -1)} \int_0^t ds\,\frac{\partial^2_x\rho^{\gamma}(s,x)}{(t-s)^{2-2\gamma}} \quad\mbox{ when } \frac{1}{2}<\gamma<1 \mbox{.}
\end{equation}
It has been shown (see \cite{citare_per_CTRW_2, Compte} and references therein) that such a kernel is the asymptotic of the probability density of a CTRW run by a particle either moving at constant velocity between stopping points or instantaneously jumping between halt points, where it waits a random time before jumping again. 
On the other hand, a classic result states that the Fourier transform of the solution $\rho^{\gamma}(t,x)$ to
\begin{equation}\label{Laplaciano_frazionario}
\partial_t \rho^{\gamma}(t,x)=\frac{1}{2} \Delta^{(\gamma)}\rho^{\gamma}(t,x), \qquad \frac{1}{2} <\gamma<1, 
\end{equation}
is, for $\gamma\geq\frac{1}{2}$, the characteristic function of a (stable) process whose first moment is divergent when $\gamma\geq 1$ (see \cite{Taqqu}); this justifies the choice $\frac{1}{2}<\gamma<1$ in  equation ($\ref{Laplaciano_frazionario}$). Processes of this kind are particular CTRWs, the well known L\'evy flights; in this case large jumps are allowed with non negligible probability and this results in the process having divergent second moment.\\
We will use the notation $\rho\esp(t,x)=\rho^{\gamma}_t(x)$ to indicate the solution to 
either ($\ref{eqn_soddisf_da_rho_sub_frazionaria}$),  
($\ref{eqn_soddisf_da_rho_super_frazionaria}$) or  
($\ref{Laplaciano_frazionario}$), as in \nopagebreak[0]
the proofs we use only the properties that these kernels have in common.\\
The above described framework is analogous to the one of Einstein diffusion: for subdiffusion and Riemann-type superdiffusion the statistical description is given by 
CTRWs, whose (asymptotical) density is the fundamental solution of the evolution \hyphenation{equation}equation associated with the 
operators of fractional differentiation and integration, i.e. (\ref{eqn_soddisf_da_rho_sub_frazionaria})
and (\ref{eqn_soddisf_da_rho_super_frazionaria}), respectively (see Appendix B). 
 For the L\'evy-type superdiffusion, the statistical point of view is given by L\'evy flights, whose \-pro\-bability density evolves in time according to the evolution equation \-associated with the fractional Laplacian, i.e. (\ref{Laplaciano_frazionario}) (see \cite{Taqqu}).

In view of the above considerations, we introduce the 
following three systems of It\^o-SDEs:
\begin{equation}\label{eqn:sistema_originale_subdiffusivo}
\left\{
\begin{array}{l}
dX^{\gamma}(t)=\lambda^{\frac{1}{2\gamma}} dw(t)+ \lambda^{\frac{1}{\gamma}-1} \langle\varphi_{X^{\gamma}(t)},h^{\gamma}(t) \rangle dt\\
\\
\displaystyle
dh^{\gamma}(t)= \frac{1}{\Gamma(2\gamma)}\frac{d}{dt}\int_0^tds\frac{\partial^2_x h^{\gamma}(s,x)}
{(t-s)^{1-2\gamma}} dt- 
\varphi_{X^{\gamma}(t)} dX^{\gamma}(t),
\end{array}
\right.
\end{equation}
\\
\begin{equation}\label{eqn:sistema_originale_superdiffusivo_integrale_frazionario}
\left\{
\begin{array}{l}
dX^{\gamma}(t)=\lambda^{\frac{1}{2\gamma}} dw(t)+ \lambda^{\frac{1}{\gamma}-1} \langle\varphi_{X^{\gamma}(t)},h^{\gamma}(t) \rangle dt\\
\\
\displaystyle
dh^{\gamma}(t)=\frac{1}{\Gamma(2\gamma-1)}\int_0^tds\frac{\partial^2_x h^{\gamma}(s,x)}{(t-s)^{2-2\gamma}} dt- \varphi_{X^{\gamma}(t)} 
dX^{\gamma}(t),
\end{array}
\right.
\end{equation}
\\
and
\\
\begin{equation}\label{eqn:sistema_originale_superdiffusivo_laplac_fraz}
\left\{
\begin{array}{l}
dX^{\gamma}(t)=\lambda^{\frac{1}{2\gamma}} dw(t)+ \lambda^{\frac{1}{\gamma}-1} \langle\varphi_{X^{\gamma}(t)},h^{\gamma}(t) \rangle dt\\
\\
dh^{\gamma}(t)= \frac{1}{2}\Delta^{(\gamma)}h^{\gamma}(t) dt- \varphi_{X^{\gamma}(t)} dX^{\gamma}(t)\mbox{.}
\end{array}
\right.
\end{equation}
Roughly  speaking, the first two systems are obtained from (\ref{sistema_originale_diffusivo}), by replacing the Laplacian of the field $h(t,x)$ in equation 
$(\ref{sistema_originale_diffusivo})_2$, with the fractional derivative and fractional integral of $\Delta h(t,x)$, respectively (see (\ref{eqn_soddisf_da_rho_sub_frazionaria})
and (\ref{eqn_soddisf_da_rho_super_frazionaria})).  The last system is obtained by replacing the Laplacian with the fractional Laplacian (see (\ref{Laplaciano_frazionario})). In this way we model our anomalously diffusing fields. \\
Again, $w(t)$ is a one dimensional BM,  $\varphi(x)$ is a function in the Schwartz class and  $\varphi_{X^{\gamma}(t)}=\varphi(x-X^{\gamma}(t))$. A more detailed motivation for introducing  the above systems can be found in Appendix B.\\
We shall denote by $X\esp(t)$ the solution to either of the three above systems
(the reason for adopting  this notation, which might at first seem confusing,  will be apparent in few lines).
For $\lambda\in (0,1)$, let us introduce the scaled variables
$$
X\espp(t):=X^{\gamma}\left(t\lambda^{-\frac{1}{\gamma}}\right),
$$
$$
h\espp(t,x):=\frac{1}{\lambda} h\esp\left(x\lambda^{-1},t\lambda^{-\frac{1}{\gamma}}\right),
$$
$$
\varphi^{(\lambda)}(x):=\frac{1}{\lambda} \varphi\left(x\lambda^{-1}\right).
$$
For the function $\varphi$ only, we use the convention $\varphi_a(x):=\varphi(x-a), a\in\R$ and we set
\begin{equation}\label{def fi_t lambda}
\varphi_t^{(\lambda)}(x):=\varphi^{(\lambda)}_{\lambda X\espp}
=\frac{1}{\lambda} \varphi\left(x\lambda^{-1}- X\espp(t)\right)
\mbox{;}
\end{equation}
the notation for $\varphi^{(\lambda)}_t$ 
should include the superscript $ ^{\gamma}$, which we omit.\\
Let also  
$\xi^{\gamma}(t)$ be the solution to the integral equation
\begin{equation}\label{eqn_integrale_xi_anomalo}
\xi^{\gamma}(t)=b(t)-\int_0^t ds \rho^{\gamma}_{t-s}(0)\xi^{\gamma}(s), \qquad  \xi^{\gamma}(0)=0, \qquad0<\gamma<1,
\end{equation}
where $b(t)=\lambda^{\frac{1}{2\gamma}}w(t\lambda^{-\frac{1}{\gamma}})$. Notice that, in virtue of the scaling property of Brownian motion, the dependence of $\xi\esp(t)$ on $\lambda$ through $b(t)$ is only formal.
The main result presented in this paper is a scaling limit (in fact, three scaling limits) of $X\espp(t)$ to $\xi\esp (t)$.
Also, the solution to (\ref{eqn_integrale_xi_anomalo}) is unique by basic facts on the theory of Volterra integral equations, which we shall recall at the beginning of Section \ref{lemmata}. 
\newtheorem{teorema}{Theorem}
\begin{teorema}[first version]\label{version 1}
With the notation introduced above, we have that $\forall \gamma\in(0,1)$ and $\forall N\in[1,\infty)$ there exists $\tau=\tau(N,\gamma)>0$ such that
$$
\lim_{\lambda\rightarrow 0}E\sup_{t\leq\tau|\log\lambda|^{\frac{1}{C(\gamma)}}}
\vert X\espp(t)-\xi\esp(t)\vert^N=0,
$$    
where $C(\gamma)$ is a positive constant, with $C(1/2)=1$. 
\end{teorema}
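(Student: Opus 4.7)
The plan is to follow the route used in \cite{articolo} for the diffusive case: eliminate the field variable $h\esp$ by a Duhamel formula, substitute back into the equation for $X\esp$, rescale, and compare the resulting self-consistent Volterra-type equation with (\ref{eqn_integrale_xi_anomalo}). In each of the three systems (\ref{eqn:sistema_originale_subdiffusivo})--(\ref{eqn:sistema_originale_superdiffusivo_laplac_fraz}) the $h\esp$-equation is linear in $h\esp$ with forcing $-\varphi_{X\esp(t)}dX\esp(t)$ and with generator whose fundamental solution is $\rho\esp_t(x)$; together with the initial condition $h\esp(0)=0$ this gives
$$
h\esp(t,\cdot)\;=\;-\int_0^t\bigl(\rho\esp_{t-s}\ast\varphi_{X\esp(s)}\bigr)\,dX\esp(s).
$$
Inserting this into the drift of the $X\esp$-equation produces a closed integral equation for $X\esp$ whose memory kernel is the overlap $\langle\varphi_{X\esp(t)},\rho\esp_{t-s}\ast\varphi_{X\esp(s)}\rangle$; applying the scalings $X\espp(t)=X\esp(t\lambda^{-1/\gamma})$, $\varphi\espl(x)=\lambda^{-1}\varphi(x/\lambda)$ together with the self-similarity of $\rho\esp$ inherited from (\ref{eqn_soddisf_da_rho_sub_frazionaria})--(\ref{Laplaciano_frazionario}), I rewrite this in terms of the scaled variables.

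As $\lambda\to 0$ the mollifier $\varphi\espl$ approaches a Dirac mass at the origin, so the overlap $\langle\varphi_t\espl,\rho\esp_{t-s}\ast\varphi_s\espl\rangle$ is close to $\rho\esp_{t-s}\bigl(X\espp(t)-X\espp(s)\bigr)$, and in turn close to $\rho\esp_{t-s}(0)$ provided $X\espp(t)-X\espp(s)$ is small. Substituting $\rho\esp_{t-s}(0)$ into the memory kernel reproduces exactly the kernel of (\ref{eqn_integrale_xi_anomalo}), so the difference $Y\espp(t):=X\espp(t)-\xi\esp(t)$ satisfies a linear Volterra integral equation with kernel $\rho\esp_{t-s}(0)$, driven by two error terms: a smoothing error from replacing $\varphi\espl$ by $\delta$, and a linearization error from evaluating $\rho\esp_{t-s}$ at $X\espp(t)-X\espp(s)$ rather than at $0$; an It\^o correction coming from the stochastic integral in the Duhamel representation also has to be tracked.

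To close the argument I would invoke the resolvent representation for Volterra equations with kernel $\rho\esp_{t-s}(0)$ (the facts recalled at the start of Section \ref{lemmata}) and run a bootstrap: assume a priori that $\sup_{t\le T_\lambda}|Y\espp(t)|^N$ is at most some $\lambda^{\beta N}$, use this to bound the linearization error by means of a modulus-of-continuity estimate on $\rho\esp$ at the origin together with a Kolmogorov-type estimate on the increments of $X\espp$, then verify a posteriori that the resulting bound is consistent with the assumption up to the horizon $T_\lambda=\tau|\log\lambda|^{1/C(\gamma)}$. The constant $C(\gamma)$ is determined by the balance between the power of $\lambda$ lost in the error estimates and the $L^N$-growth of $\xi\esp$ itself (bounded for $0<\gamma<1/2$, logarithmic at $\gamma=1/2$, of order $t^{(2\gamma-1)/2}$ for $1/2<\gamma<1$); the normalization $C(1/2)=1$ recovers (\ref{risultato_diffusivo}). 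The required moment bounds on $\xi\esp$ are available by Gaussianity for (\ref{eqn:sistema_originale_subdiffusivo}) and (\ref{eqn:sistema_originale_superdiffusivo_integrale_frazionario}) and from stable-process theory for (\ref{eqn:sistema_originale_superdiffusivo_laplac_fraz}).

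The main obstacle I expect is the linearization error in the fractional-Laplacian system: there $\rho\esp$ has heavy tails and limited regularity at the origin, and $X\espp$ is close to a symmetric stable process with infinite second moment, so both the increment estimates on $X\espp$ and the Taylor-type estimate on $\rho\esp$ must be recast in terms of fractional moments of order strictly less than $1/\gamma$. This is what couples the choice of $C(\gamma)$ to the regularity of the kernel and, in effect, forces the time horizon $\tau|\log\lambda|^{1/C(\gamma)}$ to be shorter than in the time-fractional cases. Conversely, in the subdiffusive regime $\rho\esp_{t-s}(0)$ is locally integrable with the exponent needed to make (\ref{eqn_integrale_xi_anomalo}) genuinely contractive over arbitrarily long times, which is both what makes the bootstrap close easily and what yields the bounded-in-time variance announced in the abstract.
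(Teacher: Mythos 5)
Your overall skeleton (Duhamel elimination of $h\esp$, rescaling, $\varphi\espl\rightarrow\delta_0$ so that the memory kernel is compared with $\rho\esp_{t-s}(0)$, resolvent representation plus a Gronwall-type closure) is indeed the paper's starting point, but there is a genuine gap at the core of your plan: you propose to close a \emph{single} Volterra equation for $X\espp-\xi\esp$ with kernel $\rho\esp_{t-s}(0)$ driven by ``error terms'', and this cannot be carried out uniformly in $\gamma$, because the objects those error terms should converge to do not exist once $\gamma\geq 1/2$. Concretely, $K_{s,0}\espp$ formally ``wants'' to converge to $\int_0^s db(s')\rho\esp_{s-s'}(0)$, which is not well defined as a process in $\mathcal{C}(\R_+)$ since $\rho\esp_t(0)=c(\gamma)t^{-\gamma}$ is not square integrable near $0$ for $\gamma\geq 1/2$ (and likewise $F_1\espp$ has no well-defined limit object). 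The missing idea is the iteration scheme: one iterates the integral equations for $X\espp$ and $\xi\esp$ $n$ times and compares the iterates $X\en\espp$ of (\ref{X_n}) with $\xi\en\esp$ of (\ref{xi_n}), so that the stochastic convolutions involve the iterated kernels $\K_{\gamma}^{\ast(n+1)}(t)\sim t^{n-(n+1)\gamma}$, which are well defined for $\gamma<\frac{2n+1}{2(n+1)}$ (see (\ref{well pos of stoch int})) and bounded for $\gamma<\frac{n}{n+1}$; convergence is then proved for each $n$ on $\gamma\in\left(0,\frac{n}{n+1}\right)$ (Theorem \ref{main_result}), using the auxiliary process $\eta\en\esp$ with forcing $G\en\esp$, Lemma \ref{lemma:equazioni_integrali} to transfer the relation between forcings to the solutions, and Gronwall applied to $\dot{Y}\en\espp-\eta\en\esp$. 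Without this device (or an equivalent one) your bootstrap cannot even be formulated in the superdiffusive regimes, which are precisely the new cases of the theorem.

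Two further points in your sketch are off. First, in the fractional-Laplacian system neither $X\espp$ nor $\xi\esp$ is ``close to a symmetric stable process with infinite second moment'': the particle is always driven by Brownian motion, and $\xi\esp(t)=\int_0^t db(s)F^{\gamma}(t-s)$ is a centered Gaussian process for every $\gamma\in(0,1)$ (Theorem \ref{thm:theorem 2}); the heavy tails of (\ref{Espressione_esplicita_per_rho_laplaciano}) enter only through the kernel, and the proof uses exactly the properties (\ref{scaling_property}), (\ref{bound 1}), (\ref{bound 2}), (\ref{stima sul kernel}) shared by all three kernels, so no fractional-moment analysis of $X\espp$ is needed and, at fixed $\gamma$, the L\'evy case is treated on the same footing as the Riemann case (it does not force a shorter horizon). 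Second, $C(\gamma)$ is not determined by the growth of the variance of $\xi\esp$: it equals $(n+1)(1-\gamma)$ with $n$ the number of iterations required for the given $\gamma$, and it arises because every estimate carries the Gronwall factor $e^{CT^{(n+1)(1-\gamma)}}$, which must be beaten by a positive power of $\lambda$ coming from the technical lemmata (with $\delta=\lambda^a$), yielding the horizon $\tau\mid\log\lambda\mid^{\frac{1}{(n+1)(1-\gamma)}}$ of (\ref{risultato_anomalo_sub}); at $\gamma=1/2$ one can take $n=1$ and recovers $C(1/2)=1$, consistently with (\ref{risultato_diffusivo}).
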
 
The fact that $C(1/2)=1$ is coherent with (\ref{risultato_diffusivo}). In Section \ref{sec: proof of thm 3} we prove an equivalent version of Theorem \ref{version 1}, namely Theorem \ref{main_result}, where the constant $C(\gamma)$ is made explicit. 
Theorem \ref{version 1} says that the asymptotic behavior of $X\espp(t)$, the rescaled solution  to either one of the systems (\ref{eqn:sistema_originale_subdiffusivo}), (\ref{eqn:sistema_originale_superdiffusivo_integrale_frazionario})  and (\ref{eqn:sistema_originale_superdiffusivo_laplac_fraz}), is described by the function $\xi\esp(t)$. Hence,  we need to determine the behavior of $\xi\esp(t)$ for large $t$, and this is the content of the following Theorem \ref{thm:theorem 2}. 
\begin{teorema}\label{thm:theorem 2}
For $\gamma\in(0,\frac{1}{2})$, 
$\xi^{\gamma}(t)$  is a centered Gaussian process s.t.  
\begin{equation}\label{asintotica_di_xi_subdiffusivo}
\lim_{t\rightarrow\infty}
E\left[\xi^{\gamma}(t)\right]^2 = const\mbox{.}
\end{equation}
For $\gamma\in\left(\frac{1}{2},1\right)$, we prove an invariance principle for $\xi^{\gamma}(t)$.  Let $\xi^{\gamma}_{\epsilon}(t)=\epsilon^{\gamma-\frac{1}{2}} \xi^{\gamma}(\epsilon^{-1}t)$; then, as $\epsilon\rightarrow0$,  $\xi^{\gamma}_{\epsilon}$ converges weakly in $\mathcal{C}(\R_+)$ to a mean zero Gaussian process, $Z(t)$, whose covariance function is
$$
E(Z(s)Z(t))=\frac{\sin^2(\pi\gamma)}{\pi^2 c(\gamma)^2}\int_0^{t\wedge s}du
\frac{1}{(t-u)^{1-\gamma}(s-u)^{1-\gamma}}\mbox{.}
$$
\end{teorema}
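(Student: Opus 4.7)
My approach is to convert the Volterra equation (\ref{eqn_integrale_xi_anomalo}) into a Wiener integral representation for $\xi^{\gamma}$, then read off both Gaussianity and the covariance structure. Since $b(t)$ has the law of a standard Brownian motion $w$ by scaling, I introduce the resolvent $R_{\gamma}$ of the convolution kernel $\rho^{\gamma}_{\cdot}(0)$, defined by $R_{\gamma}(t)+\int_0^t \rho^{\gamma}_{t-s}(0) R_{\gamma}(s)\,ds=\rho^{\gamma}_t(0)$, so that $\xi^{\gamma}(t)=w(t)-\int_0^t R_{\gamma}(t-s)\, w(s)\,ds$. A deterministic integration by parts then gives
\begin{equation*}
\xi^{\gamma}(t)=\int_0^t \bigl[1-\bar R_{\gamma}(t-s)\bigr]\, dw(s), \qquad \bar R_{\gamma}(u):=\int_0^u R_{\gamma}(v)\,dv,
\end{equation*}
making $\xi^{\gamma}$ a centred Gaussian Wiener integral and, by It\^o isometry,
\begin{equation*}
E\bigl[\xi^{\gamma}(s)\,\xi^{\gamma}(t)\bigr]=\int_0^{s\wedge t}\bigl[1-\bar R_{\gamma}(s-u)\bigr]\bigl[1-\bar R_{\gamma}(t-u)\bigr]\,du.
\end{equation*}
Both halves of the theorem are thus reduced to the large-time behaviour of $1-\bar R_{\gamma}$.

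\textbf{Resolvent asymptotics.} By the self-similarity shared by all three anomalous kernels one has $\rho^{\gamma}_t(0)=c(\gamma)\,t^{-\gamma}$, whose Laplace transform is $\hat\rho(z)=c(\gamma)\Gamma(1-\gamma)\,z^{\gamma-1}$. Hence
\begin{equation*}
\widehat{R_{\gamma}}(z)=\frac{\hat\rho(z)}{1+\hat\rho(z)}=1-\frac{z^{1-\gamma}}{c(\gamma)\Gamma(1-\gamma)}+O\bigl(z^{2(1-\gamma)}\bigr), \qquad z\to 0^{+},
\end{equation*}
and dividing by $z$, a Tauberian inversion together with the reflection formula $\Gamma(\gamma)\Gamma(1-\gamma)=\pi/\sin(\pi\gamma)$ yields
\begin{equation*}
1-\bar R_{\gamma}(u)\sim \frac{\sin(\pi\gamma)}{\pi\,c(\gamma)}\,u^{\gamma-1}\qquad (u\to\infty).
\end{equation*}
Equivalently, $R_{\gamma}$ can be identified with a Mittag--Leffler function and the asymptotic read off directly.

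\textbf{Conclusion.} For $\gamma\in(0,1/2)$ the exponent $2\gamma-2<-1$ makes $(1-\bar R_{\gamma})^2$ integrable on $(0,\infty)$, hence $E[\xi^{\gamma}(t)^2]=\int_0^t(1-\bar R_{\gamma}(u))^2\,du$ converges to a finite constant, which proves (\ref{asintotica_di_xi_subdiffusivo}). For $\gamma\in(1/2,1)$ the substitution $u=\epsilon^{-1}v$ in the covariance formula gives
\begin{equation*}
E\bigl[\xi^{\gamma}_{\epsilon}(s)\,\xi^{\gamma}_{\epsilon}(t)\bigr]=\epsilon^{2\gamma-2}\int_0^{s\wedge t}\bigl[1-\bar R_{\gamma}(\epsilon^{-1}(s-v))\bigr]\bigl[1-\bar R_{\gamma}(\epsilon^{-1}(t-v))\bigr]\,dv,
\end{equation*}
and the previous asymptotic, combined with dominated convergence, produces precisely the covariance of $Z$ announced in the statement; finite-dimensional convergence then follows at once from centred Gaussianity. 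Tightness in $\C(\R_{+})$ is handled by Kolmogorov's criterion: an analogous computation yields $E[(\xi^{\gamma}_{\epsilon}(t)-\xi^{\gamma}_{\epsilon}(s))^2]\le C|t-s|^{2\gamma-1}$, and Gaussianity upgrades this to arbitrarily high moments of the increments. The main technical obstacle is to turn the purely asymptotic statement on $1-\bar R_{\gamma}$ into pointwise bounds uniform in the variable entering the dominated-convergence argument, and to control increments near the diagonal, where the singularity of $\rho^{\gamma}_{\cdot}(0)$ at $0$ intrudes; this will require a complementary short-time analysis of $R_{\gamma}$, e.g.\ through its Mittag--Leffler representation or by direct Neumann-series bounds for the Volterra iteration.
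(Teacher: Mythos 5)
Your argument is, up to notation, identical to the paper's: the function $1-\bar R_{\gamma}$ you construct is precisely the Green function $F^{\gamma}$ the paper introduces via the scalar Volterra equation $F^{\gamma}(t)=1-\int_0^t \rho^{\gamma}_{t-s}(0)F^{\gamma}(s)\,ds$ (take Laplace transforms: $\widehat{1-\bar R_{\gamma}}(\mu)=\frac{1}{\mu(1+\hat\rho(\mu))}=(F^{\gamma})^{\#}(\mu)$), your Wiener-integral representation $\xi^{\gamma}(t)=\int_0^t\bigl[1-\bar R_{\gamma}(t-s)\bigr]\,dw(s)$ coincides with the paper's $\xi^{\gamma}(t)=\int_0^t db(s)\,F^{\gamma}(t-s)$, and the asymptotic $1-\bar R_{\gamma}(u)\sim \frac{\sin(\pi\gamma)}{\pi c(\gamma)}u^{\gamma-1}$ is exactly the paper's Lemma on $F^{\gamma}$. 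The subsequent use of $2\gamma-2<-1$ for integrability when $\gamma<1/2$, the rescaled covariance computation via dominated convergence, and the Kolmogorov-plus-Gaussianity tightness argument are all the same as in the paper.

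The one step you skip is the justification for applying the Tauberian theorem. The Laplace-transform asymptotic $(F^{\gamma})^{\#}(\mu)\sim\frac{\mu^{-\gamma}}{c(\gamma)\Gamma(1-\gamma)}$ as $\mu\to0$ only forces $t^{1-\gamma}F^{\gamma}(t)\to\frac{\sin(\pi\gamma)}{\pi c(\gamma)}$ if one additionally supplies a Tauberian side condition; "a Tauberian inversion...yields" is not enough on its own. The paper verifies this by showing $F^{\gamma}$ is monotone decreasing: it considers $\Phi^{\#}(\mu)=1-\mu(F^{\gamma})^{\#}(\mu)=\frac{c(\gamma)\Gamma(1-\gamma)}{\mu^{1-\gamma}+c(\gamma)\Gamma(1-\gamma)}$, argues it is completely monotone (composition of a completely monotone function with a Bernstein function), and concludes $-dF^{\gamma}$ is a positive measure. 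Your parenthetical remark about identifying $1-\bar R_{\gamma}$ with a Mittag--Leffler function $E_{1-\gamma}(-c(\gamma)\Gamma(1-\gamma)t^{1-\gamma})$ would indeed deliver both the monotonicity (complete monotonicity of $E_{\alpha}(-\cdot)$ for $\alpha\in(0,1]$) and the asymptotic directly, but you should make one of the two routes explicit rather than leaving the Tauberian hypothesis unverified.
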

Intuitively, this means that in the case in which the particle interacts with a 
subdiffusive field, the feedback force exerted by the field keeps the process localized. On the other hand, the superdiffusive field (no matter which one of the two we consider) is not strong enough to contrast the effect of the Brownian nature of the particle and the width   of the fluctuation increases in time as $t^{2\gamma-1}$.\\
Notice also that the CTRW associated with the operators $D_t\esp$ and $I_t\esp$ are non-Markovian whereas  L\'evy processes are Markovian processes; nevertheless the limiting process (\ref{eqn_integrale_xi_anomalo}) is  non-Markovian for any $\gamma\in(0,1)$: in the case of L\'evy-type superdiffusion there is loss of Markovianity.\\
The paper is organized as follows: in Section \ref{sec:notation}, after establishing the notation, we state a second (equivalent) version of Theorem \ref{version 1}. This version is the one that we shall actually prove in Section \ref{sec: proof of thm 3}. Section \ref{lemmata} contains all the technical estimates used in Section \ref{sec: proof of thm 3}. This proof is a generalization of the one used in \cite{articolo} in order to prove (\ref{risultato_diffusivo}).
Section \ref{proof of theorem 2} is devoted  to the proof of Theorem \ref{thm:theorem 2}, which relies on the use of Tauberian Theorems. 
  Finally, Appendix A provides a sketch of the proof of \-exi\-stence, uniqueness and continuity of the paths of the solution to (\ref{eqn:sistema_originale_subdiffusivo}), (\ref{eqn:sistema_originale_superdiffusivo_integrale_frazionario}) and 
(\ref{eqn:sistema_originale_superdiffusivo_laplac_fraz}).
 Appendix B contains a more detailed motivation for 
the introduction of the operators of fractional differentiation and integration. \section{Notation and Results} \label{sec:notation}

The kernels in ($\ref{eqn_soddisf_da_rho_sub_frazionaria}$) and 
($\ref{eqn_soddisf_da_rho_super_frazionaria}$) can be explicitly written both in integral form (see Appendix \ref{app:motivation})
 \begin{equation}\label{espressione_esplicita_per_rho}
\rho^{\gamma}(t,x)=\frac{1}{4\pi i} \int_{c-i\infty}^{c+i\infty}dz\, e^{zt}\,
\frac{e^{-\mid x\mid z^{\gamma} }}{z^{1-\gamma}} \quad \forall c>0 \mbox{ and } 0<\gamma<1 
\end{equation}
and as a series,
\begin{eqnarray}\label{sviluppo_in_serie_della_rho}
\rho^{\gamma}(t,x)&=&\frac{1}{ 2 t^{\gamma}} M\left(\frac{\mid x\mid}{t^{\gamma}},\gamma\right),
\quad\quad0<\gamma<1,\mbox{ where} \nonumber\\
M(z,\gamma)&:=&\sum_{k=0}^{\infty} \frac{(-1)^k \, z^k}{k!\, \Gamma(-\gamma(k+1)+1)}\mbox{.}
\end{eqnarray}
The asymptotic behavior  of the \textit{Mainardi function} $M(z,\gamma)$ as $z\rightarrow +\infty$ is known,
$$
M(z,\gamma)\simeq A(\gamma)\,z^{\frac{2\gamma-1}{2-2\gamma}}\,e^{-B(\gamma)\,
z^{\frac{1}{1-\gamma}}},
$$
with $A$ and $B$ constants depending on $\gamma$ (\cite{citare_per_differenziazione_fratta_2}); hence $\rho^{\gamma}(t,x)$ has finite moments of any orders, which is
\begin{equation*}
\int_{\R} dx \,\rho^{\gamma}(t,x) \mid x\mid^n < \infty, \qquad\forall n\in N\mbox{.}
\end{equation*}
We remark that this property holds when $\rho^{\gamma}(t,x)$ is the fundamental solution of either ($\ref{eqn_soddisf_da_rho_sub_frazionaria}$) or 
($\ref{eqn_soddisf_da_rho_super_frazionaria}$).
On the other hand, the fundamental solution of ($\ref{Laplaciano_frazionario}$), namely
\begin{equation}\label{Espressione_esplicita_per_rho_laplaciano}
\rho^{\gamma}(t,x)=\int_{\R} e^{-\frac{1}{2}t\mid k\mid^{\frac{1}{\gamma}}} e^{i k x} dk\, ,\quad \gamma \in  (1/2,1),
\end{equation}
%
%
has finite first moment but divergent second moment.\\
We want to remark that in order to prove Theorem \ref{version 1} (i.e. Theorem \ref{main_result}),  we basically use only the 
following elementary properties enjoyed by both ($\ref{espressione_esplicita_per_rho}$) and ($\ref{Espressione_esplicita_per_rho_laplaciano}$):
\begin{itemize}
\item scaling property:
\begin{equation}\label{scaling_property}
\rho^{\gamma}(t,x)=\frac{1}{t^{\gamma}}\rho^{\gamma}\left(1,\frac{x}{t^{\gamma}}\right)\mbox{,}
\end{equation}
from which, setting 
\begin{equation}\label{def of c(gamma)}
c(\gamma):=\rho_1\esp(0),
\end{equation}
\begin{equation}\label{rho centrata}
\rho^{\gamma}_{t-s}(0)=\frac{\rho^{\gamma}_{1}(0)}{(t-s)^{\gamma}}
=\frac{c(\gamma)}{(t-s)^{\gamma}};
\end{equation}
\item there exists a generic constant $C=C(\gamma)>0$ such that 
\begin{equation}\label{bound 1}
 \left\vert \frac{\rho_1\esp(z)}{c(\gamma)}-1\right\vert\leq C\mid z\mid^{\beta},
\footnote{This inequality can be deduced by using (\ref{sviluppo_in_serie_della_rho}) when referring to Riemann-type anomalous diffusion, see also footnote 
\ref{foot 1}. When $\rho\esp_t$ is the kernel in (\ref{Espressione_esplicita_per_rho_laplaciano}), see footnote \ref{foot 1}.}
\quad\forall \beta\in (0,1],
\end{equation}
and 
\begin{equation}\label{bound 2}
 \left\vert \frac{\rho_1\esp(z)}{c(\gamma)}-1\right\vert\leq C.\footnote{The constant that appears in this inequality is equal to $1$ when $\rho\esp$ is either the L\'evy-type kernel or the subdiffusive kernel and it depends on $\gamma$ otherwise; see again footnote \ref{foot 1}. } 
\end{equation}
\end{itemize}
For $f,g\,\in L^2([0,t])$, $f\ast g$ denotes the \textit{Volterra convolution}, namely  
$$
(f\ast g)(t):=\int_0^t ds f(t-s) g(s).
$$ 
For $m\in \mathbb{N},\,m\geq 2,\,f^{*(m)}=f\ast f^{(m-1)}$ is the convolution of $f$ with itself $(m-1)$ times, where we define $f^{\ast(1)}(t):=f(t)$.
Set $\mathbb{K}_{\gamma}(t):=\rho_t\esp (0)$ and notice that
\begin{equation}\label{K2}
\mathbb{K}_{\gamma}^{\ast(2)}(t-s)=\!\int_0^{t-s}\!\!\!\!\!\!\!\! ds'\rho\esp_{t-s-s'}(0)\rho\esp_{s'}(0)=\int_s^t \!\!\!\!ds' \rho\esp_{t-s'}(0) \rho\esp_{s'-s}(0)=
k_{(1)}(\gamma)(t-s)^{1-2\gamma}.
\end{equation}
If we iterate $n$ times,  we end up with 
\begin{align}\label{def of K}
\mathbb{K}_{\gamma}^{\ast(n+1)}(t-s)&:=
\int_s^t ds' \rho_{t-s'}^{\gamma}(0)\mathbb{K}_{\gamma}^{\ast(n)}(s'-s)\nonumber\\
&=k_{(n)}(\gamma)(t-s)^{n-(n+1)\gamma}, \quad \qquad n\geq 1,
\end{align}
where
\begin{equation}\label{def constk}
k_{(n)}(\gamma):=c(\gamma)^{n+1}\frac{\Gamma(1-\gamma)^{n+1}}{\Gamma((n+1)(1-\gamma))} \mbox{.}
\end{equation}
To obtain  the previous equality we used the fact that the Beta function $B(z,w)$ can be expressed in terms of the Euler Gamma function in the following way:
$$
B(z,w)\stackrel{def}{=} \int_0^1 ds \,s^{z-1}(1-s)^{w-1} = \frac{\Gamma(z)\,\Gamma(w)}{\Gamma(z+w)}\qquad Re(z)\mbox{, }Re(w)>0\mbox{.}
$$
In the same way, by setting
\begin{equation}\label{def Pts}
P_{t,s}^{(\lambda,\gamma)}: = \langle \varphi_t^{(\lambda)},\rho_{t-s}^{\gamma} \varphi_s^{(\lambda)} \rangle=P_{t,s}^{\ast(1)},
\end{equation}
(on the RHS we drop the superscript $\espp$ for notational convenience) we have
$$
P_{t,s}^{\ast(2)}=\int_s^t ds' P_{t,s'}\espp P_{s',s}\espp,
$$
and for $n\geq 1$
\begin{equation}\label{iter Pts}
P_{t,s}^{\ast(n+1)}:=\int_s^t ds' P_{t,s'}\espp P_{s',s}^{\ast(n)}.
\end{equation}
We further introduce
\begin{equation*}
K_{t,s}^{(\lambda,\gamma)} := \langle \varphi_t^{(\lambda)},\int_s^tdb(s')\rho_{t-s'}^{\gamma} \varphi_{s'}^{(\lambda)} \rangle,
\end{equation*}
\begin{equation}\label{F_0}
F_0^{(\lambda,\gamma)}(t):=-\int_0^tds K_{s,0}\espp,
\end{equation}
\begin{equation*}
F_1^{(\lambda,\gamma)}(t):=\int_0^t ds P_{t,s}\espp K_{s,0}\espp,
\end{equation*}
\begin{eqnarray*}
F_2^{(\lambda,\gamma)}(t)&:=&-\int_0^t ds\int_0^s ds'
 P_{t,s}\espp P_{s,s'}\espp K_{s',0}\espp\\
 &=&-\int_0^t dsK_{s,0}\espp P_{t,s}^{\ast(2)}, 
\end{eqnarray*}
and in general
\begin{align}
F_n\espp(t)&:=(-1)^{n+1}  \int_0^t dsK_{s,0}\espp P_{t,s}^{\ast(n)},
\qquad n\geq1\label{F_n1}\\
&\,=-\int_0^t ds P\espp_{t,s}F_{n-1}\espp(s)
\qquad\qquad n\geq2\label{F_n2}.
\end{align}
Via the Duhamel principle (see Lemma \ref{lemma:Principio_di_Duhamel}), systems (\ref{eqn:sistema_originale_subdiffusivo}), (\ref{eqn:sistema_originale_superdiffusivo_integrale_frazionario})
and (\ref{eqn:sistema_originale_superdiffusivo_laplac_fraz}) can be expressed in integral form by a unique system, that is:
\begin{equation}\label{sistema_Duhamel}
\left\{
\begin{array}{l}
\displaystyle
X\espp(t)=b(t)+\int_0^t ds 
\langle \varphi_s^{(\lambda)},h\espp(s)\rangle\\
\\
\displaystyle
h\espp(t)=-\int_0^t db(s)\rho_{t-s}^{\gamma}
\varphi_s^{(\lambda)}-\int_0^t ds \langle \varphi_s^{(\lambda)},
h\espp(s)\rangle \rho^{\gamma}_{t-s}\varphi_s^{(\lambda)},
\end{array}
\right.
\end{equation}
where $\gamma\in(0,1)$; in the above system  $\rho\esp_t(x)=\rho^{\gamma}(t,x)$ is either ($\ref{espressione_esplicita_per_rho}$) for $\gamma\in(0,1)$ or ($\ref{Espressione_esplicita_per_rho_laplaciano}$) for $\gamma\in(1/2,1)$. For any $f$ in the Schwartz class, $\left(\rho^{\gamma}_{t}f\right)(x)$ is a convolution in the space variable. Namely,  
 $\rho^{\gamma}_{t-s}\varphi_s^{(\lambda)}=\left(\rho^{\gamma}_{t-s}\varphi_s^{(\lambda)}\right)(x)=
 \int_{\mathbb{R}}dy\rho^{\gamma}_{t-s}(x-y)\varphi_s^{(\lambda)}(y)$.
The initial conditions  for (\ref{sistema_Duhamel}) are
$X\espp(0)=h\espp(0)=0$. In Appendix A we prove that (\ref{sistema_Duhamel}) admits a unique solution in $\mathcal{C}(\mathbb{R}_+;\mathbb{R}\times L^2(\mathbb{R}))$. 
Notice as well that from  (\ref{sistema_Duhamel}) one has
\begin{equation}\label{integrale per h}
h\espp(t)=-\int_0^t dX\espp(s)
\rho^{\gamma}_{t-s}\varphi_s^{(\lambda)}. 
\end{equation}
Following \cite{articolo} page 10, we formally iterate once both the equation for $X\espp$ and the one for $\xi^{\gamma}$, $(\ref{sistema_Duhamel})_1$ and (\ref{eqn_integrale_xi_anomalo}), respectively. Plugging ($\ref{sistema_Duhamel}$)$_2$ into  ($\ref{sistema_Duhamel}$)$_1$  and using (\ref{integrale per h}), we get 
\begin{align}
X\espp\1(t)&= b(t)\!-\!\! \int_0^t\! \!\!\!ds K_{s,0}\espp \! +\!\! \int_0^t\!\!\!\!ds \langle\varphi_s^{(\lambda)},\!
\int_0^s\!\!\!\! ds'
\langle \varphi_{s'}^{(\lambda)},\!\int_0^{s'}\!\!\!\!\!dX\espp\1(s'') \rho_{s'-s''}\esp \varphi_{s''}^{(\lambda)}     \rangle  \rho_{s-s'}^{\gamma} \varphi_{s'}^{(\lambda)} \rangle\nonumber\\
&= b(t)+F_0\espp(t)+\!\!\!\int_0^t\!\!\!ds \!\!\int_0^s\!\!\!ds'P_{s,s'}\espp \langle 
\varphi_{s'}^{(\lambda)},\! \int_0^{s'}\!\!\!\!
dX\1\espp(s'') \rho_{s'-s''}^{\gamma} \varphi_{s''}^{(\lambda)}
\rangle,\label{X_1}
\end{align}
where the subscript $\1$ is to recall  that we are considering the first iteration of $(\ref{sistema_Duhamel})_1$.
Setting
$Y\1\espp(t) = X\1\espp(t) - b(t) - F_0\espp(t)$, $Y\1\espp(t)$ solves
\begin{eqnarray}\label{eqn:Y_1}
Y\1\espp(t)&=&\int_0^t ds F_1\espp(s)+\int_0^t ds F_2\espp(s)\nonumber\\
&+& \int_0^tds\int_0^sds'P_{s,s'}\espp\langle \varphi_{s'}^{(\lambda)},
\int_0^{s'}\!\!\!\!dY\1\espp
(s'')\rho_{s'-s''}^{\gamma} \varphi_{s''}^{(\lambda)}\rangle\mbox{;}
\end{eqnarray}
observing that $ Y\1\espp(t)$ is a.s. in  $\mathcal{C}^1(\mathbb{R})$,  we can rewrite the previous expression for $ Y\1\espp(t)$ as
\begin{equation}\label{Y_punto_una_iterata}
\dot{Y}\1\espp(t) = F_1\espp(t) + F_2\espp(t)+\int_0^tds\dot{Y}\1\espp(s)\int_s^t
ds'P_{t,s'}\espp P_{s',s}\espp,
\end{equation}
hence
\begin{equation}\label{prima_iterata_di_X}
X\1\espp(t):=\int_0^tds \dot{Y}_{(1)}\espp(s)+b(t)+F_0\espp(t)\mbox{.}
\end{equation}
On the other hand, iterating the equation for $\xi^{\gamma}$ and using (\ref{K2}), we get
\begin{equation}\label{prima_iterata_di_xi}
\xi^{\gamma}\1(t)=b(t)-\int_0^tds \rho_{t-s}^{\gamma}(0) b(s)+ k_{(1)}(\gamma)\int_0^tds (t-s)^{1-2\gamma}\xi\1^{\gamma}(s)\mbox{.}
\end{equation}
We can repeat the same procedure $n$ times; for $n\geq 2$ we then have:
\begin{equation}\label{X_n}
X\en\espp(t):= b(t)+F_0\espp(t)+\int_0^tds [F_1\espp+\dots+F_n\espp](s)
+Y\espp\en(t),
\end{equation} 
 where 
\begin{equation}\label{def of Y_n}
Y\espp\en(t):=(-1)^{n+1}\int_0^t ds \langle \varphi^{(\lambda)}_s, 
\int_0^sdX\espp\en(u)\rho\esp_{s-u}\varphi_u^{(\lambda)}\rangle
\int_s^t ds' P_{s',s}^{\ast(n)}.
\end{equation}
$Y\espp\en(t)$ solves the equation
\begin{eqnarray}\label{eqn for Y_n}
Y\espp\en(t)&\!\!=\!\!&\int_0^t ds \left[F_n\espp+\dots +F_{2n}\espp\right](s)\nonumber\\
&\!\!+\!\!&(-1)^{n+1} \int_0^t ds  \langle \varphi^{(\lambda)}_s, 
\int_0^sdY\espp\en(u)\rho\esp_{s-u}\varphi_u^{(\lambda)}\rangle
\int_s^t ds' P_{s',s}^{\ast(n)},
\end{eqnarray}
so by differentiating, using the definition of $P_{t,s}\espp$ and (\ref{iter Pts}), we get
\begin{eqnarray}\label{eqn for dotY_n}
\dot{Y}\espp\en(t)&\!\!=\!\!& \left[F_n\espp+\dots +F_{2n}\espp\right](t)\nonumber\\
&\!\!+\!\!&(-1)^{n+1} \int_0^t ds \dot{Y}\espp\en(s) P_{t,s}^{\ast(n+1)}.
\end{eqnarray}
Define $A\en\esp(t)$ as
\begin{equation}\label{espress per A_n}
A\esp\en(t):=\sum_{\nu=0}^{n}(-1)^{\nu}\left( \mathbb{K}_{\gamma}^{\ast(\nu)}\ast b\right)(t) \quad  n\geq 1,\,\,\, 0<\gamma<\frac{n}{n+1},
\end{equation}
where $\left( \mathbb{K}_{\gamma}^{\ast(0)}\ast b\right)(t)$ is only formal and we set it to be equal to $b(t)$.
Then, at the $n-$th iteration of the equation for the limiting process $\xi\esp(t)$, we find that $\forall n\geq1$,
\begin{equation}\label{xi_n}
\xi\en\esp(t)=A\en\esp(t)+(-1)^{n+1}\int_0^t ds\xi\en\esp(s)\mathbb{K}^{\ast(n+1)}(t-s).
\end{equation}
When we write $X\espp\en$, we refer to the expression (\ref{X_n}) if 
$n\geq 2$ and to (\ref{prima_iterata_di_X}) if $n=1$. As for $Y\espp\en$ and $\dot{Y}\espp\en$, expressions (\ref{eqn for Y_n}) and (\ref{eqn for dotY_n}) coincide with (\ref{eqn:Y_1}) and (\ref{Y_punto_una_iterata}) respectively, when $n=1$. So   $Y\espp\en$ and $\dot{Y}\espp\en$ are defined by (\ref{eqn for Y_n}) and (\ref{eqn for dotY_n}) $\forall n\geq 1$.\\

To prove convergence of $X\espp$ to $\xi\esp$ we prove convergence of the $n$-th iterates.  More precisely,  we prove that $\forall n\geq 1$,  $X\en\espp$ converges to $\xi\en\esp$ (in the sense of Theorem \ref{main_result} below) when $ \gamma\in\left(0,\frac{n}{n+1}\right)$. 

The reason why we consider successive iterates of the equation for $X\espp$ (and hence for $\xi\esp$) is to gain integrability and some sort of regularity. Notice indeed that $\int_0^t db(s)\rho_{t-s}\esp(0)$ is not well defined for $\gamma\geq 1/2$, whereas $\forall n\geq 1 $
\begin{equation}\label{well pos of stoch int}
\int_0^t db(s)\K_{\gamma}^{*(n+1)}(t-s)\quad\mbox{ is well defined for }\gamma\in
\left(0,\frac{2n+1}{2(n+1)}\right).\
\end{equation}
$\forall n\geq 1 $, we further restrict the range of $\gamma$ to $\gamma \in \left(0,\frac{n}{n+1}\right)$ in view of (\ref{def of K}) (see Remark \ref{cosa dimostri} and (\ref{stima Psi-n}), as well). 
\begin{teorema}[id est, second version of Theorem \ref{version 1}]\label{main_result}
With the notation introduced above, we have that 
$\forall
 \gamma\in\left(0,\frac{n}{n+1}\right)$ and 
 $\forall N\in[1,\infty)$, $\exists \tau=\tau(N,\gamma)>0$ s.t. 
\begin{equation}\label{risultato_anomalo_sub}
\lim_{\lambda\rightarrow0}E\sup_{t\leq\tau\mid\log\lambda\mid^{\frac{1}{(n+1)(1-\gamma)}}}\mid X\espp\en(t)-\xi\esp\en(t)\mid^N=0.
\end{equation}
\end{teorema}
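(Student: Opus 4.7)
The plan is to compare $X\en\espp(t)$ and $\xi\en\esp(t)$ term by term using the decompositions (\ref{X_n}) and (\ref{xi_n}). Writing
$$
X\en\espp(t)-\xi\en\esp(t)=\Bigl\{b(t)+F_0\espp(t)+\sum_{k=1}^n\!\int_0^t\!\!F_k\espp(s)\,ds-A\en\esp(t)\Bigr\}+\Bigl\{Y\en\espp(t)-(-1)^{n+1}\!\!\int_0^t\!\!\xi\en\esp(s)\K_\gamma^{*(n+1)}(t-s)\,ds\Bigr\},
$$
I would handle the two brackets separately. The first bracket is treated by pairing $F_k\espp$ with the corresponding term $(-1)^{k+1}(\K_\gamma^{*k}\ast b)$ in $A\en\esp$. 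Since $P_{t,s}\espp$ and $K_{t,s}\espp$ are averages of the kernel $\rho_{t-s}\esp$ against the concentrating profile $\varphi^{(\lambda)}$, the scaling (\ref{scaling_property}) combined with the Hölder-type bounds (\ref{bound 1})--(\ref{bound 2}) yields a quantitative estimate of the form
$$
\bigl|P_{t,s}\espp-\rho_{t-s}\esp(0)\bigr|\leq C\lambda^{\gamma\beta}(t-s)^{-\gamma(1+\beta)}\bigl(1+|X\espp(t)-X\espp(s)|^{\beta}\bigr),\qquad\beta\in(0,1],
$$
and an analogous bound for $K_{t,s}\espp-\int_s^t db(s')\rho_{t-s'}\esp(0)$. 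Iterating propagates these bounds to $P^{*(k)}_{t,s}-\K_\gamma^{*k}(t-s)$, and substituting into (\ref{F_0})--(\ref{F_n1}) bounds the first bracket in $L^N$-moment by $\lambda^\epsilon$ times a polynomial in $t$ (times an a priori moment bound on $X\espp$, which can be closed by a separate estimate).

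For the second bracket I would use the Volterra equation (\ref{eqn for dotY_n}) for $\dot Y\en\espp$. By (\ref{def of K}) and the previous step, the kernel $P^{*(n+1)}_{t,s}$ is, modulo a small error, the locally integrable non-negative power $k_{(n)}(\gamma)(t-s)^{n-(n+1)\gamma}$, precisely because $\gamma<n/(n+1)$. To close a Gronwall-type estimate in $N$-th moment I would apply Hölder's inequality inside the convolution,
$$
\Bigl|\int_0^t\!\! P^{*(n+1)}_{t,s}\dot Y\en\espp(s)\,ds\Bigr|^N\lesssim\Bigl(\int_0^t\!\! P^{*(n+1)}_{t,s}\,ds\Bigr)^{N-1}\!\int_0^t\!\! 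P^{*(n+1)}_{t,s}|\dot Y\en\espp(s)|^N\,ds,
$$
use $\int_0^t(t-s)^{n-(n+1)\gamma}\,ds\asymp t^{(n+1)(1-\gamma)}$, and apply the classical Bellman--Gronwall lemma to $t\mapsto E\sup_{s\leq t}|\dot Y\en\espp(s)|^N$. This produces a sub-exponential bound
$$
E\sup_{s\leq t}|\dot Y\en\espp(s)|^N\leq C(N,\gamma)\,E\sup_{s\leq t}\bigl|(F_n\espp+\cdots+F_{2n}\espp)(s)\bigr|^N\exp\!\bigl(C'(N,\gamma)\,t^{(n+1)(1-\gamma)}\bigr),
$$
and by the first step the forcing on the right is $O(\lambda^\epsilon)$. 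The identical Hölder--Gronwall argument applied to the integral equation (\ref{xi_n}) for $\xi\en\esp$ handles the matching term in the second bracket in the same fashion.

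Combining the two estimates, the $L^N$-moment of $X\en\espp(t)-\xi\en\esp(t)$ is bounded by $\lambda^\epsilon\cdot(\mathrm{poly}(t))\cdot\exp(C't^{(n+1)(1-\gamma)})$. Choosing $t\leq\tau|\log\lambda|^{1/((n+1)(1-\gamma))}$ with $\tau=\tau(N,\gamma)$ small enough balances the sub-exponential factor against the power of $\lambda$, and the right-hand side vanishes as $\lambda\to 0$, giving (\ref{risultato_anomalo_sub}). The main technical obstacle is the moment control on $\dot Y\en\espp$: one has to keep uniform control on the stochastic convolutions against the fractional kernels appearing in $F_n\espp,\ldots,F_{2n}\espp$ (via Burkholder--Davis--Gundy together with the moment estimates for stochastic integrals against fractional kernels collected in Section \ref{lemmata}) and propagate them through the Volterra equation without losing the $\lambda$-smallness. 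The restriction $\gamma<n/(n+1)$, together with the well-posedness condition (\ref{well pos of stoch int}), is precisely what makes $\K_\gamma^{*(n+1)}$ a non-negative power of $t$ and the driving stochastic integrals $L^2$-meaningful, ensuring that the Hölder--Gronwall scheme closes.
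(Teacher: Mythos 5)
Your global skeleton (Gronwall through the Volterra kernel $\K_{\gamma}^{\ast(n+1)}$, the factor $e^{CT^{(n+1)(1-\gamma)}}$, and the final choice $T\sim\tau\vert\log\lambda\vert^{\frac{1}{(n+1)(1-\gamma)}}$) coincides with the paper's, but two of the core steps as you propose them would fail. First, your treatment of the first bracket rests on ``an analogous bound for $K_{t,s}\espp-\int_s^t db(s')\rho_{t-s'}\esp(0)$''. The stochastic integral $\int db(s')\,\rho_{t-s'}\esp(0)$ is not even defined when $\gamma\geq 1/2$ (see (\ref{well pos of stoch int})), which is inside the theorem's range as soon as $n\geq 2$, and even for $\gamma<1/2$ producing such a bound would require exchanging the $dx$ and $db(s')$ integrations, which is not allowed; this is exactly the difficulty highlighted at the beginning of Section \ref{sec: proof of thm 3}. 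The paper's resolution is the $\delta$-regularization: split $K_{s,0}\espp$ into $K_{s,s-\delta}\espp$, $\Gamma_s\esppp$ and the deterministic discrepancies $\Psi\en^{(\delta,\gamma)}$, estimate these via Lemma \ref{lemma:stime2} and (\ref{stima Psi-n}), compare the time-integrated quantity $\int_0^t F_j\espp$ with the Lebesgue integral $\int_0^t ds\, b(s)\K_{\gamma}^{\ast(j+1)}(t-s)$ (well defined for all $\gamma\in(0,1)$, unlike the $db$-integral), and only at the end choose $\delta=\lambda^a$ with $a$ depending on whether $\gamma<1/2$ or $\gamma>1/2$. Relatedly, your uniform scheme ``pair $F_k\espp$ with its limit via the weighted bound on $\vert P_{t,s}\espp-\rho_{t-s}\esp(0)\vert$'' breaks down for $\gamma\in(1/2,1)$: the integrability requirement $\beta<1/\gamma-1$ in the weighted H\"older step is incompatible with retaining a positive power of $\lambda$ (the paper points this out after (\ref{prob integrabilita})), and one must instead estimate the composed second-iterate term (\ref{F2 primo addendo}), splitting the time integral at $t-\delta$ and using (\ref{eqn:Lemma 7-2}) together with (\ref{eqn:lemma 3-9}).

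Second, the second bracket cannot be closed by ``the identical H\"older--Gronwall argument applied to (\ref{xi_n})'': $\dot{Y}\en\espp$ and $\xi\en\esp$ solve Volterra equations for different unknowns with different forcings ($F_n\espp+\dots+F_{2n}\espp$ versus $A\en\esp$), so there is no difference equation to which Gronwall can be applied directly. The missing device is the auxiliary process $\eta\en\esp$ of (\ref{eta_n}), driven by the stochastic forcing $G\en\esp$ of (\ref{G_n}), together with Lemma \ref{lemma:equazioni_integrali}, which identifies $(-1)^{n+1}\left(\xi\en\esp\ast\K_{\gamma}^{\ast(n+1)}\right)(t)=\int_0^t ds\,\eta\en\esp(s)$. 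Only after this identification does $\dot{Y}\en\espp-\eta\en\esp$ satisfy a Volterra equation with the same kernel and a remainder $R\en\espp$ as in (\ref{eqn:R_n lambda}), whose smallness follows from the estimates above, so that Gronwall and the logarithmic time window give (\ref{risultato_anomalo_sub}). Your proposal captures the scaling heuristics but omits precisely these two mechanisms, which constitute the actual content of the proof.
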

$\|\cdot\|_p,\, p\geq 1$, indicates the usual $L^p(\mathbb{R}, dx)$ norm and $(\rho_{t}\esp f)(x)=\int dy \rho_t\esp(x-y)f(y)$ is a convolution in space. Now a few observations: $\forall t>0$ and $\forall n\geq 1$
\begin{equation}\label{varphi=varphi_n}
\varphi_t^{(\lambda)}=\frac{1}{\lambda}\varphi\left(x\lambda^{-1}-X\espp(t)\right)=
\frac{1}{\lambda}\varphi\left(x\lambda^{-1}-X\espp\en(t)\right),
\qquad \gamma\in(0,1);
\end{equation}
so actually the notation for $\varphi_t^{(\lambda)}$, as well as the one for $K_{t,s}\espp$ and $\Gamma_s\esppp$, the latter defined in (\ref{Gamma_s}), should explicitly show the ''dependence" on $n$, but we omit it. This also explains why in some estimates (for example (\ref{eqn:lemma3-3})), $n$ appears on the right hand side but not on the left hand side. 

$\forall p\geq1$ there exists a positive constant $C=C(p)$ s.t.
\begin{equation}\label{stima_di_varphi}
\|\varphi_t^{(\lambda)}\|_p\leq C\lambda^{\frac{1}{p}-1}\,\mbox{.}
\end{equation}
Moreover, $\forall t>0$, 
\begin{equation}\label{stima sul kernel}
\rho\esp(t,x)\leq B(\gamma)\rho\esp(t,0),
\end{equation}
where $B(\gamma)=1$ if $\rho\esp$ is either the 
subdiffusive kernel or ($\ref{Espressione_esplicita_per_rho_laplaciano}$), and it is a positive constant actually depending on $\gamma$ in the case of Riemann-superdiffusion.
\footnote{A more detailed account and helpful plots of the kernels (\ref{espressione_esplicita_per_rho}) can be found in \cite{citare_per_differenziazione_fratta_2} on page 1473; see also \cite{citare_per_differenziazione_fratta_3}. As for the kernel in (\ref{Espressione_esplicita_per_rho_laplaciano}), we recall that both $\rho_t\esp$ and its first derivative in space belong to $L^1(\R)\cap L^{\infty}(\R)$, $\forall t>0$ and we refer to \cite{Taqqu}.\label{foot 1}}
(\ref{stima sul kernel}) implies that
\begin{equation}\label{P_stimato_con_p_subdiff}
P_{t,s}\espp \leq B(\gamma)\rho_{t-s}\esp(0),\qquad \forall\, 0<s<t,
\end{equation}
and 
\begin{equation}\label{stima utile}
\langle \varphi\espl,\rho\esp_t\varphi\espl\rangle\leq B(\gamma)\rho\esp_t(0), \qquad\forall t>0.
\end{equation}
From (\ref{P_stimato_con_p_subdiff}), we also have
\begin{equation}\label{P < K}
P_{t,s}^{\ast(n)}\leq C\mathbb{K}_{\gamma}^{\ast(n)}(t-s),
\end{equation}
where $C>0$ is a generic constant depending on $\gamma$.

\section{ Technical Lemmata}\label{lemmata}
Throughout the following Lemma we will make extensive use of the Volterra convolution introduced in Section \ref{sec:notation}. Notice that this convolution is commutative and that it enjoys the property 
\begin{equation}\label{prop of Volt convol}
\left[\left(\int_0^{\cdot} du f(u)\right)\ast g\right](t)=\int_0^t du (f\ast g)(u),
\end{equation} 
which easily follows after a change of variable. Indeed
\begin{align*}
&\left[\left(\int_0^{\cdot} du f(u)\right)\ast g\right](t)=\int_0^t 
\!\!\!ds \left(\int_0^{t-s}
\!\!\!\!\!\!du f(u)\right) g(s)\\
=&\int_0^t \!\!\!ds g(s)\int_s^t
\!\!\!dv f(v-s)=\int_0^t\!\!\!dv\int_0^v\!\!\!ds f(v-s)g(s)\\
=&\int_0^t dv (f\ast g)(v).
\end{align*}
\begin{lemma}\label{lemma:equazioni_integrali}
For $n\in\mathbb{N},\,n\geq1,$  consider the integral equation
\begin{equation}\label{IE}
h\en(t)-(\mathbb{K}_{\gamma}^{\ast(n+1)}\ast h\en)(t)=g(t),\qquad g\in L^2([0,t]),\,
\gamma\in\left(0,\frac{n}{n+1}\right).
\end{equation}
Call $\xin\en\esp(t)$ the solution 
to (\ref{IE}) when the forcing $g(t)$ is taken to be equal to  $\A\en\esp(t)\in L^2([0,t])$ and $\etan\en\esp(t)$ the solution to the same equation with a different forcing, say $\G\esp\en (t)$. Namely:
\begin{equation}\label{csi_n}
\xin\en\esp(t)+ (-1)^n\int_0^t ds \xin \en\esp(s)k\en(\gamma)(t-s)^{n-(n+1)\gamma}=\A\esp\en(t)
\end{equation}
and 
\begin{equation}\label{eta-n-iterata}
\etan\en\esp(t)+ (-1)^n\int_0^t ds \etan\en\esp(s)k\en(\gamma)(t-s)^{n-(n+1)\gamma}=\G\esp\en(t).
\end{equation}
If the two forcings $\A\esp\en(t)$ and $\G\esp\en(t)$ are related through
\begin{equation}\label{rel tra A_n and G_n}
(\A\esp\en\ast\mathbb{K}_{\gamma}^{\ast(n+1)})(t)=(-1)^{n+1}\int_0^t ds \G\esp\en(s),
\end{equation}
then the same relation holds true between the corresponding solutions, i.e.:
\begin{equation}\label{rel tra csi_n e eta_n}
(\xin\esp\en\ast\mathbb{K}_{\gamma}^{\ast(n+1)})(t)=(-1)^{n+1}\int_0^t ds \etan\esp\en(s).
\end{equation}
\end{lemma}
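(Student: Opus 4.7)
The plan is to show that both sides of (\ref{rel tra csi_n e eta_n}) satisfy the \emph{same} Volterra integral equation of the form (\ref{IE}) and then invoke uniqueness. The equations (\ref{csi_n}) and (\ref{eta-n-iterata}) share the same linear part (Volterra convolution with $\K_{\gamma}^{\ast(n+1)}$ carrying the sign $(-1)^n$), and the hypothesis (\ref{rel tra A_n and G_n}) links one forcing to an integral of the other. This suggests applying one linear operation to the $\xin\en\esp$-equation, and a compatible one to the $\etan\en\esp$-equation, so that both outputs land on a common Volterra problem.

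First, I convolve (\ref{csi_n}) with $\K_{\gamma}^{\ast(n+1)}$. Using commutativity and associativity of the Volterra convolution, and setting $U(t):=(\xin\en\esp\ast\K_{\gamma}^{\ast(n+1)})(t)$, this yields
$$U(t)+(-1)^n\bigl(U\ast\K_{\gamma}^{\ast(n+1)}\bigr)(t)=\bigl(\A\en\esp\ast\K_{\gamma}^{\ast(n+1)}\bigr)(t)=(-1)^{n+1}\!\int_0^t\!\!ds\,\G\en\esp(s),$$
where the last equality is precisely the hypothesis (\ref{rel tra A_n and G_n}). Second, I integrate (\ref{eta-n-iterata}) from $0$ to $t$ and apply (\ref{prop of Volt convol}) to pull the outer integration past the Volterra convolution; then I multiply through by $(-1)^{n+1}$. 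With $V(t):=(-1)^{n+1}\int_0^t ds\,\etan\en\esp(s)$, this produces
$$V(t)+(-1)^n\bigl(V\ast\K_{\gamma}^{\ast(n+1)}\bigr)(t)=(-1)^{n+1}\!\int_0^t\!\!ds\,\G\en\esp(s).$$
Hence $U$ and $V$ solve the same Volterra equation with identical forcing, so by the uniqueness statement for (\ref{IE}) recalled at the start of this section, $U\equiv V$, which is exactly (\ref{rel tra csi_n e eta_n}).

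The proof is essentially algebraic, so no new analytic estimate on the kernel $\K_{\gamma}^{\ast(n+1)}$ is required; the only delicate point is the careful bookkeeping of the $(-1)^n$ and $(-1)^{n+1}$ factors when combining the two steps, together with the correct use of (\ref{prop of Volt convol}) to interchange the outer integral and the Volterra convolution. I expect no serious obstacle beyond this sign-tracking.
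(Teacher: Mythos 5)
Your proof is correct, and it takes a genuinely different route from the paper's. The paper uses the explicit representation formula $h=g-H\ast g$ with the resolvent kernel $H=-\sum_{\nu\geq 0}\mathcal{K}^{\ast(\nu+1)}$, writes both $\xin\en\esp$ and $\etan\en\esp$ via that formula, and then reduces the target identity to a single convolution manipulation involving $H$, $\mathcal{A}\en\esp$ and $\mathcal{G}\en\esp$. You instead bypass $H$ entirely: convolving (\ref{csi_n}) with $\K_{\gamma}^{\ast(n+1)}$ and integrating (\ref{eta-n-iterata}) both land the two candidate quantities $U=\xin\en\esp\ast\K_{\gamma}^{\ast(n+1)}$ and $V=(-1)^{n+1}\int_0^{\cdot}\etan\en\esp$ in a common Volterra equation with identical forcing, and uniqueness (which is what (\ref{representation_formula}) ultimately guarantees) finishes the job. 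The sign arithmetic checks out: starting from $W(t)+(-1)^n(W\ast\K_{\gamma}^{\ast(n+1)})(t)=\int_0^t\G\en\esp$ with $W=\int_0^{\cdot}\etan\en\esp$, multiplying by $(-1)^{n+1}$ and setting $V=(-1)^{n+1}W$ reproduces exactly the equation satisfied by $U$. Your version avoids any explicit handling of the resolvent and so is slightly cleaner and arguably more transparent; the paper's version is more computational but equally elementary. Both hinge on the same two tools — commutativity/associativity of the Volterra convolution and property (\ref{prop of Volt convol}) for interchanging the outer integral with the convolution — and on the uniqueness theory for Volterra equations with bounded (iterated) kernel in the stated range of $\gamma$.
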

The proof of this lemma is an immediate consequence of some basic facts in the theory of Volterra integral equations, which we recall here. For more details on this theory we refer the reader to \cite{Tricomi}.  For some $T>0$, let  $g(t),\mathcal{K}(t) \in L^2([0,T])$. Then the solution $h(t)$ to the equation
$$
h(t)-  \int_0^t ds \mathcal{K}(t-s) h(s)=g(s)
$$
exists and is unique and can be expressed as
\begin{equation}\label{representation_formula}
h(t)=g(t)-\int_0^t ds H(t-s) g(s),
\end{equation}
where
$$
H(t-s)=-\sum_{\nu=0}^{\infty} \mathcal{K}^{\ast(\nu + 1)}(t-s). 
$$
When the kernel $\mathcal{K}(t)$ is not in $L^2$, the solution to (\ref{representation_formula}) still exist and is unique provided that for some  $n\in\mathbb{N}$  the iterated kernel $\mathcal{K}^{\ast(n)}$ is bounded on $[0,T]$. The proof of this fact can be found in \cite{Tricomi}, Section $1\cdot12$, where kernels of the form $\mathcal{K}(t)=t^{\alpha}$, with $\alpha\in(0,1)$ are considered. 
\begin{proof}[Proof of Lemma \ref{lemma:equazioni_integrali}] 
For $\gamma \in \left(0,n/n+1\right)$, the kernel of equations (\ref{csi_n}) and  (\ref{eta-n-iterata}) is a bounded continuous function, so the standard theory for kernels in $L^2$ applies.   
Thanks to (\ref{representation_formula}), together with  (\ref{csi_n}), (\ref{eta-n-iterata}) and (\ref{rel tra A_n and G_n}), proving (\ref{rel tra csi_n e eta_n}) boils down to proving 
$$
(-1)^{n+1}\int_0^t ds \left(H\ast \G\esp\en\right)(s)=\left(\mathbb{K}_{\gamma}^{\ast(n+1)}\ast H\ast \A\esp\en\right)(t).
$$
Such an equality holds true because, by the commutativity of the Volterra convolution,  the right hand side is equal to
\begin{align*}
\left[H\ast \left(\mathbb{K}_{\gamma}^{\ast(n+1)}\ast \A\esp\en\right)\right](t)&=\int_0^t H(t-s)
\left(\mathbb{K}_{\gamma}^{\ast(n+1)}\ast \A\esp\en\right)(s)\\
&=(-1)^{n+1}\int_0^t ds H(t-s)\int_0^s \G\esp\en(s')ds';
\end{align*}
now the conclusion follows from property (\ref{prop of Volt convol}).
\end{proof}
In the following lemma and throughout the paper we will be using the notation $\mathcal{F}\{f(x)\}(k)=\hat{f}(k)$ and $\mathcal{L}\{g(t)\}(\mu)=g^{\#}(\mu)$ for the Fourier and the 
Laplace transform \-respectively and we will superscript $\tilde{}$ for the Fourier-Laplace transform.
\begin{lemma}\label{lemma:Principio_di_Duhamel}
For $0<\gamma<\frac{1}{2}$, let $v^{\gamma}(t,x)$ be a solution to
\begin{equation}\label{equaz_subdiffusiva}
\left\{
\begin{array}{ll}
\displaystyle
\partial_t v^{\gamma}(t,x)=\frac{1}{\Gamma(2\gamma)}\frac{d}{dt} \int_0^t ds\,\frac{\Delta_x v^{\gamma}(s,x)}{(t-s)^{1-2\gamma}}
+ F(t,x)
& \left(0,\infty\right)\times \R\\
\\
 v^{\gamma}(0,x)=v^{\gamma}_0(x)& \left\{0\right\}\times \R\nonumber\\
\end{array}
\right.
\end{equation}
and, for $\frac{1}{2}<\gamma<1$, let it be a solution to
\begin{equation}\label{equaz_superdiffusiva}
\left\{
\begin{array}{ll}
\displaystyle
\partial_t v^{\gamma}(t,x)=\frac{1}{\Gamma(2\gamma-1)} \int_0^t ds\,\frac{\Delta_x v^{\gamma}(s,x)}{(t-s)^{2-2\gamma}}
+ F(t,x)
&  \left(0,\infty\right)\times \R\\
\\
 v^{\gamma}(0,x)=v^{\gamma}_0(x),& \left\{0\right\}\times \R\nonumber\\
\end{array}
\right.
\end{equation}
where 
$v^{\gamma}_0(x)\in \mathcal{C}(\R)$, $F(t,x)\in\mathcal{C}(\R_+\times\R)$.  Then
\begin{equation*}
v^{\gamma}(t,x)=\int_\R dy \rho^{\gamma}(t,x-y) v^{\gamma}_0(y) +\int_0^t\! ds\int_\R dy \rho^{\gamma}(t-s,x-y)F(s,y),
\end{equation*}
with $\rho^{\gamma}(t,x)$ the kernel defined in  (\ref{espressione_esplicita_per_rho}).
\end{lemma}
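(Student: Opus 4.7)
The natural strategy is to use the Fourier--Laplace transform, which linearises the fractional convolution in time and the Laplacian in space into multiplication operators, and which is well suited to Riemann--Liouville operators because of the simple rule
$$
\mathcal{L}\{I_t^{\alpha} f\}(\mu)=\mu^{-(2\alpha-1)} f^{\#}(\mu),\qquad
\mathcal{L}\{D_t^{\alpha} f\}(\mu)=\mu^{1-2\alpha} f^{\#}(\mu),
$$
the latter identity holding because $D_t^{\alpha}=\tfrac{d}{dt}\circ I_t^{\alpha+1/2}$ (for the exponents in play here) and the boundary term at $t=0$ vanishes. Thus in both the subdiffusive and the superdiffusive case, writing $\tilde v(\mu,k)$ for the joint Fourier--Laplace transform in $(t,x)$ of $v\esp(t,x)$, the equation reduces to
$$
\mu \tilde v-\hat v_0\esp =-k^2\,\mu^{1-2\gamma}\tilde v+\tilde F,
$$
so that
$$
\tilde v(\mu,k)=\frac{1}{\mu^{1-2\gamma}(\mu^{2\gamma}+k^2)}\bigl(\hat v_0\esp(k)+\tilde F(\mu,k)\bigr).
$$

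The second step is to identify the kernel $\tfrac{1}{\mu^{1-2\gamma}(\mu^{2\gamma}+k^2)}$ as the Fourier--Laplace transform of $\rho\esp(t,x)$. Starting from the Bromwich representation (\ref{espressione_esplicita_per_rho}), Laplace inversion yields $\rho^{\gamma\,\#}(\mu,x)=\tfrac{1}{2}\mu^{\gamma-1}e^{-|x|\mu^{\gamma}}$, and taking the Fourier transform of the two-sided exponential in $x$ gives
$$
\tilde\rho\esp(\mu,k)=\frac{\mu^{2\gamma-1}}{\mu^{2\gamma}+k^2}.
$$
Substituting back,
$$
\tilde v(\mu,k)=\tilde\rho\esp(\mu,k)\bigl(\hat v_0\esp(k)+\tilde F(\mu,k)\bigr),
$$
and inverting the Fourier transform in $x$ turns products into spatial convolutions while inverting the Laplace transform in $t$ turns products of functions of $\mu$ into Volterra convolutions in $t$; this yields exactly the claimed Duhamel formula.

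Because the paper only needs existence of a representation (uniqueness of solutions is covered separately in Appendix A), the cleanest presentation is to take the formula on the right hand side as a candidate and verify it solves the PDE. To do so, I would first check that $(t,x)\mapsto \int_{\R}\rho\esp(t,x-y)v_0\esp(y)\,dy$ solves the homogeneous equation with initial datum $v_0\esp$, which is immediate from (\ref{eqn_soddisf_da_rho_sub_frazionaria})--(\ref{eqn_soddisf_da_rho_super_frazionaria}) combined with differentiation under the integral sign and the continuity of $v_0\esp$; second, observe that the forcing term is obtained by time-translating the same fundamental solution and integrating over the source time $s$. A direct computation, based on the identity $\partial_t(\rho\esp_t\ast_x f)=D_t\esp(\Delta\rho\esp_t\ast_x f)$ or $I_t\esp(\Delta\rho\esp_t\ast_x f)$ (depending on the regime), together with the jump produced by differentiating the upper limit $t$, produces the source $F(t,x)$ exactly.

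The main obstacle is the formal manipulation in the fractional case: in the subdiffusive regime the operator on the right is $\tfrac{d}{dt}I_t^{2\gamma}$, so when one differentiates the Duhamel ansatz in $t$ one must be careful to exchange $\tfrac{d}{dt}$ with the $s$-integral and to track the singular behaviour of $(t-s)^{-(1-2\gamma)}$ near $s=t$. The standard workaround is to regularise $\rho\esp$ (e.g. truncate the Bromwich contour, or smooth $F$) and pass to the limit using the finite-moment estimates on $M(z,\gamma)$ recalled just after (\ref{sviluppo_in_serie_della_rho}) together with the $L^1\cap L^{\infty}$ bounds available for (\ref{Espressione_esplicita_per_rho_laplaciano}); no additional idea is required.
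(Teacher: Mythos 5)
Your proof is correct and uses essentially the same strategy as the paper: take the Fourier--Laplace transform, use the rules $\mathcal{L}\{D_t^\gamma f\}=\mu^{1-2\gamma}f^{\#}$ (resp.\ $\mathcal{L}\{I_t^\gamma f\}=\mu^{1-2\gamma}f^{\#}$), solve the resulting algebraic equation for $\tilde v$, and identify the multiplier $1/(\mu+k^2\mu^{1-2\gamma})=\mu^{2\gamma-1}/(\mu^{2\gamma}+k^2)$ as $\tilde\rho^{\gamma}(\mu,k)$ computed from the Bromwich representation. The alternative direct-verification route and the regularisation remarks you sketch at the end go beyond what the paper records, but the core argument is identical to the paper's.
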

\begin{proof} Let us observe that Duhamel principle for the heath equation (i.e. the parabolic equation associated with the Laplacian) can be expressed as follows: if $u(t,x)$ is a classical  
solution to            
\begin{equation}\label{equaz_calore}
\left\{
\begin{array}{ll}
\partial_t u(t,x)=\frac{1}{2}\Delta_x u(t,x)+ F(t,x)&  \left(0,\infty\right)\times \R, \qquad F\in\mathcal{C}(\R_+\times\R),\\
u(0,x)=u_0(x)& \left\{0\right\}\times \R,\;\;\;\;\,\, \qquad u_0\in \mathcal{C}(\R),\nonumber\\
\end{array}
\right.
\end{equation}
then its Fourier-Laplace transform satisfies
\begin{equation}\label{four-laplace_soluz_calore}
\tilde{u}(\mu,k)= \frac{\hat{u}(0,k)+\tilde{F}(\mu,k)}{\mu+\frac{1}{2}k^2},
\end{equation}
where $(\mu+k^2/2)^{-1}$ is the Fourier-Laplace transform of the fundamental solution of the diffusion equation, i.e. of the heat kernel.\\
Now let us recall that the Fourier-Laplace transform of $\rho^{\gamma}(t,x)$ is given by ($\ref{trasf_di_Four-Laplace_di_rho}$) (in ($\ref{trasf_di_Four-Laplace_di_rho}$) take $c_1=1$, see Appendix \ref{app:motivation}); \nopagebreak[0]
also, $ \mu^{1-2\gamma} \, \tilde{v}\esp(\mu,k)$ is the Laplace transform of $D_t^{\gamma}(\hat{v}\esp(\cdot,k))$ when  $0<\gamma<\frac{1}{2}$, whereas for $\frac{1}{2}<\gamma<1$ it is the Laplace transform of  
$I_t^{\gamma}(\hat{v}^{\gamma}(\cdot,k))$ (see Appendix \ref{app:motivation}). 
Hence
$$
\mathcal{L}(\partial_t \hat{v}\esp(t,k))=\int_0^{\infty} dt\, e^{-\mu t }\partial_t \hat{v}(t,k)
$$
$$
=-\hat{v}(0,k)+\mu \,\tilde{v}\esp(\mu,k)=- c_1 k^2 \mu^{1-2\gamma} \tilde{v}(\mu,k)+\tilde{F}(\mu,k)
$$
$$
\Rightarrow \tilde{v}(\mu,k)=\frac{\hat{v}(0,k)+\tilde{F}(\mu,k) }{\mu+c_1 k^2 \mu^{1-2\gamma} },
$$
which is precisely what we where looking for (see (\ref{trasf_di_Four-Laplace_di_rho}) and ($\ref{four-laplace_soluz_calore}$)).
\end{proof}
\begin{lemma}\label{lemma_tecnico}
 $\forall N\geq 1$ and $0<\gamma<1$, let $p$, $q$ and $r$ be real numbers greater than 1 s.t. $p^{-1}+q^{-1}=1$, $q>\max\left\{N,r\right\}$ and $r^{-1}-q^{-1}<(2\gamma)^{-1}$. 
Let $v(\cdot)$ be an $\mathcal{F}_s$-adapted process in $\mathcal{C}(\R_+,L^r(\R))$ and $\psi$ a random variable in $L^{p}(\R)$. Then there exists a positive constant 
$C=C(q,r,\gamma)$ such that 
$$
\left( E\left\vert\langle\psi , \int_{t_1}^{t_2} \!\!\!db(s) \rho^{\gamma}_{t_2-s} v(s) \rangle \right\vert^N\right)^{\frac{1}{N}} \!\!\!\leq C (t_2-t_1)^{\nu} \left(E\|\psi\|_p^{\beta} \right)^{\frac{1}{\beta}} \left( E\sup_{t_1\leq s\leq t_2} \|v(s)\|_r^q \right)^{\frac{1}{q}}\!,
$$
for any $t_1\leq t_2$, where $\beta=\frac{N q}{q-N}$ and $\nu=\frac{1}{2}-\gamma\left(\frac{1}{r}-\frac{1}{q}\right)$.
\end{lemma}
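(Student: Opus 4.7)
My plan is to isolate $\psi$ from the stochastic integral via two applications of H\"older's inequality, and then control what remains by combining Burkholder-Davis-Gundy with Young's convolution inequality and the kernel scaling (\ref{scaling_property}).

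First I set $\phi(x):=\int_{t_1}^{t_2}db(s)\,(\rho^{\gamma}_{t_2-s}v(s))(x)$ and apply H\"older in the spatial variable to obtain $|\langle\psi,\phi\rangle|\le\|\psi\|_p\|\phi\|_q$. A second H\"older in $\omega$ with the conjugate pair $(\beta/N,\,q/N)$ -- which is conjugate precisely because $\beta=Nq/(q-N)$ forces $N/\beta+N/q=1$ -- then gives
\begin{equation*}
\bigl(E|\langle\psi,\phi\rangle|^N\bigr)^{1/N}\le (E\|\psi\|_p^{\beta})^{1/\beta}\,(E\|\phi\|_q^q)^{1/q},
\end{equation*}
so the task reduces to proving $(E\|\phi\|_q^q)^{1/q}\le C(t_2-t_1)^{\nu}(E\sup_s\|v(s)\|_r^q)^{1/q}$.

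For each fixed $x\in\R$ the process $\phi(x)$ is a stochastic integral with quadratic variation $\int_{t_1}^{t_2}|(\rho^{\gamma}_{t_2-s}v(s))(x)|^2\,ds$, so BDG yields
\begin{equation*}
E|\phi(x)|^q\le C_q\,E\Bigl(\int_{t_1}^{t_2}|(\rho^{\gamma}_{t_2-s}v(s))(x)|^2\,ds\Bigr)^{q/2}.
\end{equation*}
Integrating in $x$ and invoking Minkowski's integral inequality to swap the $L^q_x$- and $L^2_s$-norms bounds this by $C_q\,E\bigl(\int_{t_1}^{t_2}\|\rho^{\gamma}_{t_2-s}\ast v(s)\|_q^2\,ds\bigr)^{q/2}$. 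Young's convolution inequality with exponents $(m,r,q)$ satisfying $1/m=1+1/q-1/r$ then gives $\|\rho^{\gamma}_{t_2-s}\ast v(s)\|_q\le\|\rho^{\gamma}_{t_2-s}\|_m\|v(s)\|_r$, and the scaling identity (\ref{scaling_property}) converts this to $\|\rho^{\gamma}_1\|_m(t_2-s)^{-\gamma(1/r-1/q)}\|v(s)\|_r$, where $\|\rho^{\gamma}_1\|_m<\infty$ for every $m\in(1,\infty)$ by the super-exponential decay of the Mainardi function for the Riemann kernels and by footnote \ref{foot 1} for the L\'evy kernel.

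Finally, pulling $\sup_{t_1\le s\le t_2}\|v(s)\|_r$ outside the integral reduces the problem to evaluating $\int_{t_1}^{t_2}(t_2-s)^{-2\gamma(1/r-1/q)}\,ds$, which is finite exactly under the hypothesis $r^{-1}-q^{-1}<(2\gamma)^{-1}$ and equals a constant multiple of $(t_2-t_1)^{1-2\gamma(1/r-1/q)}$. Raising to the $q/2$ and then the $1/q$ power produces the factor $(t_2-t_1)^{\nu}$ with $\nu=1/2-\gamma(1/r-1/q)$, completing the estimate. The one technical point that needs care is that the Minkowski swap is straightforward when $q\ge 2$; in the applications of this lemma later in the paper the exponent $q$ will in fact be chosen large (the remaining freedom $q>\max\{N,r\}$ has to be exploited precisely to keep $\nu$ positive), so this is not a real obstacle -- all other steps are bookkeeping of H\"older exponents and kernel scalings.
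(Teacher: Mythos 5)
Your reconstruction is correct and is the same argument that the paper inherits from Lemma 3.1 of \cite{articolo}: the double H\"older split into $\|\psi\|_p$ and the $L^q$-norm of the stochastic integral, pointwise Burkholder, Minkowski to bring the $L^q_x$-norm inside the $ds$-integral, Young's convolution with exponent $m=r'$ satisfying $1/r'+1/r=1+1/q$, and the kernel scaling -- indeed the paper's displayed computation of $\int_{t_1}^{t_2}ds\,\|\rho^{\gamma}_{t_2-s}\|_{r'}^2$ is precisely your Young/scaling step. One small correction to your final parenthetical: taking $q$ \emph{larger} makes $1/r-1/q$ larger and hence $\nu=\tfrac12-\gamma(1/r-1/q)$ \emph{smaller}, so positivity of $\nu$ (equivalently the integrability condition) is not protected by pushing $q$ up; rather, in the applications $r$ is also tuned (e.g.\ $1/r-1/q=\tfrac{1}{2\gamma}-\zeta$ with $1/p+1/r-1=1/q$) so that $q$ can simultaneously be taken $\geq 2$, which is what justifies the Minkowski swap you flag.
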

\begin{proof}\textit{(sketch)} 
The proof is identical to the proof of Lemma 3.1 in \cite{articolo}, so we will not repeat it. The additional condition $r^{-1}-q^{-1}<(2\gamma)^{-1}$ is an integrability condition and comes from the fact that 
\begin{equation}\label{integrability condition}
\int_{t_1}^{t_2} ds \|\rho_{t_2-s}^{\gamma} \|_{r'}^2<\infty\Longleftrightarrow \frac{1}{r}-\frac{1}{q}<\frac{1}{2\gamma} \mbox{,}
\end{equation}
where $r'$ is such that $\frac{1}{r'}+\frac{1}{r}=1+\frac{1}{q}$ (see page 12 in \cite{articolo}). (\ref{integrability condition}) follows from the scaling property (\ref{scaling_property}) in the following way:
\begin{align*}
&\int_{t_1}^{t_2} ds \|\rho_{t_2-s}^{\gamma} \|_{r'}^2=
\int_{t_1}^{t_2}\frac{ds}{(t_2-s)^{2\gamma}}\left( \int_{\R}dy \rho_1^{r'}(y)(t_2-s)^{\gamma}
\right)^{\frac{2}{r'}}\\
&=C\int_{t_1}^{t_2}ds (t_2-s)^{2\gamma(\frac{1}{r'}-1)}<\infty\,\Longleftrightarrow
\,2\gamma(\frac{1}{r'}-1)>-1\\
&\qquad\qquad\qquad\qquad\qquad\qquad\qquad\,\,\,
\Longleftrightarrow \,\frac{1}{r}-\frac{1}{q}<\frac{1}{2\gamma}. 
\end{align*}
\end{proof}
\begin{remark}\label{rem:integr condition}
The extra condition $\frac{1}{r}-\frac{1}{q}<\frac{1}{2\gamma}$ is automatically satisfied when $\gamma\in(0,1/2]$. It is non empty only when $\gamma\in(1/2,1)$.
\end{remark}
In the remainder of this section and in the proof of Theorem \ref{main_result} we will very often make use of the following simple observation (sometimes without saying it explicitly).
\begin{note}\label{note:scambi}
Let $(\Omega,\mu),(\Omega', \mu')$ two (finite dimensional) measure spaces, $f:\Omega\times\Omega'\rightarrow \R$ a positive function and $m$ a real number greater or equal to 1. Suppose 
\newcommand{\io}{\int_{\Omega}d\mu(x)}
\newcommand{\iop}{\int_{\Omega'}d\mu'(y)}
$$
F(y):=\int_{\Omega}d\mu(x)f(x,y)<\infty \quad\mbox{ for a.e. } y\in\Omega \quad\mbox{ and} $$
$$
\iop \left(\io  f(x,y)\right)^m<\infty.
$$
Then
\begin{align*}
&\iop \left(\io  f(x,y)\right)^m\\
 = & \iop \left[F(y)^{m-1}\io  f(x,y)\right]\\
=&\io\iop F(y)^{m-1} f(x,y) \\ 
\leq & \left(\iop F(y)^m\right)^{\frac{m-1}{m}}\io\left(\iop f(x,y)^m\right)^{\frac{1}{m}},
\end{align*}
having applied H\"older's inequality with $m/(m-1)$ and $m$.  
Looking at the first and the last line of the above equations and dividing both sides by $\left[\int_{\Omega'} \left(\int_{\Omega} f\right)^m\right]^{\frac{m-1}{m}}$ we obtain
\begin{equation}\label{scambio generico}
\iop\left(\io f(x,y)\right)^m \leq 
\left(\io\left( \iop f(x,y)^m\right)^{\frac{1}{m}}\right)^m.
\end{equation}
When $(\Omega, \mu), (\Omega', \mu')$ are just $\R$ equipped  with the Lebesgue measure, the above inequality reads
$$
\int dy\left(\int dx  f(x,y)\right)^m\leq 
\left(\int dx \left(\int dy f(x,y)^m  \right)^{\frac{1}{m}}\right)^m.
$$  
If instead $(\Omega', \mu')$ is a probability space and $(\Omega, \mu)$ is the time interval $[0,T]$ with the Lebesgue measure, inequality (\ref{scambio generico}) implies that $\forall t \in[0,T]$ and $N\geq 1$, we have
\begin{equation}\label{noteE}
E \sup_{t\in[0,T]}\left\vert \int_0^t ds f(s)\right\vert^N\leq E \left\vert\int_0^T
ds \vert f(s)\vert\right\vert^N \leq T^N \sup_{s\in[0,T]}E \vert f(s)\vert^N.
\end{equation} 
\end{note}
In the remainder of this Section,  $C$ is  a constant that does not depend on $\lambda$ or $\delta$, although it might depend on a positive power of $T$. Also, in the proofs we assume for simplicity $T\geq1$, even though all the results are true for any $T>0$, hence they are stated in such generality. Even if we assumed $T\geq 1$, this would not be restrictive in view of the fact that the result we are concerned with is a long time result, more specifically  $T\sim \mid\log\lambda\mid$ with $\lambda \rightarrow 0$.
The case $\gamma=1/2$ is not explicitly considered in Lemma \ref{lemma:stime1} and Lemma \ref{lemma:stime2}.
\begin{lemma}\label{lemma:stime1}
$\forall N\geq 1$, $ 0<\gamma< 1$ and  
$\zeta\in\left(0,\frac{1}{2\gamma}\right)$,  there exists  $C>0$ such that:
\begin{equation}\label{eqn:stima_su_K}
\sup_{0\leq s\leq t\leq T} \left(E \left\vert K_{t,s}\espp\right\vert^N \right)^{\frac{1}{N}}\leq C T^{\zeta\gamma}\lambda^{\frac{1}{2\gamma}-1-\zeta}, \qquad  T>0, \lambda\in(0,1).
\end{equation}
Also,  $\forall n\geq 1$, $N,\gamma, \zeta$ as above
\begin{equation}\label{stima_su_X_lambda_sub}
\left(E \sup_{t\in[0,T]}\left\vert X\espp\en(t) \right\vert^N\right)^{\frac{1}{N}}
\!\!\!\!\leq C \left(1\!+\lambda^{\frac{1}{2\gamma}-\zeta-1} \right)e^{CT^{(n+1)(1-\gamma)}},  
\end{equation}
$T>0, \lambda\in(0,1)$.
Moreover, for the displacement of the center we find: 
\begin{align}\label{stima_spostamento_centro_sub}
&\!\!\!\!\!\!\!\!\!\!\!\!\!\!\!\!\!\!\!\!\!\!\!\!
\left(\sup_{t\in[0,T],t+\tau\leq T}  E \sup_{t'\in[t,t+\tau]} \mid X\en\espp(t')-
X\en\espp(t)\mid^N \right)^{\frac{1}{N}}\nonumber\\
&\leq C  \left(\tau^{\frac{1}{2}} +\tau\lambda^{\frac{1}{2\gamma}-\zeta-1} \right)e^{CT^{(n+1)(1-\gamma)}},
\end{align}
$\tau,\lambda\in(0,1)$ and $T>0$.\\
\end{lemma}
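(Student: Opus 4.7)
The three estimates will be handled in the order stated, each building on the previous one.

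For (\ref{eqn:stima_su_K}), I apply the stochastic-integral bound of Lemma \ref{lemma_tecnico} directly to $K_{t,s}\espp=\langle\varphi_t\espl,\int_s^t db(s')\rho_{t-s'}\esp\varphi_{s'}\espl\rangle$, taking $\psi=\varphi_t\espl$ and $v(s')=\varphi_{s'}\espl$. The parameters $p,q,r$ are chosen so that $p^{-1}+q^{-1}=1$ and $r^{-1}-q^{-1}=(2\gamma)^{-1}-\zeta$; this is admissible since $\zeta<(2\gamma)^{-1}$, and by taking $q$ large enough we meet $q>\max\{N,r\}$. With this choice the exponent $\nu=\tfrac12-\gamma(r^{-1}-q^{-1})$ equals $\zeta\gamma$. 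Combining with (\ref{stima_di_varphi}) applied both to $\psi$ and to $v(s')$ gives $\lambda$-powers that add to $\lambda^{(1/p)-1+(1/r)-1}=\lambda^{1/(2\gamma)-1-\zeta}$, which is the claim.

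For (\ref{stima_su_X_lambda_sub}) I decompose $X\espp\en$ according to (\ref{X_n}) and (\ref{prima_iterata_di_X}): $X\espp\en=b+F_0\espp+\int_0^t(F_1\espp+\dots+F_n\espp)+Y\espp\en$. The term $b(t)=\lambda^{1/(2\gamma)}w(t\lambda^{-1/\gamma})$ is a standard Brownian motion, so $E\sup_{t\le T}|b(t)|^N\le CT^{N/2}$. For $F_0\espp$ Note \ref{note:scambi} combined with (\ref{eqn:stima_su_K}) yields a bound of order $T^{1+\zeta\gamma}\lambda^{1/(2\gamma)-1-\zeta}$. For $F_k\espp$, $k\ge 1$, representation (\ref{F_n1}) together with (\ref{P < K}) and (\ref{def of K}) gives
\begin{equation*}
|F_k\espp(s)|\le C\int_0^s|K_{s',0}\espp|\,(s-s')^{k-1-k\gamma}\,ds',
\end{equation*}
whose $L^N$-norm is again controlled via (\ref{eqn:stima_su_K}) and the integrability $\gamma<k/(k+1)\le n/(n+1)$. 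It remains to estimate $Y\espp\en$. I pass to $\dot Y\espp\en$ via (\ref{eqn for dotY_n}), take $N$-th moments, and apply (\ref{P < K}) to replace $P_{t,s}^{\ast(n+1)}$ by $Ck_{(n)}(\gamma)(t-s)^{n-(n+1)\gamma}$. The resulting inequality for $\phi(t):=(E|\dot Y\espp\en(t)|^N)^{1/N}$ is a Volterra inequality with weakly singular (or polynomial) kernel that is integrable precisely because $\gamma<n/(n+1)$, so iteration of this inequality (the Picard scheme from \cite{Tricomi}, §1.12) produces a resolvent bounded by $\exp(CT^{(n+1)(1-\gamma)})$ times the forcing, and the forcing $F_n\espp+\cdots+F_{2n}\espp$ contributes a factor $1+\lambda^{1/(2\gamma)-\zeta-1}$ by the previous step. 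Integrating in time and taking the supremum then yields (\ref{stima_su_X_lambda_sub}).

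For the displacement estimate (\ref{stima_spostamento_centro_sub}), I write $X\espp\en(t')-X\espp\en(t)$ as the sum of the same four pieces over $[t,t']$. The Brownian increment contributes $\tau^{1/2}$. The three deterministic-in-$t$ pieces, $F_0\espp(t')-F_0\espp(t)$, $\int_t^{t'}(F_1\espp+\dots+F_n\espp)$, and $Y\espp\en(t')-Y\espp\en(t)=\int_t^{t'}\dot Y\espp\en$, are each of the form $\int_t^{t'}ds(\cdot)$, so Note \ref{note:scambi} (the second half of (\ref{noteE})) bounds their $N$-th moments by $\tau$ times the sup of the integrand's $N$-th moment over $[0,T]$; plugging in the bounds obtained in the previous paragraph gives the $\tau\lambda^{1/(2\gamma)-\zeta-1}e^{CT^{(n+1)(1-\gamma)}}$ contribution.

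\textbf{Main obstacle.} The delicate point is the Gronwall-type iteration for $\dot Y\espp\en$. Unlike the classical Gronwall lemma, the kernel $(t-s)^{n-(n+1)\gamma}$ need not be monotone decreasing in $(t-s)$ (the exponent can be positive), so I cannot simply absorb it; instead I must track the iterated Volterra convolutions $(\ast(t-s)^{n-(n+1)\gamma})^{\ast m}$ and recognise, via the Beta-function identity already used to derive (\ref{def constk}), that the $m$-fold self-convolution contributes $\Gamma((n+1)(1-\gamma))^m/\Gamma(m(n+1)(1-\gamma)+1)$, so that summation in $m$ produces a Mittag–Leffler-type series bounded by $\exp(CT^{(n+1)(1-\gamma)})$. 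Getting this to interact cleanly with the $\sup_{t\in[0,T]}$ and the $E|\cdot|^N$ (via Note \ref{note:scambi} to interchange the expectation with the time integral) is the only nontrivial bookkeeping.
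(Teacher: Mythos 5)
Your proposal follows essentially the same structure as the paper's proof: you obtain (\ref{eqn:stima_su_K}) from Lemma \ref{lemma_tecnico} with the choice $r^{-1}-q^{-1}=(2\gamma)^{-1}-\zeta$ together with (\ref{stima_di_varphi}); you decompose $X\en\espp$ into $b$, $F_0\espp$, $\int_0^{\cdot}\bigl(F_1\espp+\dots+F_n\espp\bigr)$, and $Y\en\espp$ and estimate each piece via (\ref{eqn:stima_su_K}), (\ref{P < K}), (\ref{def of K}) and Note \ref{note:scambi}; and you obtain the displacement bound from incremental versions of these estimates. All of this is correct and matches the paper.

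Your ``main obstacle'' paragraph, however, flags a difficulty that is not actually there and leads you to over-engineer the Gronwall step. In the regime where the lemma is ever invoked, namely $\gamma<n/(n+1)$, the exponent $n-(n+1)\gamma$ is \emph{positive}, so the kernel $(t-s)^{n-(n+1)\gamma}$ is \emph{bounded} on $\{0\le s\le t\le T\}$ by $T^{n-(n+1)\gamma}$. The classical Gronwall lemma does not require the kernel to be decreasing in $t-s$; it requires only integrability, and boundedness gives that for free. The paper simply replaces $P_{t,s}^{\ast(n+1)}\le C\K_\gamma^{\ast(n+1)}(t-s)$ by the constant $CT^{n-(n+1)\gamma}$, leaving a Volterra inequality with a constant coefficient, and classical Gronwall immediately yields the factor $e^{CT^{n-(n+1)\gamma}\cdot T}=e^{CT^{(n+1)(1-\gamma)}}$, exactly as in (\ref{sup_dell'attesa_di_Ypunto}). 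Your Mittag--Leffler/iterated-convolution argument reproduces the same exponent and is not wrong, but it is considerably more machinery than the situation demands; it would become necessary only if $\gamma\ge n/(n+1)$, where the kernel is singular at $t=s$ --- a regime that Theorem \ref{main_result} never enters.
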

\begin{proof}\textit{(sketch)} 
($\ref{eqn:stima_su_K}$) follows from Lemma \ref{lemma_tecnico} and ($\ref{stima_di_varphi}$), where in Lemma \ref{lemma_tecnico} we have chosen $\frac{1}{r}-\frac{1}{q}=\frac{1}{2\gamma}-\zeta$, $\zeta\in \left(0,\frac{1}{2\gamma}\right)$.
Having in mind Note \ref{note:scambi}, 
from (\ref{F_0}) and ($\ref{eqn:stima_su_K}$), using (\ref{noteE}) we have 
\begin{equation}\label{attesa_del_sup_di_F0}
\left(E \sup_{t\in[0,T]} \left\vert F_0^{(\lambda,\gamma)}(t)\right\vert^N \right)^{\frac{1}{N}}\leq C T^{1+\zeta\gamma} 
\lambda^{\frac{1}{2\gamma}-\zeta-1}. 
\end{equation}
From (\ref{F_n1}), (\ref{eqn:stima_su_K}) and (\ref{P < K}) we get
\begin{equation}\label{sup_dell'attesa_di_Fn}
\left(\sup_{t\in[0,T]} E\left\vert F_n\espp(t)\right\vert^N\right)^{\frac{1}{N}}
\leq T^{\zeta\gamma}T^{n-n\gamma}\lambda^{\frac{1}{2\gamma}-\zeta-1}, \quad n\geq 1 \mbox{,}
\end{equation}
so that, again by (\ref{noteE}), 
\begin{equation}\label{attesa del sup di Fn}
\left(E\sup_{t\in[0,T]}\left\vert\int_0^t ds  F_n\espp(s)\right\vert^N\right)^{\frac{1}{N}}
\leq CT^{\zeta\gamma+1}T^{n-n\gamma}\lambda^{\frac{1}{2\gamma}-\zeta-1}.
\end{equation}
Also, from (\ref{eqn for dotY_n}) and  (\ref{P < K}), 
\begin{align*}
\left(\sup_{t\in[0,T]}E\left\vert \dot{Y}\en\espp(t)\right\vert^N\right)^{\frac{1}{N}}\!\!\!\!\!
&\leq C \left\{\sup_{t\in[0,T]}\left( E\left\vert F_n\espp(t)\right\vert^N\right)^{\frac{1}{N}} \!\!\!\!\!+\dots+\!\!\sup_{t\in[0,T]}
\left( E\left\vert F_{2n}\espp(t)\right\vert^N\right)^{\frac{1}{N}}\right\}\\
&+CT^{n-\gamma(n+1)}
\int_0^Tdt\left(\sup_{s\in[0,t]}
E\left\vert \dot{Y}\en\espp(s)\right\vert^N\right)^{\frac{1}{N}}. 
\end{align*}
By the Gronwall Lemma and (\ref{sup_dell'attesa_di_Fn}) we then obtain that $\forall n\geq 1$
\begin{equation}\label{sup_dell'attesa_di_Ypunto}
\left(\sup_{t\in[0,T]}E\left\vert \dot{Y}\en\espp(t)\right\vert^N\right)^{\frac{1}{N}} \leq C \lambda^{\frac{1}{2\gamma}-\zeta-1} 
e^{CT^{(n+1)-\gamma(n+1)}},
\end{equation}
hence 
\begin{equation}\label{attesa del sup Ypunto}
\!\!\!\left(E\sup_{t\in[0,T]}\left\vert\int_0^t ds  
\dot{Y}\en\espp(s)\right\vert^N\right)^{\frac{1}{N}}\!\!\!\!\!
\leq C \lambda^{\frac{1}{2\gamma}-\zeta-1} 
e^{CT^{(n+1)-\gamma(n+1)}}.
\end{equation}
When $n=1$ ($\ref{stima_su_X_lambda_sub}$) is a straightforward consequence of ($\ref{prima_iterata_di_X}$), (\ref{attesa_del_sup_di_F0}), (\ref{attesa del sup Ypunto}) and the fact that
\begin{equation}\label{attesa sup b(t)}
E\sup_{t\in[0,T]}\mid b(t)\mid^N\leq CT^{\frac{N}{2}}.
\end{equation}
When $n>1$, we first rewrite (\ref{X_n}) as follows 
\begin{equation}\label{X_n con dotY}
X\en\espp(t)= b(t)+F_0\espp(t)+\int_0^tds \left[F_1\espp+\dots+F_n\espp\right](s)
+\int_0^t ds\dot{Y}\espp\en(s),
\end{equation}
and then 
($\ref{stima_su_X_lambda_sub}$) follows from (\ref{attesa_del_sup_di_F0}), (\ref{attesa sup b(t)}), (\ref{attesa del sup di Fn}) and (\ref{attesa del sup Ypunto}).
By acting in a similar way we find the following estimates:
$$
\left(\sup_{t\in[0,T], t+\tau\leq T} E \sup_{t'\in[t,t+\tau]} \left\vert F_0\espp(t')-F_0\espp(t)
\right\vert^N\right)^{\frac{1}{N}}\leq C \tau\, T^{\zeta\gamma}\lambda^{\frac{1}{2\gamma}-1-\zeta},
$$
$$
\left(\sup_{t\in[0,T], t+\tau\leq T} E \sup_{t'\in[t,t+\tau]}\left\vert\int_t^{t'}
ds F_n\espp(s)\right\vert^N\right)^{\frac{1}{N}}\leq C 
\tau\, T^{\zeta\gamma}T^{n(1-\gamma)}\lambda^{\frac{1}{2\gamma}-1-\zeta},
$$
$$
\left(\sup_{t\in[0,T], t+\tau\leq T} E \sup_{t'\in[t,t+\tau]}\left\vert\int_t^{t'}
ds \dot{Y}\en\espp(s)\right\vert^N\right)^{\frac{1}{N}}\!\leq C 
\tau\, T^{\zeta\gamma}T^{(2n+1)(1-\gamma)}\lambda^{\frac{1}{2\gamma}-1-\zeta}.
$$
So, recalling that for the BM $b(t)$
\begin{equation}\label{attesa sup b(t')-b(t)}
E\sup_{t'\in[t,t+\tau]}\mid b(t')-b(t)\mid^N\leq C\tau^{\frac{N}{2}},
\end{equation}
(\ref{stima_spostamento_centro_sub}) follows.\\
\end{proof}
\begin{lemma}\label{lemma:stime2}
$\forall N,n\geq 1$,  $ 0<\gamma< 1$, 
$\zeta\in\left(0,\frac{1}{2\gamma}\right)$, $T>0,\, \lambda, \delta\in(0,1)$ there exists a constant $C>0$ such that
\begin{equation}\label{eqn:lemma3-2}
\!\!\!\!\!\!\!\!\left(\!\!E \sup_{t\in[\delta,T]}  \left\vert \int_{\delta}^{t}ds K_{s,s-\delta}\espp\right\vert^N\right)
^{\frac{1}{N}}\!\!\!\!
\leq C \left[\delta^{1-\gamma}\!\!
+ \delta^{\zeta} \lambda^{\frac{1}{2\gamma}-\zeta-1}
\left(\delta^{\frac{1}{2}}+\delta  \lambda^{\frac{1}{2\gamma}-\zeta-1}\right)\right]
e^{CT^{(n+1)(1-\gamma)}},
\end{equation}
\begin{equation}\label{eqn:lemma3-1}
\!\!\!\!\!\!\!\!\left(\!\!E\! \sup_{t\in[\delta,T]}\! \left\vert\!\int_{\delta}^{t}\!\!\! ds \rho_{t-s}^{\gamma}(0)K_{s,s-\delta}\espp\right\vert^N \!\right)^{\frac{1}{N}}
\!\!\!\!\!\!\leq \!
C \!\left[\delta^{1-\gamma}\!\!
+ \delta^{\zeta} \lambda^{\frac{1}{2\gamma}-\zeta-1}\!
\left(\delta^{\frac{1}{2}}\!\!+\!\delta  \lambda^{\frac{1}{2\gamma}-\zeta-1}\right)\!\right]\!
e^{CT^{(n+1)(1-\gamma)}}\!\!,
\end{equation}
\begin{equation}\label{eqn:lemma 3-9}
\left(E \sup_{t\in[0,T]}\! \left\vert\int_0^{t} ds \rho_{t-s}^{\gamma}(0)K_{s,0}\espp\right\vert^N \right)^{\frac{1}{N}}\!\!\!\!\!\leq C\lambda^{\frac{1}{\gamma}-2\zeta-2}e^{CT^{(n+1)(1-\gamma)}}.
\end{equation}
Moreover, $\forall M>0$, we have
\begin{equation}\label{eqn:lemma3-3}
\!\!\!\!\!\sup_{s\in[\delta,T]}\!\! \left(\!\!E \left\vert \Gamma_s^{(\lambda,\delta,\gamma)}
\right\vert^N\!\right)^{\frac{1}{N}}\!\!\!\!\leq\! C\!\left[\lambda^{\frac{3}{4}}\!\!+\!
\lambda^{\frac{1}{2\gamma}-\frac{1}{4}}\!\!+\!\lambda^{\frac{M}{8}}\!\left(T^{1-2\gamma}
\textbf{1}_{_{\left\{0<\gamma<1/2\right\}}}\!\!+\!
\delta^{1-2\gamma}\textbf{1}_{_{\left\{1/2<\gamma<1\right\}}}\!\right)
\right]\!e^{CT^{(n+1)(1-\gamma)}}\!\!,
\end{equation}
where
$\textbf{1}$ is the indicator function and 
\begin{equation}\label{Gamma_s}
\Gamma_s^{(\lambda,\delta,\gamma)}:=\langle
\varphi_s^{(\lambda)} ,\int_0^{s-\delta}\!\!\!\!\!\!\!
db(s') \rho_{s-s'}^{\gamma}\varphi_{s'}^{(\lambda)} \rangle -
\int_0^{s-\delta}\!\!\!\!\!\!\!
db(s') \rho_{s-s'}^{\gamma}(0).
\end{equation} 
\end{lemma}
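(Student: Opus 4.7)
The plan is to derive all four bounds from Lemma~\ref{lemma_tecnico} with appropriately chosen H\"older exponents $(p,q,r)$, combined with the $L^p$-bound (\ref{stima_di_varphi}) on $\varphi_t^{(\lambda)}$, the scaling identity (\ref{rho centrata}) for $\rho_{t-s}^{\gamma}(0)$, the pointwise comparison (\ref{P_stimato_con_p_subdiff}), and the growth/increment estimates (\ref{stima_su_X_lambda_sub})--(\ref{stima_spostamento_centro_sub}) of Lemma~\ref{lemma:stime1}. The exchange inequality (\ref{noteE}) from Note~\ref{note:scambi} will be used throughout to commute $E\sup_{t\le T}$ with time integration and to push $N$-th moments inside.

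For (\ref{eqn:lemma 3-9}) the argument is the most direct: push the $N$-th moment inside the $ds$-integration via (\ref{noteE}), apply (\ref{eqn:stima_su_K}) to bound $K_{s,0}\espp$, and integrate the kernel $\rho_{t-s}^{\gamma}(0)=c(\gamma)(t-s)^{-\gamma}$ which is integrable on $[0,t]$ for $\gamma\in(0,1)$. The exponential factor $e^{CT^{(n+1)(1-\gamma)}}$ arises when the $X\en\espp$-dependence implicit in $\varphi_t^{(\lambda)}$ is absorbed via (\ref{stima_su_X_lambda_sub}).

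For (\ref{eqn:lemma3-2}) and (\ref{eqn:lemma3-1}) the crucial point is the short stochastic window $[s-\delta,s]$ appearing in $K_{s,s-\delta}\espp$. I would split $\varphi_s^{(\lambda)}=\varphi_{s-\delta}^{(\lambda)}+(\varphi_s^{(\lambda)}-\varphi_{s-\delta}^{(\lambda)})$ inside the inner product defining $K_{s,s-\delta}\espp$. The first piece produces a short-window stochastic integral of variance of order $\delta$; Lemma~\ref{lemma_tecnico} then yields the $\delta^{1/2}\lambda^{1/(2\gamma)-\zeta-1}$ contribution through its exponent $\nu$. The second piece is controlled via the path increment $X\en\espp(s)-X\en\espp(s-\delta)$, whose $N$-th moment is bounded by (\ref{stima_spostamento_centro_sub}) and supplies the $\delta\lambda^{1/(2\gamma)-\zeta-1}$ term (squared in $\lambda$ after combining with the stochastic-integral factor). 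The stand-alone summand $\delta^{1-\gamma}$ comes from the purely deterministic portion, obtained by bounding $P_{s',s'-\delta}\espp$ through (\ref{P_stimato_con_p_subdiff}) and applying the iterated-kernel formula (\ref{def of K}) to handle the extra convolution in time present in (\ref{eqn:lemma3-1}).

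The main obstacle is (\ref{eqn:lemma3-3}). Here one must quantify the fact that for $s-s'\ge\delta$ the kernel $\rho_{s-s'}^{\gamma}$ has spread over a length scale much larger than the $\lambda$-support of $\varphi$, so that $\langle\varphi_s^{(\lambda)},\rho_{s-s'}^{\gamma}\varphi_{s'}^{(\lambda)}\rangle$ is close to $\rho_{s-s'}^{\gamma}(0)$. Using the scaling (\ref{scaling_property}) to write $\rho_{s-s'}^{\gamma}(y)=(s-s')^{-\gamma}\rho_1^{\gamma}(y/(s-s')^{\gamma})$, the error $\Gamma_s\esppp$ becomes, after performing the inner product, an integral of increments $\rho_1^{\gamma}(\cdot)/c(\gamma)-1$ evaluated at arguments of size at most $\lambda|X\en\espp(s)-X\en\espp(s')|/(s-s')^{\gamma}$. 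On the good event $\{\sup_{s\le T}|X\en\espp(s)|\le\lambda^{-\alpha}\}$ for a suitable $\alpha>0$, the H\"older-type bound (\ref{bound 1}) gives smallness; on the complement one uses the cruder (\ref{bound 2}) and pays its probability through a Markov estimate fed by (\ref{stima_su_X_lambda_sub}), which is what generates the exponential factor $e^{CT^{(n+1)(1-\gamma)}}$ as well as the $\lambda^{M/8}$ term. The remaining contributions $\lambda^{3/4}$ and $\lambda^{1/(2\gamma)-1/4}$ arise from optimizing $(p,q,r)$ in Lemma~\ref{lemma_tecnico} and trading the exponent $\nu$ against the $L^p$-norms of $\varphi_s^{(\lambda)}$, while the indicator split $T^{1-2\gamma}\textbf{1}_{\{0<\gamma<1/2\}}+\delta^{1-2\gamma}\textbf{1}_{\{1/2<\gamma<1\}}$ reflects whether $(s-s')^{-\gamma}$ is integrable on $[0,T]$ or whether the $\delta$-cutoff is needed near the diagonal. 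The delicate simultaneous book-keeping of these exponents, and the matching of the cutoff parameter $M$ with the H\"older exponent $\beta$ in (\ref{bound 1}), is what I expect to be the hardest part of the proof.
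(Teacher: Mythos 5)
Your toolkit is the right one, but the proposal misses the single idea that the paper's proof of this lemma actually hinges on: the three-term decomposition of $K_{t,s}\espp$ in (\ref{Kprimo})--(\ref{Kterzo}), in which \emph{both} test functions are frozen to the deterministic, untranslated $\varphi^{(\lambda)}$ in the leading term (possible by translation invariance), while the two remainders carry either $\varphi^{(\lambda)}_{\lambda[X\en\espp(t)-X\en\espp(s)]}-\varphi^{(\lambda)}$ or $\varphi_{s'}^{(\lambda)}-\varphi_{s}^{(\lambda)}$, i.e.\ increments of $X\en\espp$ controlled by (\ref{stima_spostamento_centro_sub}). Without this, your route for (\ref{eqn:lemma 3-9}) fails quantitatively: pushing the moment inside and invoking (\ref{eqn:stima_su_K}) can only give a bound of order $\lambda^{\frac{1}{2\gamma}-\zeta-1}$, whereas the statement requires $\lambda^{\frac{1}{\gamma}-2\zeta-2}$, which is the \emph{square} of that power and is strictly stronger when $\gamma<1/2$; the second factor $\lambda^{\frac{1}{2\gamma}-\zeta-1}$ is gained precisely by pairing $\|\varphi_s^{(\lambda)}-\varphi^{(\lambda)}\|_p\leq C\lambda^{\frac{1}{p}-1}\vert X\en\espp(s)\vert$ (via (\ref{varphib-varphia})) with the moment bound (\ref{stima_su_X_lambda_sub}) inside Lemma \ref{lemma_tecnico}, not by integrating the kernel. (Handling the singular factor $\rho_{t-s}\esp(0)$ also needs a H\"older step in $s$, not just (\ref{noteE}), but that is minor.)

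The same gap reappears in (\ref{eqn:lemma3-2})--(\ref{eqn:lemma3-1}): your split $\varphi_s^{(\lambda)}=\varphi_{s-\delta}^{(\lambda)}+(\varphi_s^{(\lambda)}-\varphi_{s-\delta}^{(\lambda)})$ leaves a first piece $\langle\varphi_{s-\delta}^{(\lambda)},\int_{s-\delta}^s db(s')\rho_{s-s'}\esp\varphi_{s'}^{(\lambda)}\rangle$ in which the integrand of the stochastic integral is still random, so the only tool available is Lemma \ref{lemma_tecnico}; but there $\nu=\frac12-\gamma(\frac1r-\frac1q)$, so demanding $\delta^{1/2}$ forces the $\lambda$-exponent down to $-1$, while the choice giving $\lambda^{\frac{1}{2\gamma}-\zeta-1}$ only yields $\delta^{\gamma\zeta}$ --- neither reproduces the stated $\delta^{1-\gamma}$ or $\delta^{\zeta+\frac12}\lambda^{\frac{1}{2\gamma}-\zeta-1}$ terms (and there is no ``purely deterministic portion'' or $P_{t,s}\espp$ inside $K_{s,s-\delta}\espp$ to produce $\delta^{1-\gamma}$ as you suggest). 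In the paper, the fully frozen term has a deterministic integrand, the inner product is bounded by $B(\gamma)\rho\esp_{s-s'}(0)$ via (\ref{stima utile}), and after exchanging the order of integration and integrating the Wiener integral by parts one gets $C(\delta^{1-\gamma}+\delta^{\frac12}\delta^{1-\gamma})$ with no negative power of $\lambda$; the $\delta^{1/2}$ and $\delta\lambda^{\frac{1}{2\gamma}-\zeta-1}$ in the remaining terms both come from (\ref{stima_spostamento_centro_sub}), not from $\nu$. Your sketch for (\ref{eqn:lemma3-3}) is closer in spirit (the bound (\ref{bound 1})/(\ref{bound 2}) dichotomy is indeed the engine), though the paper splits deterministically in the space variable at $\vert x\vert\sim\lambda^{-1/8}$ and uses Burkholder plus the Schwartz decay of $\varphi$ to generate $\lambda^{M/8}$, rather than conditioning on an event for $\sup\vert X\en\espp\vert$; that variant could likely be made to work, but the first two points are genuine gaps that would have to be repaired by adopting the $K$-decomposition.
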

\begin{proof}
The proof of the bounds (\ref{eqn:lemma3-2}), (\ref{eqn:lemma3-1}), (\ref{eqn:lemma 3-9}) and (\ref{eqn:lemma3-3}) is done by following  \cite{articolo}, pages 16-18, so it shall not be very detailed.   
Recalling (\ref{varphi=varphi_n}), we have that  $\forall n\geq 1$,  $\gamma \in (0,1)$ and for $ 0\leq s\leq t,  \, K_{t,s}\espp$ can be  expressed as 
\begin{align}\label{Kprimo}
& K_{t,s}\espp
=\langle\varphi^{(\lambda)} ,\int_{s}^t\!\!
db(s')\rho_{t-s'}^{\gamma}\varphi^{(\lambda)} \rangle \\ \label{Ksecondo} 
&+\langle\varphi_{\lambda \left[X\en\espp(t)-X\en\espp(s)\right]}^{(\lambda)}-\varphi^{(\lambda)} , \int_{s}^t\!\!
db(s') \rho_{t-s'}^{\gamma}\varphi^{(\lambda)} \rangle\\ 
&+ \langle\varphi_{t}^{(\lambda)}, \int_{s}^t\!\!
db(s') \rho_{t-s'}^{\gamma}(\varphi_{s'}^{(\lambda)}- \varphi_{s}^{(\lambda)}) \rangle.\label{Kterzo}
\end{align} 
Observe also that  $\forall a,b \in \R$ and $\forall m\geq 1$
\begin{equation}\label{varphib-varphia}
\|\varphi_b-\varphi_a\|_m\leq \|\varphi'\|_m \mid b-a\mid\mbox{, }\varphi_h:=\varphi(x-h).
\end{equation}  
Let us start with proving (\ref{eqn:lemma 3-9}). We decompose $K_{s,0}\espp$ according to the prescription (\ref{Kprimo})-(\ref{Kterzo}); recalling the notation (\ref{def fi_t lambda}), the term  coming from  (\ref{Ksecondo}) becomes $$
\langle \varphi_s^{(\lambda)}-\varphi^{(\lambda)}, \int_0^s
db(s')\rho_{s-s'}\esp\varphi^{(\lambda)}\rangle. 
$$
Using Lemma \ref{lemma_tecnico}, we have 
$$
\left(E\left\vert \langle \varphi_s^{(\lambda)}-\varphi^{(\lambda)}, \int_0^s
db(s')\rho_{s-s'}\esp\varphi^{(\lambda)}
 \rangle \right\vert^N\right)^{\frac{1}{N}}\leq s^{\nu} E\|\varphi_s^{(\lambda)}-\varphi^{(\lambda)} \|_p \,\,\|\varphi^{(\lambda)} \|_r
$$
with $r,p$ and $\nu$ to be chosen according to Lemma \ref{lemma_tecnico}.
By  (\ref{varphib-varphia}), (\ref{stima_di_varphi}),  and (\ref{stima_su_X_lambda_sub}), we obtain that
$$
\!\sup_{s\in[0,t]} \!\!\left(\!\!E\left\vert \langle \varphi_s^{(\lambda)}\!-\!\varphi^{(\lambda)}
\!, \!\int_0^s\!\!\!\!
db(s')\rho\esp_{s-s'}\varphi^{(\lambda)}
 \rangle \right\vert^N\right)^{\frac{1}{N}}\!\!\!\!\!\!\leq\! C t^{\nu} \lambda^{\frac{1}{p}-1}
\lambda^{\frac{1}{r}-1}\lambda^{\frac{1}{2\gamma}-\zeta-1}
e^{Ct^{(n+1)(1-\gamma)}}
$$
$$
\qquad\qquad\qquad
\qquad\qquad\qquad\qquad\,\,\,\,\,\,
\leq Ce^{Ct^{(n+1)(1-\gamma)}}t^{\gamma \zeta}\lambda^{\frac{1}{\gamma}-2\zeta-2},
$$
having chosen $\frac{1}{r}+\frac{1}{p}-1=\frac{1}{2\gamma}-\zeta$.
For $p'$ and $q'$ such that $1/p'+1/q'=1$, we have
$$
\left\vert\int_0^{t}ds \rho_{t-s}^{\gamma}(0)\langle \varphi_s^{(\lambda)}-\varphi^{(\lambda)}, \int_0^s
db(s')\rho_{s-s'}\esp\varphi^{(\lambda)}\rangle\right\vert^N
$$
$$
\leq C \left(\int_0^t\!\!\!\frac{ds}{(t-s)^{\gamma p'}}\!\right)^{\frac{N}{p'}}\!
\left(\int_0^t\!\!\!\!ds \left\vert\langle \varphi_s^{(\lambda)}-\varphi^{(\lambda)}
\!\!, \!\int_0^s
db(s')\rho_{s-s'}\varphi^{(\lambda)}
 \rangle\right\vert^{q'}\right)^{\frac{N}{q'}}\!\!\!,
$$
so that
$$
\left( E\!\sup_{t\in[0,T]}\!\left\vert\int_0^{t} ds \rho_{t-s}^{\gamma}(0)
\langle \varphi_s^{(\lambda)}-\varphi^{(\lambda)}, \int_0^s
db(s')\rho_{s-s'}\esp\varphi^{(\lambda)}\rangle
\!\right\vert^N\!\right)^{\frac{1}{N}}
$$
$$
\leq C\! \left\vert\! E\!\left(\!\int_0^T \!\!\!\!\!ds\!
\left\vert\langle \varphi_s^{(\lambda)}\!-\!\varphi^{(\lambda)}\!,\! \int_0^s\!\!\!\!
db(s')\rho_{s-s'}\varphi^{(\lambda)}
 \rangle\right\vert^{q'}\right)^{\frac{N}{q'}}
 \right\vert^{\frac{1}{N}}
$$
$$
\leq C \left( E\left\vert  \int_0^T \!\!\!ds
\left\vert\langle \varphi_s^{(\lambda)}-\varphi^{(\lambda)}, \int_0^s
db(s')\rho_{s-s'}\varphi^{(\lambda)}
 \rangle\right\vert^{q'}  \right\vert^N\right)^{\frac{1}{Nq'}}
$$
$$
\leq C \!\!\sup_{s\in[0,T]} \!\!\left(\!\! E
\left\vert\langle \varphi_s^{(\lambda)}-\varphi^{(\lambda)}, \int_0^s
db(s')\rho_{s-s'}\varphi^{(\lambda)}
 \rangle\right\vert^{Nq'}\right)^{\frac{1}{Nq'}}\!\!\!\!\!\!\leq\! C\lambda^{\frac{1}{\gamma}
 -2\zeta-2}e^{CT^{(n+1)(1-\gamma)}}.
$$
The addends (\ref{Kprimo}) and (\ref{Kterzo}) can be examined in the same way, so we leave it to the reader. We now very briefly show how to obtain (\ref{eqn:lemma3-2}). We decompose again $K_{s,s-\delta}\espp$ according to (\ref{Kprimo})-(\ref{Kterzo}). For the term coming from   (\ref{Kprimo}),  by exchanging the order of integration (which is now allowed) and integrating by parts, we get
$$
\left(E \sup_{t\in[\delta,T]} \left\vert\int_{\delta}^{t}ds
\langle\varphi^{(\lambda)}, \int_{s-\delta}^sdb(s') \rho_{s-s'}^{\gamma}\varphi^{(\lambda)} \rangle \right\vert^N
\right)^{\frac{1}{N}}\leq C\left(\delta^{1-\gamma}+\delta^{\frac{1}{2}}\delta^{1-\gamma}\right).
$$
For the term coming from (\ref{Ksecondo}), we have
$$
\left(E \sup_{t\in[\delta,T]} \left\vert\int_{\delta}^{t}ds
\langle\varphi_{\lambda \left[X\en\espp(s)-X\en\espp(s-\delta)\right]}^{(\lambda)}-\varphi^{(\lambda)} , \int_{s-\delta}^s\!\!\!\!\!
db(s') \rho_{s-s'}^{\gamma}\varphi^{(\lambda)} \rangle\right\vert^N
\right)^{\frac{1}{N}}
$$
$$
\leq C \delta^{\zeta}\lambda^{\frac{1}{2\gamma}-\zeta-1}
\left[\delta^{\frac{1}{2}}+\delta  \lambda^{\frac{1}{2\gamma}-\zeta-1}\right]
e^{CT^{(n+1)(1-\gamma)}}, 
$$
having applied Lemma \ref{lemma_tecnico} with the choice 
$\frac{1}{r}-\frac{1}{q}=\frac{1}{2\gamma}-\zeta$, $\zeta\in\left(0,\frac{1}{2\gamma}\right)$, and (\ref{stima_spostamento_centro_sub}), as well. In an analogous way, for the term coming from (\ref{Kterzo}) we obtain
$$
\left(E \sup_{t\in[\delta,T]} \left\vert\int_{\delta}^{t}ds
\langle\varphi_{s}^{(\lambda)}, \int_{s-\delta}^s\!\!\!\!\!
db(s') \rho_{s-s'}^{\gamma}\left(\varphi_{s'}^{(\lambda)}- \varphi_{s-\delta}^{(\lambda)}\right) \rangle\right\vert^N\right)^{\frac{1}{N}}
$$ 
$$
\leq C \delta^{\zeta}\lambda^{\frac{1}{2\gamma}-\zeta-1}
\left[\delta^{\frac{1}{2}}+\delta  \lambda^{\frac{1}{2\gamma}-\zeta-1}\right]
e^{CT^{(n+1)(1-\gamma)}}.
$$
(\ref{eqn:lemma3-1}) results from applying again the same technique so we won't present the proof. 

In order to prove ($\ref{eqn:lemma3-3}$), let us express  
$\Gamma_s^{(\lambda,\delta,\gamma)}$ as
$$
\Gamma_s^{(\lambda,\delta,\gamma)}=\int dx\varphi\left(x-X\espp\en(t)\right)I\esppp_t(x),
$$   
where
$$
I\esppp_t(x):=\int_0^{t-\delta}\!\!\!\!db(s) \int_{\R} dy \varphi(y) \left\{ \rho_{t-s}\esp\left[
\lambda(x-y-X\espp\en(s))\right]- \rho_{t-s}\esp(0)\right\}.
$$
By a change of variables and using the scaling property (\ref{scaling_property}), we can rewrite
$$
I\esppp_t(x)=\!\!\int_0^{t-\delta}\!\!\!\!\!\!\!db(s)\rho\esp_{t-s}(0)\int_R \!\!dy \varphi(y) 
\left\{ \frac{1}{c(\gamma)}\rho\esp_1\left[\frac{\lambda
\left(x-y-X\espp\en(s)\right)}{(t-s)^{\gamma}}\right]
-1\right\},
$$
where $c(\gamma)$ is defined in (\ref{def of c(gamma)}).
We now use the bounds (\ref{bound 1}) and (\ref{bound 2}). More precisely, setting $z= \lambda\left(x-y-X\espp\en(s)\right)/(t-s)^{\gamma}$, we estimate the integrand above in the following way :
$$
\left\{
\begin{array}{cc}
  \left\vert\frac{\rho\esp_1(z)}{c(\gamma)}-1 \right\vert\leq C& \mbox{when} \mid x\mid>\lambda^{-1/8}\\
\\ 
 \left\vert\frac{\rho\esp_1(z)}{c(\gamma)}-1 \right\vert\leq C \mid z\mid
 &\mbox{when} \mid x\mid\leq\lambda^{-1/8}.
\end{array}
\right.
$$
So, following \cite{articolo}, pages 15-16, we apply the Burkholder 
inequality (\cite{Revuz_Yor}) and we get 
$$
E\left\vert I_t(x)\right\vert^N\leq C\,\textbf{1}_{\left\{\mid x\mid>\lambda^{-1/8}\right\}}
\left\vert \int_0^{t-\delta} \!\!\!\!\frac{ds}{(t-s)^{2\gamma}}\right\vert^{\frac{N}{2}}
\qquad\qquad\qquad\qquad\quad
$$
$$
\quad+C\,\textbf{1}_{\left\{\mid x\mid\leq\lambda^{-1/8}\right\}} E\left\vert\int_0^{t-\delta}\!\!\!\!\!\!
\frac{ds}{(t-s)^{2\gamma}}
\left(\!\!\int_{\R}dy\varphi(y)\frac{\lambda\left\vert x-y-X\espp\en(s)\right\vert}{(t-s)^{\gamma}}
\right)^2\right\vert^{\frac{N}{2}}
$$  
$$
\leq C \textbf{1}_{\left\{\mid x\mid>\lambda^{-1/8}\right\}}
\left\vert \textbf{1}_{ _{\left\{0<\gamma<1/2\right\}}}t^{1-2\gamma}+
\textbf{1}_{ _{\left\{1/2<\gamma<1\right\}}}\delta^{1-2\gamma}
\right\vert^{\frac{N}{2}}
$$
$$
+ C \textbf{1}_{\left\{\mid x\mid\leq\lambda^{-1/8}\right\}}
\lambda^N \left(\lambda^{-N/8}+1+E\sup_{s\in[0,T]}
\left\vert X\espp\en\right\vert^N \right)
\left\vert \int_0^{t-\delta}\frac{ds}{(t-s)^{4\gamma}}\right\vert^{\frac{N}{2}}
$$
$$
\leq C \textbf{1}_{\left\{\mid x\mid>\lambda^{-1/8}\right\}}
\left\vert \textbf{1}_{ _{\left\{0<\gamma<1/2\right\}}}t^{1-2\gamma}+
\textbf{1}_{ _{\left\{1/2<\gamma<1\right\}}}\delta^{1-2\gamma}
\right\vert^{\frac{N}{2}}
$$
$$
+C\textbf{1}_{\left\{\mid x\mid\leq\lambda^{-1/8}\right\}}\lambda^N
\left( \lambda^{-N/8}+1 +\lambda^{\frac{1}{2\gamma}-\zeta-1}\right)
e^{CT^{(n+1)(1-\gamma)}},
$$
where in the last inequality we have used (\ref{stima_su_X_lambda_sub}). If we choose $\zeta=1/8$ in the above, we obtain
$$
\left(E\left\vert \int_{ _{\left\{\mid x\mid\leq\lambda^{-1/8}\right\}}}
\!\!\!\!\!\!\!\!\!\!\!\!\!\!\varphi\left(x-X\espp\en(t)\right) I_t\esppp
\right\vert^N \right)^{\frac{1}{N}}
$$
$$
\leq C\|\varphi\|_{\frac{N}{N-1}}
\left(E\int_{ _{\left\{\mid x\mid\leq\lambda^{-1/8}\right\}}}
\!\!\!\!\!\!\!\!\!\!\!\!\!\!\!\!\!\!\!\!
dx \left\vert I_t\esppp\right\vert^N
\right)^{\frac{1}{N}}
$$
$$
\leq C \left(\lambda^{3/4}+\lambda^{\frac{1}{2\gamma}-\frac{1}{4}}\right)
e^{CT^{(n+1)(1-\gamma)}}.
$$
Moreover, for any $M>0$ we have
$$
\left(E\left\vert\int_{ _{\left\{\mid x\mid>\lambda^{-1/8}\right\}}}
\!\!\!\!\!\!\!\!\!\!\!\!\!\!\!\varphi\left(x-X\espp\en(t)\right) I_t\esppp
\right\vert^N\right)^{\frac{1}{N}}
$$
$$
\leq\lambda^{\frac{M}{8}}\left[E\left(\int_{ _{\left\{\mid x\mid>\lambda^{-1/8}\right\}}}
\!\!\!\!\!\!\!\!\!\!\!\!\!\!\!dx\varphi\left(x-X\espp\en(t)\right)^{\frac{2N}{2N-1}} (1+x^2)^{\frac{1}{2N-1}} \mid x\mid^{\frac{2NM}{2N-1}}
\right)^{2N-1}\right]^{\frac{1}{2N}}
$$
$$
\cdot \left(E\int_{ _{\left\{\mid x\mid>\lambda^{-1/8}\right\}}}
\!\!\!\!\!\!\!\!\!\!\!\!\!\!\!dx\frac{\left\vert I_t\esppp(x)\right\vert^{2N}}{1+x^2}
\right)^{\frac{1}{2N}}
$$
$$
\leq \lambda^{M/8}\left(  \textbf{1}_{ _{\left\{0<\gamma<1/2\right\}}}t^{1-2\gamma}+
\textbf{1}_{ _{\left\{1/2<\gamma<1\right\}}}\delta^{1-2\gamma}   \right).
$$
This concludes the proof of (\ref{eqn:lemma3-3}).
\end{proof}
\begin{lemma}\label{lemma:P-rho}
$\forall \,0\leq s\leq t$, $\lambda\in(0,1), \,\beta \in(0,1]$, 
$n,N\geq 1$ and $\gamma\in(0,1)$ we have
\begin{equation}\label{eqn:Lemma 7-1}
\left(E\left(\sup_{0\leq s\leq t\leq T}(t-s)^{(1+\beta)\gamma }
\vert P_{t,s}\espp-\rho_{t-s}\esp(0)\vert\right)^N\right)^{\frac{1}{N}}\leq C
\lambda^{\beta}\lambda^{\frac{1}{2\gamma}-\zeta-1}
e^{CT^{(n+1)(1-\gamma)}}.
\end{equation}
Also,  $\forall \delta\in(0,1)$ and for any $Q>0$, we have 
\begin{equation}\label{eqn:Lemma 7-2}
\left(E\sup_{t\in[\delta,T]}\left(\int_0^{t-\delta}\!\!\!\!ds\vert P_{t,s}\espp-\rho_{t-s}\esp(0)\vert\right)^N
\right)^{\frac{1}{N}}\!\!\leq C \left(\lambda^{\frac{1}{2\gamma}-\zeta}+\lambda^Q\right)
e^{CT^{(n+1)(1-\gamma)}}.
\end{equation}
\end{lemma}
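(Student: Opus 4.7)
The plan is to start from the representation obtained by exploiting $\int\varphi_t^{(\lambda)}=\int\varphi_s^{(\lambda)}=1$, namely
\[
P_{t,s}\espp - \rho\esp_{t-s}(0) = \int\!\!\int \varphi_t^{(\lambda)}(x)\varphi_s^{(\lambda)}(y)\bigl[\rho\esp_{t-s}(x-y) - \rho\esp_{t-s}(0)\bigr]\,dx\,dy,
\]
then change variables $u=x/\lambda - X\espp\en(t)$, $v=y/\lambda - X\espp\en(s)$ and apply the scaling property (\ref{scaling_property}) to rewrite this as
\[
\frac{c(\gamma)}{(t-s)^\gamma}\int\!\!\int\varphi(u)\varphi(v)\Bigl[\frac{\rho\esp_1(z)}{c(\gamma)} - 1\Bigr]\,du\,dv,\qquad z := \frac{\lambda(u-v+\Delta)}{(t-s)^\gamma},\quad\Delta := X\espp\en(t) - X\espp\en(s).
\]

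For (\ref{eqn:Lemma 7-1}) I would apply bound (\ref{bound 1}) directly with the given $\beta$, using the Schwartz decay of $\varphi$ to ensure that $\int|\varphi(u)|(1+|u|^\beta)du$ is finite; this yields the pointwise almost-sure estimate $(t-s)^{(1+\beta)\gamma}|P_{t,s}\espp - \rho\esp_{t-s}(0)|\le C\lambda^\beta(1+|\Delta|^\beta)$. Taking the supremum over $0\le s\le t\le T$ and the $N$-th moment, bounding $|\Delta|\le 2\sup_{t'\in[0,T]}|X\espp\en(t')|$ and invoking (\ref{stima_su_X_lambda_sub}) of Lemma \ref{lemma:stime1}, gives the claimed estimate.

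For (\ref{eqn:Lemma 7-2}) the naive choice $\beta=1$ above produces the sharp $\lambda$-exponent but a singularity $(t-s)^{-2\gamma}$ that is not $\delta$-uniformly integrable for $\gamma\ge 1/2$. The plan is therefore to split the $(u,v)$-integration according to whether $|u-v+\Delta|\le\lambda^{-\kappa}$ or $|u-v+\Delta|>\lambda^{-\kappa}$, for a parameter $\kappa\in(0,1)$ to be tuned. On the large set, at least one of $|u|,|v|,|\Delta|$ must exceed $\lambda^{-\kappa}/3$: Schwartz decay of $\varphi$ contributes $\lambda^{M\kappa}$ for any $M>0$ in the $u$- and $v$-cases, while Markov's inequality together with the moment bound on $\sup|X\espp\en|$ from Lemma \ref{lemma:stime1} handles the $\Delta$-case; combined with (\ref{bound 2}) and the uniform-in-$\delta$ estimate $\int_0^{t-\delta}\rho\esp_{t-s}(0)\,ds\le CT^{1-\gamma}$, this produces the $\lambda^Q$ term. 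On the small set, (\ref{bound 1}) with $\beta=1$ delivers the sharp $\lambda$-factor; for $\gamma<1/2$ the remaining $s$-integration converges at once, while for $\gamma\ge 1/2$ a secondary split of $s$ into $(t-s)\le\lambda^c$ (where one falls back on (\ref{bound 1}) with $\beta<1/\gamma-1$ to regain integrability) and $(t-s)>\lambda^c$ is used, with $c>0$ tuned so that the two $\lambda$-powers combine to $\lambda^{1/(2\gamma)-\zeta}$. The main obstacle is precisely this last step: in the superdiffusive regime $\gamma>1/2$ the requirement $\beta=1$ needed for the sharp exponent clashes with the integrability condition $\beta<1/\gamma-1$ needed for $\delta$-uniformity, and the delicate tuning of the cutoffs $\kappa$ and $c$ in the two-scale splitting is what reconciles the two.
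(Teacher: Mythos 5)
Your approach to (\ref{eqn:Lemma 7-1}) has a gap that the paper itself flags explicitly. Applying (\ref{bound 1}) with the fractional exponent $\beta$ yields a pointwise bound containing $|\Delta|^{\beta}$ where $\Delta = X\espp\en(t)-X\espp\en(s)$; taking the $N$-th moment then requires controlling $E(\sup|\Delta|)^{\beta N}$, and since $\beta N$ need not be $\geq 1$ one falls back on Jensen (or on a higher moment), and either way ends up with a factor $\bigl[C(1+\lambda^{\frac{1}{2\gamma}-\zeta-1})\bigr]^{\beta}$ rather than the stated $\lambda^{\frac{1}{2\gamma}-\zeta-1}$. The paper's fix is to linearize $|z|^{\beta}\leq \beta|z|+(1-\beta)$ (Young with $p=1/\beta$) before taking expectations, which replaces $|\Delta|^{\beta}$ by $|\Delta|$ to the first power, after which the increment estimate (\ref{stima_spostamento_centro_sub}) applies with integer $N$. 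Without this step your "take the sup and $N$-th moment" does not produce the claimed $\lambda$-exponent, and the later use of the lemma with $\beta=1/2$ in (\ref{stima F1primo addendo}) would not close.

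For (\ref{eqn:Lemma 7-2}) your route is genuinely different and more elaborate than the paper's. The paper splits on the deterministic variable alone ($|x|>\lambda^{-1}$ vs.\ $|x|\leq\lambda^{-1}$), so the tail contribution comes purely from the Schwartz decay of $\varphi$ and no Markov-type estimate on $\Delta$ is needed; your cutoff $|u-v+\Delta|>\lambda^{-\kappa}$ entangles the random displacement into the indicator, which is workable but costs an extra probabilistic step. More significantly, the "secondary split" in $(t-s)$ is unnecessary: the $s$-integral already runs over $[0,t-\delta]$, so $(t-s)\geq\delta$ and $\int_0^{t-\delta}(t-s)^{-2\gamma}\,ds\leq C\delta^{1-2\gamma}$ is finite for fixed $\delta$ even when $\gamma>1/2$. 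The paper accepts this $\delta^{1-2\gamma}$ factor (which the statement of (\ref{eqn:Lemma 7-2}) leaves implicit) and absorbs it later by choosing $\delta=\lambda^{a}$ in the proof of Theorem \ref{main_result}; a truly $\delta$-uniform bound is not what the downstream applications require. Your instinct that $\beta=1$ clashes with integrability for $\gamma>1/2$ is correct, but the paper resolves the tension via the choice of $\delta$, not by a two-scale split inside the lemma.
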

\begin{proof}[Sketch of proof]
Using the definition of $P_{t,s}\espp$(\ref{def Pts}), by change of variables and the scaling property 
(\ref{scaling_property}), we have 
\begin{align}\label{eqn:P-rho scalato}
&\vert P_{t,s}\espp-\rho_{t-s}\esp(0)\vert\nonumber\\
\leq&\rho_{t-s}\esp(0)\!\!\int\!\!\!\!\int\!\!\!dx dy\,\varphi(x)
\varphi(y)
\left\vert \frac{1}{c(\gamma)}\rho\esp_1\!\!
\left[\!\frac{\lambda(x-y+\!X^{\espp}_{\en}\!(t)\!-\!X^{\espp}_{\en}\!(s))}{(t-s)^{\gamma}}\right]
\!\!-\!1\right\vert.
\end{align}
From (\ref{bound 1}), then
$$
\vert P_{t,s}\espp-\rho_{t-s}\esp(0)\vert\leq C\rho_{t-s}\esp(0)\!\!
\int\!\!\!\!
\int \!\!dxdy 
\varphi(x)\varphi(y)\frac{\left\vert x-y+X\espp\en(t)-X
\espp\en(s)\right\vert^{\beta}\lambda^{\beta}}
{(t-s)^{\gamma\beta}}.
$$
We now want to use (\ref{stima_spostamento_centro_sub}) in order to conclude; though, (\ref{stima_spostamento_centro_sub}) holds only for $N\geq1$ whereas $\beta$ is in the range  $\beta\in (0,1]$. We don't want to choose $\beta=1$ (see (\ref{prob integrabilita}) and comments after it), hence we first apply Young inequality with $p=1/\beta$ and get
$$
\vert P_{t,s}\espp-\rho_{t-s}\esp(0)\vert\leq C\rho_{t-s}\esp(0)
\int\!\!\!\!
\int \!dxdy 
\varphi(x)\varphi(y)\frac{\left(\lv x\rv+\lv y\rv +\lv X\espp\en(t)-X
\espp\en(s)\rv+1\right)\lambda^{\beta}}
{(t-s)^{\gamma\beta}},
$$ 
and now  (\ref{eqn:Lemma 7-1}) is a straightforward 
consequence of (\ref{stima_spostamento_centro_sub}). To get (\ref{eqn:Lemma 7-2}), we use again the bounds (\ref{bound 1}) and (\ref{bound 2}), this time in the following way:
setting $z= \lambda\left(x-y+X\espp\en(t)-X\espp\en(s)\right)/(t-s)^{\gamma}$, we estimate $$
\left\{
\begin{array}{cc}
  \left\vert\frac{\rho\esp_1(z)}{c(\gamma)}-1 \right\vert\leq C& \mbox{when} \mid x\mid>\lambda^{-1}\\
\\ 
 \left\vert\frac{\rho\esp_1(z)}{c(\gamma)}-1 \right\vert\leq C \mid z\mid
 &\mbox{when} \mid x\mid\leq\lambda^{-1}.
\end{array}
\right.
$$
So, from (\ref{eqn:P-rho scalato}) we have
$$
\int_0^{t-\delta}\!\!\!\!ds\vert P_{t,s}\espp-\rho_{t-s}\esp(0)\vert
$$
$$
\leq C \int_0^{t-\delta}\!\!\!\frac{ds}{(t-s)^{2\gamma}}\int\int\varphi(x)\varphi(y)
\textbf{1}_{_{\left\{\mid x\mid\leq\lambda^{-1}\right\}}}
\left[\lambda\left(x+y+X\espp\en(t)-X\espp\en(s)\right)\right]
$$
$$
+ C \int_0^{t-\delta}\!\!\!\frac{ds}{(t-s)^{\gamma}}\int\int\varphi(x)\varphi(y)
\textbf{1}_{_{\left\{\mid x\mid>\lambda^{-1}\right\}}}\!C
$$
$$
\leq C\textbf{1}_{_{\left\{\mid x\mid\leq\lambda^{-1}\right\}}} \lambda\int_0^{t-\delta}\!\!\!\frac{ds}{(t-s)^{2\gamma}}
\left(C+\left\vert X\espp\en(t)-X\espp\en(s)\right\vert\right)
$$
$$
+C\textbf{1}_{_{\left\{\mid x\mid>\lambda^{-1}\right\}}}
\int_0^{t-\delta}\!\!\!\frac{ds}{(t-s)^{\gamma}}\left(\int\varphi(x)\vert x\vert^{2Q}\right)^{\frac{1}{2}}
\left(\int_{_{\left\{\mid x\mid>\lambda^{-1}\right\}}}\frac{\varphi(x)}{\vert x\vert^{2Q}}\right)^{\frac{1}{2}}.
$$
(\ref{eqn:Lemma 7-2}) now follows  from (\ref{stima_su_X_lambda_sub}).
\end{proof}

\section{Proof of Theorem \ref{main_result}} \label{sec: proof of thm 3}
We recall that $C$ is a positive constant that does not depend on $\lambda$ and $\delta$, though it might depend on a positive power of $T$. Also, for simplicity, all the proofs are presented for $T\geq1$, even though the statements are clearly still valid for any $T>0$. Since it has already been treated in \cite{articolo}, the case $\gamma=1/2$ is not explicitly considered.

The intuitive idea that motivates the structure of the proof is based on the observation that, ``morally", things go as if $P_{t,s}\espp$ were converging to $\rho_{t-s}\esp(0)$ as  $\lambda\rightarrow 0$ (see Lemma \ref{lemma:P-rho}); formally, this can be obtained by thinking that, as  $\lambda\rightarrow 0$, $\varphi_t^{(\lambda)}\rightarrow \delta_0$. While such an idea is not hard to turn into a rigorous argument, one of the main technical difficulties is encountered when trying to do the same thing to get an intuition on what $K_{s,0}\espp$ ought to converge to. If in the definition of $K_{s,0}\espp$ we replace  $\varphi_t^{(\lambda)}$ with  $\delta_0$ and exchange the order of integration, we find that $K_{s,0}\espp$ should converge to $\int_0^t db(s)\rho_{t-s}\esp(0)$. The problem is that we are not allowed to exchange the order of integration (see comment after (3.5) in \cite{articolo}) and that $\int_0^t db(s)\rho_{t-s}\esp(0)$ is not well defined as a process in $\mathcal{C}(\R_+)$ when $\gamma\geq\frac{1}{2}$. 
In the same way, $\forall n\geq1$, $F_n\espp$ is well defined  for any $ \gamma \in (0,1)$, whereas the object it converges to is not (see (\ref{stima Fn-suo lim>1/2}) and (\ref{well pos of stoch int})).

The proof goes as follows. $\forall n\geq1$ we introduce the process 
$\eta\esp\en(t)$, solution to the equation
\begin{equation}\label{eta_n}
\eta\en\esp(t)=G\en\esp(t)+(-1)^{n+1}\int_0^t ds\eta\en\esp(s)\mathbb{K}^{\ast(n+1)}(t-s),
\quad0<\gamma<\frac{n}{n+1}
\end{equation}
where
\begin{equation}\label{G_n}
G\esp\en(t):=\sum_{\nu=n}^{2n}(-1)^{\nu +1}\!\!\int_0^t\!db(u) \mathbb{K}_{\gamma}^{\ast(\nu+1)}(t-u),
\quad  n\geq 1,\, 0<\gamma<\frac{n}{n+1}.
\end{equation}
We now observe that Lemma \ref{lemma:equazioni_integrali}
can be applied to $\xi\esp\en$, defined in (\ref{xi_n}), and $\eta\esp\en$. In this case the forcing terms are $A\en\esp$ and $G\en\esp$, respectively, and we can easily prove that they are related through (\ref{rel tra A_n and G_n}). We can in fact show that  the $i-th$ addend of $A\en\esp$ is related to the $i-th$ addend of $G\en\esp$ through (\ref{rel tra A_n and G_n}); all we need to show is that $\forall \nu\in{0,...,n,}$
$$
(-1)^{\nu}\left(\mathbb{K}_{\gamma}^{\ast(\nu)}\ast b\ast \mathbb{K}_{\gamma}^{\ast(n+1)}\right)(t)=
(-1)^{n+1}\!\!\int_0^t\!\! ds (-1)^{\nu+n+1}\!\!\int_0^s\!\!db(u) \mathbb{K}_{\gamma}^{\ast(\nu+1)}(s-u),
$$
which is a straightforward consequence of the definition of $\mathbb{K}_{\gamma}^{\ast(m)}$ given in (\ref{def of K}), together with the following equality
\begin{equation}\label{rel tra K and Theta}
\left(\mathbb{K}_{\gamma}^{\ast(n+1)}\ast b\right)(t)=\int_0^t ds\int_0^s db(u) \mathbb{K}_{\gamma}^{\ast(n)}(s-u),\qquad n\geq 1\mbox{.}
\end{equation}
Hence, Lemma \ref{lemma:equazioni_integrali} gives
\begin{equation}\label{conseq of lemma1}
(-1)^{n+1}\left(\xi\en\esp*\K_{\gamma}^{\ast(n+1)}\right)(t)=\int_0^t ds \eta\en\esp(s).
\end{equation}
Recall that the definition of $X\en\espp$ is given by (\ref{X_n}) for $n\geq 2$ and by (\ref{prima_iterata_di_X}) when $n=1$.
Using (\ref{conseq of lemma1}), we look at the difference between $X\en\espp$ and $\xi\en\esp$:
\begin{subequations}\label{X_n-xi_n}
\begin{eqnarray}
X\en\espp(t)&\!\!\!-\!\!\!&\xi\en\esp(t)=F_0\espp+\int_0^t ds \rho_{t-s}\esp(0)b(s)\\
&+&\sum_{j=1}^{n-1}\left[\int_0^t ds F_j\espp(s)-(-1)^{j+1}\left(\mathbb{K}^{\ast(j+1)}\ast b\right)(t) \right]\label{somma1}\\
&+&\left[\int_0^t\!\! ds \dot{Y}\en\espp(s)-(-1)^{n+1}\!\!\int_0^t
\!\!\!\! ds\xi\en\esp(s)\mathbb{K}^{\ast(n+1)}(t-s) \right]\\
&=& F_0\espp+\int_0^t ds \rho_{t-s}\esp(0)b(s)\label{somma d}\\
&+&\sum_{j=1}^{n-1}\left[\int_0^t ds F_j\espp(s)-(-1)^{j+1}\left(\mathbb{K}^{\ast(j+1)}\ast b\right)(t) \right]\label{somma2}\\
&+&\int_0^t ds 
\left(\dot{Y}\en\espp(s)-\eta\en\esp(s)\right)\label{somma f},
\end{eqnarray}
\end{subequations}
where for $n=1$ the sum in (\ref{somma1}) (and in (\ref{somma2})) is understood to be equal 
to zero. As we have already said, we want to prove that $\forall n\geq 1$,  $X\espp\en$ converges to $\xi\esp\en$ for $\gamma \in\left(0, \frac{n}{n+1}\right)$. To this end, let us further expand the integrand in (\ref{somma f}), using the fact that $\dot{Y}\en\espp$ solves equation (\ref{eqn for dotY_n}):
\begin{equation}\label{eqn:gronwall_su_Ypunto-eta}
\left(\dot{Y}\en\espp-\eta\esp\en\right)(t) =  R\en\espp(t)+(-1)^{(n+1)}\int_0^t
ds\left(\dot{Y}_{\lambda}\esp-\eta\esp\right)(s)\K^{\ast(n+1)}(t-s)
\end{equation}
where 
\begin{eqnarray}\label{eqn:R_n lambda}
R\en\espp(t)&:=&\sum_{j=n}^{2n}F_j\espp(t)-G\en\esp(t)\nonumber\\
&+& (-1)^{(n+1)}\int_0^tds\,\dot{Y}\en\espp(s)\left[P_{t,s}^{\ast(n+1)}
-\K^{\ast(n+1)}(t-s) \right]\mbox{,}\label{eqn:R_n lambda}
\end{eqnarray}
and $G\en\esp(t)$ is defined in (\ref{G_n}).\\
Let $\delta\in(0,1)$.
From now on we assume $t\geq \delta$.
\begin{remark}\label{t<delta}
We omit to study  the case $t<\delta$ because it can be treated in the same way as it is dealt with in \cite{articolo}, where it is presented explicitly, see in particular (3.23), (3.44) and (3.45) in  \cite{articolo}.
In other words, what we actually show is that the estimates in (\ref{Esup small A}), (\ref{stima F2 primo add})-(\ref{stima F0-suo lim}) and (\ref{stima R-n}) are valid when the supremum is taken over the interval $[\delta, T]$ (more precisely, in the case of (\ref{stima F2 primo add})-(\ref{stima F0-suo lim}) and (\ref{stima R-n}) the supremum should be over $[\lambda^a, T]$, because at that point $\delta$ will have been chosen to be equal to $\lambda^a$, see lines before (\ref{stima F0-suo lim}) ). Though, by acting as in \cite{articolo}, we can show that the same estimate holds true when the supremum is taken over the whole interval $[0, T]$. Hence from now on we will assume  $t\geq \delta$ in order to streamline the notation and the presentation of the proof.
\end{remark}
We use the decomposition
\begin{align*}
& \int_0^t ds \rho_{t-s}\esp(0)K_{s,0}\espp=\int_0^{\delta}ds \rho_{t-s}\esp(0)K_{s,0}\espp\\
 +&\int_\delta^t ds \rho_{t-s}\esp (0)\langle \varphi_s\espl, 
\int_0^{s-\delta}\!\!\!\!\!\!db(s') \rho_{s-s'}\esp\varphi_{s'}\espl\rangle
+\int_\delta^t ds \rho_{t-s}\esp (0)\langle \varphi_s\espl,
\int_{s-\delta}^s\!\!\!\!\!\!db(s') \rho_{s-s'}\esp\varphi_{s'}\espl\rangle,
\end{align*}
which follows from the definition of $K_{s,0}\espp$, to rewrite the difference between $F_1\espp$ and $\int_0^t db(s)\K^{\ast(2)}_{\gamma}(t-s)$. For 
$\gamma \in (0,1/2)$, 
\begin{align}
&\left\vert F_1\espp(t) - \int_0^t\!\!db(s) \K_{\gamma}^{*(2)}(t-s)\right\vert^N
\!\!\!\label{F1-suo lim}\\
=&\left\vert\int_0^tdsP_{t,s}\espp K_{s,0}\espp-\int_0^t\!\!db(s) \K_{\gamma}^{*(2)}(t-s)\right\vert^N
\nonumber\\
\leq& C\left\vert\int_0^tds\left(P_{t,s}\espp-\rho_{t-s}^{\gamma}(0)\right) \, K_{s,0}\espp \right\vert ^N\\
+&C\left\vert\int_0^{\delta}ds \rho_{t-s}^{\gamma}(0)\,K_{s,0}\espp \right\vert^N \nonumber\\
+&C\lv \int_{\delta}^t ds\rho_{t-s}^{\gamma}(0)\langle
\varphi_s^{(\lambda)} ,\int_0^{s-\delta}\!\!\!\!\!\!\!\!
db(s') \rho_{s-s'}^{\gamma}\varphi_{s'}^{(\lambda)} \rangle -
\int_{\delta}^t ds\rho_{t-s}^{\gamma}(0)\int_0^{s-\delta}\!\!\!\!\!\!\!\!\!
db(s')\rho_{s-s'}^{\gamma}(0)\rv^N\nonumber\\
+&C\lv\int_{\delta}^t ds\rho_{t-s}^{\gamma}(0)
\int_0^{s-\delta}\!\!\!\!\!db(s') 
\rho_{s-s'}^{\gamma}(0)-\int_0^t db(s) \K^{*(2)}(t-s)\rv^N\label{difference}\\
+&C\lv\int_{\delta}^t ds \rho_{t-s}^{\gamma}(0)\langle \varphi_s^{(\lambda)} ,\int_{s-\delta}^sdb(s') \rho_{s-s'}^{\gamma}\varphi_{s'}^{(\lambda)} \rangle
\rv^N\nonumber\\
\leq& C\left\vert\int_0^tds\left(P_{t,s}\espp-\rho_{t-s}^{\gamma}(0)\right) \, K_{s,0}\espp \right\vert ^N \label{F_1 primo addendo}\\
+& C\lv\int_0^{\delta}\!\!\!ds\rho_{t-s}^{\gamma}(0)\,K_{s,0}\espp\rv^N +C\lv \int_{\delta}^t\!ds \rho_{t-s}^{\gamma}(0) 
\Gamma_s^{(\lambda,\delta,\gamma)}\rv^N\label{F1 pezzi piccoli1}\\
+&C\lv\int_{\delta}^{t}\!
ds \rho_{t-s}^{\gamma}(0)K_{s,s-\delta}\espp\rv^N+C
\lv\Psi\1^{(\delta,\gamma)}(t)\rv^N\label{F1 pezzi piccoli2}, 
\end{align}
where in the last inequality we  used the definition of $\Gamma_s\esppp$ given in  (\ref{Gamma_s}) and we set  $\Psi\1^{(\delta,\gamma)}(t)$ to be the difference in (\ref{difference}), namely
$$
\Psi\1^{(\delta,\gamma)}(t):=\int_{\delta}^t ds\rho_{t-s}^{\gamma}(0)
\int_0^{s-\delta}\!\!\!\!\!db(s') 
\rho_{s-s'}^{\gamma}(0)-\int_0^t db(s) \K^{*(2)}(t-s), \quad\gamma\in (0,1/2).
$$
For $n\geq 1$, we define
\begin{equation}\label{Psi_n}
\Psi_{(n+1)}^{(\delta,\gamma)}(t):=\int_0^t ds\rho_{t-s}\esp(0)
\Psi_{(n)}^{(\delta,\gamma)}(s),\quad \gamma\in\left(0,\frac{n}{n+1}\right).
\end{equation}
In the same way, by using (\ref{F_n2}), (\ref{P_stimato_con_p_subdiff}) and (\ref{Psi_n}), we have
\begin{align}
&\left\vert F_2\espp(t)+ \int_0^t db(s)\K^{*(3)}(t-s)\right\vert^N
\label{F2-suo lim}\\
\leq C&\left\vert \int_0^tds\int_0^s ds'\rho _{s-s'}^{\gamma}(0) \left( P_{t,s}\espp-\rho _{t-s}^{\gamma}(0)\right)\,
K_{s',0}\espp\rv^N\label{F2 primo addendo}\\
+ C\!&\lv\int_0^t \!\!\!ds\rho_{t-s}\esp(0)\!\!\int_0^{\delta}\!\!\!\!\!ds'\rho_{s-s'}^{\gamma}(0)
K_{s',0}\espp\rv^N \!\!\!\!\!+C\!\lv\int_0^t \!\!\!ds\rho_{t-s}\esp(0)\!\! \int_{\delta}^s\!
\!\!\!ds' \rho_{s-s'}^{\gamma}(0) \Gamma_{s'}^{(\lambda,\delta,\gamma)}
\rv^N\label{F2 pezzi piccoli1}\\
+C&\lv\int_0^t ds\rho_{t-s}\esp(0)\int_{\delta}^{s}\!
ds' \rho_{s-s'}^{\gamma}(0)K_{s',s'-\delta}\espp\rv^N+C
\lv\Psi_{(2)}^{(\delta,\gamma)}(t)\rv^N\label{F2 pezzi piccoli2}. 
\end{align}
\begin{remark}\label{cosa dimostri}
We will show that the terms in (\ref{F1 pezzi piccoli1}) and the first addend in (\ref{F1 pezzi piccoli2}) (hence also the addends in
(\ref{F2 pezzi piccoli1}) and the first addend in (\ref{F2 pezzi piccoli2})) are small for $\gamma\in(0,1)$ (see (\ref{Esup small A}), (\ref{Esup small B}) and (\ref{eqn:lemma3-1})). The reason why we need to iterate the equation for $X\espp$ and $\xi\esp$ an infinite number of times comes from $\Psi^{(\delta,\gamma)}\en$ (see (\ref{well pos of stoch int}) and (\ref{stima Psi-n})). We will in fact prove that
\begin{equation}\label{stima Psi-n}
\left(E\sup_{t\in[\delta,T]}\lv \Psi\en^{(\delta,\gamma)}(t)\rv^N\right)^{\frac{1}{N}}\leq C\delta^{n-(n+1)\gamma}.
\end{equation}
Also, we will show that (\ref{F_1 primo addendo}) is small 
when $\gamma\in(0,1/2)$ and  (\ref{F2 primo addendo}) is small for 
$\gamma\in(1/2,1)$, see (\ref{stima F1primo addendo}) and 
(\ref{stima F2 primo add}) .
\end{remark}
Let us now address the points mentioned in  Remark \ref{cosa dimostri}, in the same order in which we listed them.\\
For $p,q>1$ s.t. $p^{-1}+q^{-1}=1$ and $p\gamma<1$, we have
$$
\lv\int_0^{\delta}\!\!\!ds\rho_{t-s}^{\gamma}(0)\,K_{s,0}\espp\rv^N\leq C
\lv\int_0^{\delta}\frac{ds}{(t-s)^{p\gamma}}\rv^{\frac{N}{p}}
\lv\int_0^{\delta}ds \mid K_{s,0}\espp\mid^q\rv^{\frac{N}{q}}.
$$
Since $t\geq\delta$, 
$$
\int_0^{\delta}\frac{ds}{(t-s)^{p\gamma}}\leq
 \int_0^{\delta}\frac{ds}{(\delta-s)^{p\gamma}}=C \delta^{1-p\gamma},
$$

hence
$$
E\sup_{t\in[\delta,T]}\lv\int_0^{\delta}\!\!\!ds\rho_{t-s}^{\gamma}(0)\,K_{s,0}\espp\rv^N\leq C \delta^{\frac{1-p\gamma}{p}N}E\lv \int_0^{\delta}ds\mid K_{s,0}\espp\mid^q\rv^{\frac{N}{q}}
$$
$$
\leq C \delta^{\frac{1-p\gamma}{p}N}
\left(E\lv \int_0^{\delta}ds\mid K_{s,0}\espp\mid^q\rv^N\right)^{\frac{1}{q}}\!\!\!\!
\leq C \delta^{\frac{1-p\gamma}{p}N}\delta^{\frac{N}{q}}\sup_{s\in[0,T]}
\left(E\mid K_{s,0}\espp\mid^{Nq}\right)^{\frac{1}{q}},
$$
where in the last inequality we used Note \ref{note:scambi}.
If we choose $p=\frac{\gamma+1}{2\gamma}$ and $q=\frac{\gamma+1}{1-\gamma}$, by using (\ref{eqn:stima_su_K}) we get
\begin{equation}\label{Esup small A}
E\sup_{t\in[0,T]}\lv\int_0^{\delta}\!\!\!ds\rho_{t-s}^{\gamma}(0)\,K_{s,0}\espp\rv^N\leq C \delta^{1-\gamma}\lambda^{\frac{1}{2\gamma}-\zeta-1}, \quad\gamma\in(0,1).
\end{equation}

By the same sort of trick used to get (\ref{Esup small A}), we also get
$$
\lv\int_{\delta}^sds'\rho_{s-s'}\esp(0)\Gamma\esppp_{s'}\rv^N\leq C\sup_{s'\in[\delta,s]}
\left( E \lv\Gamma\esppp_{s'}\rv^{Nq}\right)^{\frac{1}{q}}.
$$
Therefore, using (\ref{eqn:lemma3-3}), we have
\begin{align}\label{Esup small B}
&E\sup_{s\in[\delta,T]}\lv\int_{\delta}^sds'\rho_{s-s'}\esp(0)\Gamma\esppp_{s'}\rv^N
\!\!
\leq  C\left[\lambda^{\frac{3}{4}}+
\lambda^{\frac{1}{2\gamma}-\frac{1}{4}}\right]e^{CT^{(n+1)(1-\gamma)}}
\qquad\qquad\quad\qquad\qquad\nonumber\\
+&\left[\lambda^{\frac{M}{8}}(T^{1-2\gamma}
\textbf{1}_{_{\left\{0<\gamma<1/2\right\}}}+
\delta^{1-2\gamma}\textbf{1}_{_{\left\{1/2<\gamma<1\right\}}})
\right]e^{CT^{(n+1)(1-\gamma)}},\quad\gamma\in(0,1).
\end{align} 
Notice that on the right hand side of the above equation, $n$ appears because $X\espp\en$ is contained in the definition of $\Gamma\esppp_s$, see (\ref{eqn:lemma3-3}), (\ref{varphi=varphi_n}) and the comment after it.\\
As for the first term in (\ref{F1 pezzi piccoli2}) (respectively, the first term in (\ref{F2 pezzi piccoli2})), we just use (\ref{eqn:lemma3-1}) in Lemma \ref{lemma:stime2}. In order to prove (\ref{stima Psi-n}), we show in some detail how the estimate for $\Psi_{(1)}\espp$ is obtained; the way one gets (\ref{stima Psi-n}) for $n\geq 1$ should then be obvious from the definition (\ref{Psi_n}) and using (\ref{def of K}). Recalling that we are assuming $t\geq\delta$, using (\ref{K2}) and exchanging the order of integration in the definition of 
$\Psi_{(1)}\espp$  we have
\begin{align}\label{pezzi Psi_2}
\Psi_{(1)}(t)\espp&= \int_0^{t-\delta} \!\!\!\!\!\!db(s)\int_{s+\delta}^tds'
\rho\esp_{t-s'}(0)\rho\esp_{s'-s}(0)-\int_0^t\!db(s)\int_{s}^t
\!\!\!\!ds'
\rho\esp_{t-s'}(0)\rho\esp_{s'-s}(0)\nonumber\\
&=-\int_0^{t-\delta} \!\!\!\!\!\!db(s)\int^{s+\delta}_s\!\!\!\!ds'
\rho\esp_{t-s'}(0)\rho\esp_{s'-s}(0)-\int_{t-\delta}^t 
\!\!\!\!\!db(s)\int_s^t ds'\rho\esp_{t-s'}(0)\rho\esp_{s'-s}(0).
\end{align}
Now we can estimate the two terms in (\ref{pezzi Psi_2})  separately.
In both cases we first make a further change of variables and then integrate by parts the stochastic integral. We show how to handle the first, for the second the procedure is the same:
$$
\lv \int_0^{t-\delta} \!\!\!\!\!\!db(s)\int^{s+\delta}_s\!\!\!\!ds'
\rho\esp_{t-s'}(0)\rho\esp_{s'-s}(0) \rv=
\lv \int_0^{t-\delta} \!\!\!\!\!\!db(s)\int_0^{\delta}du \rho\esp_{t-s-u}(0)
\rho_u\esp(0)\rv
$$
$$
\leq\lv b(t-\delta)\int_0^{\delta}du\rho_{\delta-u}\esp(0)\rho_u\esp(0)\rv
+\sup_{s\in[0,t-\delta]}\mid b(s)\mid
\lv \int_0^{t-\delta} \!\!\!\!\!ds\frac{\partial}{\partial s}
\int_0^{\delta}du \rho\esp_{t-s-u}(0)
\rho_u\esp(0)\rv
$$
$$
\leq\lv b(t-\delta)\int_0^{\delta}du\rho_{\delta-u}\esp(0)\rho_u\esp(0)\rv
+\sup_{s\in[0,t-\delta]}\mid b(s)\mid\lv\int_0^{\delta} du
\rho_{\delta-u}\esp(0)\rho_u\esp(0)-\int_0^{\delta}du\rho_{t-u}\esp(0)\rho_u\esp(0) \rv.
$$
Notice now that from (\ref{K2}), 
$$
\int_0^{\delta}du\rho_{\delta-u}\esp(0)\rho_u\esp(0)= C \delta^{1-2\gamma}
$$
and, since $t\geq \delta$,
$$
\int_0^{\delta}\rho_{t-u}\esp(0)\rho_u\esp(0)=
 C\int_0^{\delta}\frac{du}{(t-u)^{\gamma}u^{\gamma}}\leq C\int_0^{\delta}\frac{du}
 {(\delta-u)^{\gamma}u^{\gamma}}=\int_0^{\delta}du\rho_{\delta-u}\esp(0)\rho_u\esp(0).
$$
So, after dealing with the second term in (\ref{pezzi Psi_2})  in an analogous way, (\ref{stima Psi-n}) follows by using  (\ref{attesa sup b(t)}).

Let us now turn to (\ref{F_1 primo addendo}) and (\ref{F2 primo addendo}). Let $\beta>0$; then for (\ref{F_1 primo addendo}), applying the H\"older inequality, we have
$$
\left\vert\int_0^tds\left(P_{t,s}\espp-\rho_{t-s}^{\gamma}(0)\right) \, K_{s,0}\espp \right\vert ^N
$$
\begin{equation}\label{prob integrabilita}
\leq\!\! \sup_{0\leq s\leq t} \left\{\lv P_{t,s}\espp\!-\!\rho_{t-s}
\esp(0)\rv^N\!\!\! 
(t-s)^{\gamma(1+\beta)N}\!\right\}\! \lv\int_0^t \!\!\!ds \lv K_{s,0}\espp\rv^p \rv^{\frac{N}{p}}\!\!
\lv\int_0^t \!\!\!\frac{ds}{(t-s)^{\gamma q(1+\beta)}}\rv^{\frac{N}{q}}\!\!\!\!.
\end{equation}
Looking at the last integral in (\ref{prob integrabilita}), we need to impose the integrability condition $\beta<-1+1/\gamma$. Taking the supremum for $t\in [0,T]$,  the expectation of both sides,  using (\ref{eqn:stima_su_K}) and (\ref{eqn:Lemma 7-1}), we then obtain that for $\gamma\in(0,1/2)$ and for any $N\geq 1$,
\begin{equation}\label{stima F1primo addendo}
E\sup_{t \in[0,T]}\left\vert\int_0^tds\left(P_{t,s}\espp-\rho_{t-s}^{\gamma}(0)\right) \, K_{s,0}\espp \right\vert ^N\!\!
\leq C \lambda^{\frac{1}{\gamma}-2\zeta-\frac{3}{2}}e^{CT^{(n+1)(1-\gamma)}},
\end{equation} 
where we have chosen $\beta=1/2$ in (\ref{eqn:Lemma 7-1}). We can make such a choice for $\beta$ because when we study the difference in (\ref{F1-suo lim}), and hence (\ref{F_1 primo addendo}),  we take $\gamma\in(0,1/2)$, see Remark \ref{cosa dimostri}. When we consider (\ref{F2 primo addendo}), we can't mimic what we have done for (\ref{F_1 primo addendo}); in fact from (\ref{prob integrabilita}) we get that the left hand side of (\ref{stima F1primo addendo}) is bounded by 
$\lambda^{\beta+\frac{1}{\gamma}-\zeta-2}\exp(CT^{(n+1)(1-\gamma)})$.
When we impose the integrability condition $\beta<-1+1/\gamma$ and $\beta+\frac{1}{\gamma}-\zeta-2>0$, $\beta\in(0,1]$, we find that these two conditions together cannot be satisfied for all $\gamma\in(0,1)$ (actually they hold at most for $\gamma\in(0,2/3)$). So, when $\gamma\in(1/2,1)$ we need to do something else. 
\begin{equation}\label{bla1}
\left\vert \int_0^tds\int_0^s ds'\rho _{s-s'}^{\gamma}(0) \left( P_{t,s}\espp-\rho _{t-s}^{\gamma}(0)\right)\,
K_{s',0}\espp\rv^N
\end{equation}
$$
\leq C\left\vert \int_0^{t-\delta}ds\int_0^s ds'\rho _{s-s'}^{\gamma}(0) \lv P_{t,s}\espp-\rho _{t-s}^{\gamma}(0)\rv\,
 K_{s',0}\espp\rv^N
$$
$$
+ C\left\vert \int_{t-\delta}^tds\int_0^s ds'\rho _{s-s'}^{\gamma}(0) \lv P_{t,s}\espp-\rho _{t-s}^{\gamma}(0)\rv\,
 K_{s',0}\espp\rv^N
$$
$$
\leq C \sup_{s\in[0,T]}\lv\int_0^s \!\!\!ds'\rho _{s-s'}^{\gamma}(0)K_{s',0}\espp\rv^N
$$
\begin{equation}\label{bla2}
\cdot\left(   \sup_{t\in[\delta,T]}\left\vert \int_0^{t-\delta}\!\!\!\!\!ds\lv P_{t,s}\espp-\rho _{t-s}^{\gamma}(0)\rv+
\int_{t-\delta}^t\rho_{t-s}\esp(0)\rv^N
\right),
\end{equation}
where in the last inequality we have used (\ref{eqn:P-rho scalato}) and then (\ref{bound 2}).
By (\ref{eqn:lemma 3-9}) and (\ref{eqn:Lemma 7-2}), we then have
\begin{equation}\label{stima F2 primo add}
\!\!\!\!\! \left(\!E\!\!\sup_{t\in[0,T]}\left\vert \int_0^t\!\!\!\!ds\!\!\int_0^s \!\!\!\!ds'\rho_{_{s-s'}}^{^{\gamma}}(0)\! \lv P_{_{t,s}}\espp\!\!-\!\rho_{_{t-s}}^{^{\gamma}}(0)\!\rv\,
\!\!K_{_{s',0}}\espp\!\rv^N\right)^{\frac{1}{N}}\!\!\!\!\!\!
\leq C \lambda^{^{\frac{1}{2\gamma}-\zeta-1}}\!\left(\lambda^{^{\frac{1}{2\gamma}-\zeta}}\!\!\!
+\delta^{^{1-\gamma}}\right)
e^{^{CT^{(n+1)(1-\gamma)}}}\!\!\!.
\end{equation}
If in (\ref{stima Psi-n}), (\ref{Esup small A}), (\ref{Esup small B}) and (\ref{eqn:lemma3-1}) we choose  $\delta=\lambda$ and $M>0$, recalling (\ref{stima F1primo addendo}) we have that for $\gamma\in(0,1/2)$ and $\forall N\geq 1$,  $\exists b(\gamma)>0$ s.t. 
\begin{equation}\label{stima F1-suo lim}
\left(E\sup_{t\in[0,T]}\left\vert F_1\espp(t) - \int_0^t\!\!db(s) 
\K_{\gamma}^{*(2)}(t-s)\right\vert^N\right)^{\frac{1}{N}}\leq C \lambda^{b(\gamma)}
e^{CT^{2(1-\gamma)}}.
\end{equation}
Via (\ref{F_n2}) and (\ref{P_stimato_con_p_subdiff}), this implies that for $n\geq 1$, $\gamma\in(0,1/2)$ and $\forall N\geq 1$,  $\exists\, b(\gamma)>0$ s.t. 
\begin{equation}\label{stima Fn-suo lim <1/2}
\left(E\sup_{t\in[0,T]}\left\vert F_n\espp(t) - (-1)^{(n+1)}\int_0^t
\!\!db(s) 
\K_{\gamma}^{*(n+1)}(t-s)
\right\vert^N\right)^{\frac{1}{N}}\!\!\!\leq C \lambda^{b(\gamma)}
e^{CT^{2(1-\gamma)}}.
\end{equation}
On the other hand, if in 
(\ref{stima Psi-n}), (\ref{Esup small A}), (\ref{Esup small B}) and (\ref{eqn:lemma3-1}) we chose   $\delta=\lambda^a$,
 with $a=\frac{2\gamma-1}{2\gamma(1-\gamma)}$,
and  $M>\frac{4(2\gamma-1)^2}{\gamma(1-\gamma)}$, 
recalling (\ref{stima F2 primo add}), we find that 
  $\forall n\geq 2$, $1/2<\gamma<\frac{n}{n+1}$ and $N\geq 1$, $\exists\, l(\gamma)>0$ s.t. 
\begin{equation}\label{stima Fn-suo lim>1/2}
\!\left(\!E\sup_{t\in[0,T]}\!
\left\vert F_n\espp(t)\!- \!(\!-1)^{(n+1)} \!\!\!\int_0^t\!\!\! db(s)\K^{*(n+1)}(t-s)\right\vert^N
\right)^{\frac{1}{N}}\!\!\!\!\leq C \lambda^{l(\gamma)}
e^{CT^{(n+1)(1-\gamma)}}.
\end{equation}
\begin{note}\label{note}
We want to stress that the above estimate (\ref{stima Fn-suo lim>1/2}) is needed only for $n\geq 2$ and $1/2<\gamma<\frac{n}{n+1}$, whereas (\ref{stima Fn-suo lim <1/2}) is valid for any $n\geq 1$ and  $\gamma\in(0,1/2)$. In other words we will not need an estimate on 
$\left\vert F_1\espp(t) - \int_0^t\!\!db(s) \K_{\gamma}^{*(2)}(t-s)\right\vert$ for $\gamma>1/2$. 
\end{note} 
Set now 
$$
\Psi_{(0)}(t)^{(\delta,\gamma)}:=\int_0^t ds b(s)\rho_{t-s}\esp(0)-\int_{\delta}^t ds \int_0^{s-\delta}db(s')
\rho_{s-s'}\esp(0), 
$$
then
\begin{eqnarray*}
& &\left\vert F_0\espp+\int_0^t ds b(s)\rho _{t-s}^{\gamma}(0) \right\vert^N\nonumber\\
&\leq&\!\!\!\!C\left\vert \int_0^{\delta} ds K_{s,0}^{(\lambda)}\right\vert^N
\!\!\! + C\left\vert \int_{\delta}^t ds \Gamma_s^{(\lambda,\delta,\gamma)}\right\vert^N
\!\!\!+C\left\vert\int_{\delta}^t ds K_{s,s-\delta}\espp \right\vert^N
 \!\!\!+C\lv\Psi_{(0)}(t)^{(\delta,\gamma)}\rv^N\!.\nonumber
\end{eqnarray*}
It is easy to prove that 
$$
\left(E\sup_{t\in[\delta,T]}\left\vert\Psi_{(0)}(t)^{(\delta,\gamma)}\rv^N\right)^{\frac{1}{N}}
\leq C\delta^{1/2}.
$$
So by (\ref{eqn:stima_su_K}), (\ref{eqn:lemma3-3}) and (\ref{eqn:lemma3-2}), by choosing again $\delta=\lambda^a,$
 $a=\!\textbf{1}_{_{\left\{0<\gamma<1/2\right\}}}+
 \frac{2\gamma-1}{2\gamma(1-\gamma)}\textbf{1}_{_{\left\{1/2<\gamma<1\right\}}}$
 and $M> 0\cdot\textbf{1}_{_{\left\{0<\gamma<1/2\right\}}}+
 \frac{4(2\gamma-1)^2}{\gamma(1-\gamma)} 
 \textbf{1}_{_{\left\{1/2<\gamma<1\right\}}}$ ,we get that $\forall n\geq1$, $0<\gamma<\frac{n}{n+1}$ 
and $\forall N\geq 1$, $\exists\, m(\gamma)>0$ s.t. 
\begin{equation}\label{stima F0-suo lim}
\!\!\!\left(\!E\sup_{t\in[0,T]}
\left\vert F_0\espp(t)-\int_0^t ds b(s)\rho_{t-s}\esp(0)\rv^N\right)^{\frac{1}{N}}
\leq C \lambda^{m(\gamma)}e^{CT^{(n+1)(1-\gamma)}}.\quad
\end{equation}
We will also need the following estimate:
\begin{equation}\label{stima int Psi_n}
\left(E\sup_{t\in[\delta,T]}\lv\int_0^t ds\Psi\en^{(\delta,\gamma)}(s)\rv^N\right)^{\frac{1}{N}}\leq C\delta^{(n+1)(1-\gamma)}.
\end{equation}
This inequality can be worked out with calculations analogous to those needed to obtain (\ref{stima Psi-n}), hence we omit them; roughly speaking, looking at (\ref{stima Psi-n}), (\ref{stima int Psi_n}) is correct thanks to the further integration. Also, it is what one would expect in view of the fact that $\int_0^t ds b(s) \K^{*(n+1)}(t-s)$ is defined for any 
$\gamma \in (0,1)$, as opposed to $\int_0^t db(s) \K^{*(n+1)}(t-s)$. With this remark in mind, it is easily seen that, with  the same steps that lead to an estimate on  $\left\vert F_n\espp(t)- (-1)^{(n+1)} \int_0^t db(s)\K^{*(n+1)}(t-s)\right\vert$, using this time (\ref{stima F2 primo add}) and (\ref{stima int Psi_n}), we  have that $\forall n\geq1$, $\gamma\in\left(0,\frac{n}{n+1}\right)$  and $\forall N\geq 1$, $\exists \tau=\tau(\gamma,N)>0$ s.t.
\begin{equation}\label{stima int Fn-suo lim}
\lim_{\lambda\rightarrow 0}E\!\!\sup_{t\leq \tau\lv \log\lambda\rv^{\frac{1}{(n+1)(1-\gamma)}}}
\left\vert \int_0^t\!\!F_n\espp(s)- (-1)^{(n+1)}\!\! \int_0^t \!ds\,b(s)\K^{*(n+1)}(t-s)\right\vert^N=0.
\end{equation} 
The last ingredient that we will need in order to conclude is the following estimate: $\forall n\geq1$, $\gamma\in(0,\frac{n}{n+1})$ and $\forall N\geq 1$, $\exists \,d(\gamma)>0$ s.t.
\begin{equation}\label{Ydot p-rho}
\!\!\left(E\sup_{t\in[0,T]}\lv\int_0^t\!\!\!ds\,\dot{Y}\en\espp(s)\left[P_{t,s}^{\ast(n+1)}
-\K^{\ast(n+1)}(t-s)\right]\rv^N\right)^{\frac{1}{N}}\!\!\!\!\leq C \lambda^{d(\gamma)}
e^{CT^{(n+1)(1-\gamma)}},
\end{equation}
which is  obtained by combining (\ref{eqn:Lemma 7-1}) and (\ref{attesa del sup Ypunto}) when $n=1$; when $n\geq 2$, we act like in (\ref{bla1})-(\ref{bla2})  and  then use (\ref{eqn:Lemma 7-2}) and (\ref{attesa del sup Ypunto}).\\
From the definition of  $R\en\espp$ given in  (\ref{eqn:R_n lambda}), using  (\ref{stima Fn-suo lim <1/2}), (\ref{stima Fn-suo lim>1/2}) and (\ref{Ydot p-rho}), it is straightforward to see that $\exists \,\tilde{d}(\gamma)>0$ s.t.
\begin{equation}\label{stima R-n}
\left(E\sup_{t\in[0,T]}\lv R\en\espp(t)\rv^N\right)^{\frac{1}{N}}\leq C 
\lambda^{\tilde{d}(\gamma)}
e^{CT^{(n+1)(1-\gamma)}},
\end{equation}
for any $n\geq1$, $\gamma\in\left(0, \frac{n}{n+1}\right)$ and $N\geq1$. Hence,  the  Gronwall Lemma applied to  (\ref{eqn:gronwall_su_Ypunto-eta}), gives  that $\forall n\geq 1$, $\gamma\in\left(0, \frac{n}{n+1}\right)$ and $N\geq1$, $\exists \tau=\tau(\gamma,N)>0$ s.t.
\begin{equation}\label{stima Ypunto-eta}
\lim_{\lambda\rightarrow0}   E\sup_{t\leq\tau \mid\ln\lambda\mid^{\frac{1}{(n+1)(1-\gamma)}}}
\lv \left(\dot{Y}_{\lambda}^{\gamma}-\eta^{\gamma}\right)(t)\rv^N =0\mbox{.}
\end{equation}
Finally, looking at (\ref{somma d}), (\ref{somma2}), (\ref{somma f}), thanks to  (\ref{stima F0-suo lim}), (\ref{stima int Fn-suo lim}) and (\ref{stima Ypunto-eta}), Theorem \ref{main_result} is proven.


\section{Proof of Theorem \ref{thm:theorem 2}}\label{proof of theorem 2}
In the diffusive case, the integral equation ($\ref{eqn_integrale_xi_diffusivo}$) is explicitly solvable. To our knowledge, ($\ref{eqn_integrale_xi_anomalo}$) cannot be solved for $\gamma\neq\frac{1}{2}$. However, considering the associated Green function, that is, the solution of 
\begin{equation}\label{eqn_integrale_per_greenfunction_anomala}
F^{\gamma}(t)=1-\int_0^t ds \rho_{t-s}^{\gamma}(0)F^{\gamma}(s),\quad 0<\gamma<1,
\end{equation}
one gets
\begin{equation}\label{legame_tra_xi_e_F}
\xi^{\gamma}(t)=\int_0^t db(s)F^{\gamma}(t-s),\quad 0<\gamma<1 \mbox{.}
\end{equation}
Notice that the theory of Volterra integral equations for kernels with bounded iterates implies that the solution to 
(\ref{eqn_integrale_per_greenfunction_anomala}) 
is unique, as commented at the beginning of Section \ref{lemmata}, after the statement of Lemma \ref{lemma:equazioni_integrali}. 
\begin{lemma}\label{thm:asintotica_per_F}
For any $0<\gamma<1$, the following holds:
\begin{equation}\label{asintotica_di_F}
\lim_{t\rightarrow\infty}t^{1-\gamma}F^{\gamma}(t)=\frac{\sin(\pi\gamma)}{\pi c(\gamma)}\mbox{,}
\end{equation}
where $c(\gamma)$ is defined in (\ref{def of c(gamma)}).
\end{lemma}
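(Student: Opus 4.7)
My plan is to proceed via Laplace transforms and a Tauberian argument. First I would take the Laplace transform of equation (\ref{eqn_integrale_per_greenfunction_anomala}). Using that the Volterra convolution turns into an ordinary product under $\mathcal{L}$, that $\mathcal{L}\{1\}(\mu)=1/\mu$, and that from (\ref{rho centrata}) one has $\mathcal{L}\{\rho^{\gamma}_{\cdot}(0)\}(\mu)=c(\gamma)\int_0^{\infty}e^{-\mu t}t^{-\gamma}\,dt=c(\gamma)\Gamma(1-\gamma)\mu^{\gamma-1}$, the equation becomes
$$F^{\#}(\mu)\bigl(1+c(\gamma)\Gamma(1-\gamma)\mu^{\gamma-1}\bigr)=\frac{1}{\mu},$$
so that
$$F^{\#}(\mu)=\frac{1}{\mu+c(\gamma)\Gamma(1-\gamma)\mu^{\gamma}}.$$
Since $\gamma\in(0,1)$, the term $\mu^{\gamma}$ dominates $\mu$ as $\mu\to0^{+}$, which gives the Abelian asymptotic
$$F^{\#}(\mu)\sim\frac{1}{c(\gamma)\Gamma(1-\gamma)}\,\mu^{-\gamma},\qquad \mu\to 0^{+}.$$

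Next I would invert this asymptotic by a Karamata-type Tauberian theorem. Applied to the primitive $U(t):=\int_0^{t}F^{\gamma}(s)\,ds$, whose Laplace transform is $F^{\#}(\mu)/\mu\sim (c(\gamma)\Gamma(1-\gamma))^{-1}\mu^{-1-\gamma}$, Karamata yields
$$U(t)\sim\frac{t^{\gamma}}{c(\gamma)\Gamma(1-\gamma)\Gamma(\gamma+1)}\qquad\text{as }t\to\infty,$$
and, provided one can differentiate this asymptotic (which is legitimate once $F^{\gamma}$ is known to be ultimately monotone), one obtains
$$F^{\gamma}(t)\sim\frac{t^{\gamma-1}}{c(\gamma)\Gamma(1-\gamma)\Gamma(\gamma)}.$$
Finally, the Euler reflection formula $\Gamma(\gamma)\Gamma(1-\gamma)=\pi/\sin(\pi\gamma)$ turns the prefactor into $\sin(\pi\gamma)/(\pi c(\gamma))$, which is precisely (\ref{asintotica_di_F}).

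The step I expect to be delicate is the Tauberian inversion, since the classical theorem requires monotonicity of $U$, hence positivity of $F^{\gamma}$. This can be established by recognising $F^{\#}(\mu)$ above as the Laplace transform of a Mittag--Leffler function of argument $-c(\gamma)\Gamma(1-\gamma)t^{1-\gamma}$, which is completely monotone on $[0,\infty)$ for $1-\gamma\in(0,1)$; equivalently, one can apply a Tauberian statement for eventually monotone functions (Feller, XIII.5) directly to $F^{\gamma}$, bypassing $U$. The remaining algebra — the Laplace computation, the leading-order expansion of $(\mu+c(\gamma)\Gamma(1-\gamma)\mu^{\gamma})^{-1}$, and the collapse of the constants via the reflection formula — is routine.
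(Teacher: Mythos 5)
Your argument follows essentially the same route as the paper: Laplace-transform the integral equation to get $F^{\#}(\mu)=\mu^{-\gamma}\big(\mu^{1-\gamma}+c(\gamma)\Gamma(1-\gamma)\big)^{-1}$, apply a Karamata--Feller Tauberian theorem at $\mu\to0^{+}$ (the monotonicity hypothesis being the only non-algebraic step), and close with the reflection formula. The one place you diverge is the monotonicity justification: you identify $F^{\gamma}(t)=E_{1-\gamma}\!\big(-c(\gamma)\Gamma(1-\gamma)t^{1-\gamma}\big)$ and invoke Pollard's complete monotonicity of the Mittag--Leffler function, whereas the paper works directly on the Laplace side, showing that $\Phi^{\#}(\mu)=1-\mu F^{\#}(\mu)=\frac{c(\gamma)\Gamma(1-\gamma)}{\mu^{1-\gamma}+c(\gamma)\Gamma(1-\gamma)}$ is completely monotone (as a completely monotone function composed with a Bernstein function), hence $dF^{\gamma}$ is a negative measure; both routes establish the same fact and are equally valid, but the paper's avoids having to invoke properties of Mittag--Leffler functions by name.
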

\begin{remark}
Since $c(1/2)=(2\pi)^{-1/2}$,  Lemma \ref{thm:asintotica_per_F}  is an extension of Theorem 2.2 in \cite{articolo}. When $\gamma=1/2$, it provides an alternative proof of such a theorem.
\end{remark}
\begin{proof}[Proof of Lemma \ref{thm:asintotica_per_F}]
By taking the Laplace transform of ($\ref{eqn_integrale_per_greenfunction_anomala}$) we obtain that  
the Green \hyphenation{function} 
function $F^{\gamma}$ has Laplace transform
\begin{equation}\label{trasf_di_Laplace_di_F}
(F^{\gamma})^{\#}(\mu)=\frac{\mu^{-\gamma}}{\mu^{1-\gamma}+c(\gamma)\Gamma(1-\gamma)}\mbox{.}
\end{equation}
Provided that $F^{\gamma}(t)$ is monotone decreasing, the Tauberian Theorem for densities (see e.g. \cite{Feller}) gives
$$
\lim_{t\rightarrow\infty}t^{1-\gamma}F^{\gamma}(t)=\frac{1}{\Gamma(\gamma)}
\lim_{\mu\rightarrow 0}\mu^{\gamma}(F^{\gamma})^{\#}(\mu)\mbox{.}
$$
Therefore the only thing we need to show is that $F^{\gamma}(t)$ is monotone decreasing. We recall that a function is completely monotone if and only if its even \hyphenation{derivatives}
derivatives are positive and the odd ones are negative. Furthermore, a function is the Laplace transform of a positive measure if and only if it is completely monotone (see again \cite{Feller}).
We think of $dF^{\gamma}(t)$ as a (a priori signed) measure on $\mathbb{R}_+$ and introduce
$$
\Phi^{\#}(\mu):=-\int_0^{\infty}e^{-\mu t}dF^{\gamma}(t)=1-\mu(F^{\gamma})^{\#}(\mu)\mbox{.}
$$
By ($\ref{trasf_di_Laplace_di_F}$) we have 
$$
\Phi^{\#}(\mu)=\frac{c(\gamma)\Gamma(1-\gamma)}{\mu^{1-\gamma}+c(\gamma)\Gamma(1-\gamma)}\mbox{.}
$$
The function $(0,\infty)\ni\mu\longrightarrow\mu^{1-\gamma}$ is positive and has completely monotone derivatives. For $A>0$ the function 
$(0,\infty)\ni x\longrightarrow A (A+x)^{-1}$ is completely monotone. Hence (see \cite{Feller}), the function $\Phi^{\#}(\mu)$ is completely monotone and we are done.
\end{proof}
\begin{proof} [Proof of Theorem \ref{thm:theorem 2}]
By (\ref{legame_tra_xi_e_F}) we get
$$
E\left[\xi\esp(t)\right]^2=\int_0^t (F^{\gamma}(s))^2 ds \mbox{,}
$$
so (\ref{asintotica_di_xi_subdiffusivo}) is straightforward.
In order to prove the invariance principle in Theorem 1, we first need to prove tightness of the process $\xi^{\gamma}_{\epsilon}(t)$. From ($\ref{legame_tra_xi_e_F}$) and ($\ref{asintotica_di_F}$) few computations show that for each $\gamma\in \left(\frac{1}{2},1\right)$ there exists a constant $C=C(\gamma)$ such that
$$
\lim_{\epsilon\rightarrow 0}E(\xi^{\gamma}_{\epsilon}(t)-\xi^{\gamma}_{\epsilon}(s))^2\leq C(t-s)^{2\gamma-1}\mbox{.}
$$
Since $\xi^{\gamma}_{\epsilon}$ is a Gaussian process, we can first obtain a bound on the higher \hyphenation{moments}
moments, thus getting tightness from the Kolmogorov's criterion.\hyphenation{convergence}
Finally, the convergence of the finite dimensional distributions follows from the convergence of the covariance, deduced from ($\ref{legame_tra_xi_e_F}$) and ($\ref{asintotica_di_F}$).
\end{proof}


\subsection*{Acknowledgments}
I am very grateful to  P. Butt\`a, who was my supervisor when I was working on the content of this paper. I thank the thoughtful referees for many comments that helped improving the paper. I am also grateful to La Sapienza University, that funded part of this work.

\appendix
\section{Existence and uniqueness}\label{app:exist uniq}
In this section we sketch the proof of existence, uniqueness and continuity of the solution of the system (\ref{sistema_Duhamel}).
\begin{teorema}
Let $B$ be the Banach space of vectors $(X,h)\in \mathbb{R}\times L^2(\R)$ with the norm
$$
\parallel(X,h)\parallel_B:=\sqrt{\mid X\mid ^2 + \parallel h\parallel_2^2}\,\mbox{.}
$$
Let us consider the following Cauchy problem with initial datum 
$(X_0,h_0)\in B$
\begin{align}\label{sistema_unicita}
\left\{
\begin{array}{l}
\displaystyle
X(t)=X_0 +b(t)+\int_0^tds\,\Upsilon(X(s),h(s))\\
\\
\displaystyle
h(t)=\rho_t^{\gamma} h_0 - 
\int_0^t\, db(s) \rho_{t-s}^{\gamma}\varphi_{X(s)}-\int_0^t ds\, 
\Upsilon(X(s),h(s))\rho_{t-s}^{\gamma}\varphi_{X(s)}\mbox{,}
\\
\end{array}
\right.
\end{align}
where $\Upsilon\mbox{:}B\rightarrow \R$ is bounded and globally Lipschitz; recall that $\varphi$ is a probability density in the Schwartz class of test functions and $\varphi_{X}=\varphi(x-X).$\\
Then for any $(X_0,h_0)\in B$  there exists a unique solution to (\ref{sistema_unicita}); such a solution, $(X(t),h(t))$, belongs to $\mathcal{C}(\mathbb{R}_+;B)$ and is such that
\begin{equation}\label{A.4}
\sup_{t\in[0,T]}E\|(X(t),h(t))\|_B^2<\infty\qquad\forall T>0\mbox{.}
\end{equation}
Uniqueness holds in the following sense: if $(\bar{X}(t),\bar{h}(t))$ is another continuous solution satisfying (\ref{A.4}), then 
\begin{equation*}
P\left(\sup_{t\in[0,T]}\|(X(t),h(t))-(\bar{X}(t),\bar{h}(t)) \|_B^2=0\right)=1
\qquad\forall T>0\mbox{.}
\end{equation*}
\end{teorema}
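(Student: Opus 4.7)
The plan is a classical Picard iteration in a space of adapted stochastic processes with finite second moment, together with a Gronwall-type bound for uniqueness. The key simplification is that $\Upsilon$ is bounded and globally Lipschitz, $\varphi$ is Schwartz, and $\rho_t\esp$ is a probability density in the space variable, so Young's convolution inequality yields $\|\rho_{t-s}\esp\varphi_{X(s)}\|_2 \le \|\varphi\|_2$ uniformly in $X(s)$. This makes the stochastic convolution well defined in $L^2(\R)$ without further hypotheses.

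Step 1 (setup and self-mapping). Fix $T>0$ and let $\mathcal{M}_T$ be the Banach space of $\mathcal{F}_t$-adapted, pathwise continuous processes $(X,h)\colon[0,T]\to B$ endowed with
$$\|(X,h)\|_{\mathcal{M}_T}^2 := \sup_{t\in[0,T]} E\|(X(t),h(t))\|_B^2.$$
Let $\Phi$ be the map sending $(X,h)$ to the right hand side of (\ref{sistema_unicita}). Using boundedness of $\Upsilon$, the bound above for $\|\rho_{t-s}\esp\varphi_{X(s)}\|_2$, and It\^o's isometry for the stochastic integral, one checks that $\Phi$ maps $\mathcal{M}_T$ into itself.

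Step 2 (contraction). For $(X,h),(\bar X,\bar h)\in\mathcal{M}_T$, set $(\delta X,\delta h):=(X-\bar X,h-\bar h)$. Combining the Lipschitz property of $\Upsilon$, the pointwise bound $\|\varphi_{X(s)}-\varphi_{\bar X(s)}\|_2 \le \|\varphi'\|_2 |X(s)-\bar X(s)|$ (cf.\ (\ref{varphib-varphia})), It\^o's isometry for the stochastic term, and Young's inequality for the deterministic term gives
$$\sup_{r\in[0,t]}E\|(\Phi(X,h))(r)-(\Phi(\bar X,\bar h))(r)\|_B^2 \le C(T)\int_0^t\sup_{r\in[0,s]}E\|(\delta X(r),\delta h(r))\|_B^2\,ds.$$
Iterating shows some power of $\Phi$ is a strict contraction on $\mathcal{M}_T$ for every $T$, and Banach's theorem produces a unique fixed point. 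Concatenation in time extends the solution to all of $\R_+$.

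Step 3 (path continuity and uniqueness). Continuity of $t\mapsto b(t)$ and of $t\mapsto\rho_t\esp h_0$ in $L^2(\R)$ are clear, and the deterministic integral is continuous by dominated convergence and boundedness of $\Upsilon$. For the stochastic convolution $H(t):=\int_0^t db(s)\rho_{t-s}\esp\varphi_{X(s)}$ I would apply Kolmogorov's criterion after estimating $E\|H(t)-H(s)\|_2^N$ via Burkholder-Davis-Gundy, together with $L^2$-moment bounds on the increments $\|(\rho_{t-u}\esp-\rho_{s-u}\esp)\varphi_{X(u)}\|_2$ derived from the scaling property (\ref{scaling_property}) and the regularity of $\rho_1\esp$. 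Uniqueness among solutions satisfying (\ref{A.4}) follows from the same Gronwall estimate on the difference of two solutions; pathwise continuity then upgrades the $L^2(\Omega)$-vanishing to the stated $P$-a.s.\ statement.

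The main obstacle I expect is not the algebraic form of the contraction (standard given the hypotheses on $\Upsilon$) but the path continuity of $H(t)$ in $L^2(\R)$: because $X(s)$ enters nonlinearly through $\varphi_{X(s)}$, a direct factorization is unavailable and one needs enough H\"older regularity of $\rho\esp$ in the time variable to run Kolmogorov's criterion. The moment machinery of Section \ref{lemmata}, and in particular Lemma \ref{lemma_tecnico}, provides what is needed.
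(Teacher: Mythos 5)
Your proposal is correct in outline and follows the same architecture as the paper's Appendix A: Picard iteration with Gronwall-type bounds for existence and uniqueness, and Kolmogorov's criterion combined with the Burkholder inequality for path continuity; the contractivity $\|\rho_{t-s}^{\gamma}\varphi_{X(s)}\|_2\le\|\varphi\|_2$ and the Lipschitz bound (\ref{varphib-varphia}) play exactly the role you assign them. The differences lie in how the continuity step is executed, which is where essentially all of the paper's work goes. The paper does not use a Hilbert-space-valued Burkholder inequality: it works coordinate-wise with the auxiliary weight $\psi(x)=\sqrt{1+\vert x\vert}$, via $\|f\|_2^{2N}\le\|\psi^{-2}\|_{N/(N-1)}^{N}\|f\psi\|_{2N}^{2N}$, and estimates weighted $L^{2N}$ norms of the kernel increments. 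In the superdiffusive case these increments are controlled, as in your sketch, by the scaling property (\ref{scaling_property}) together with (\ref{varphib-varphia}); in the subdiffusive case the paper instead passes through the evolution equation itself, rewriting the time derivative of $\rho^{\gamma}_{\tau}\varphi$ as a fractional derivative of $\rho^{\gamma}_u\varphi''$, and needs moments of order $2N$ with $N=N(\gamma)$ satisfying $4N\gamma>1$ and $2N(1-2\gamma)>1$.

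Your route can be made to work, but one quantitative point that you leave implicit is essential: by translation invariance, $\|(\rho^{\gamma}_{t-u}-\rho^{\gamma}_{s-u})\varphi_{X(u)}\|_2\le C(t-s)^{\gamma}$ uniformly in $X(u)$, so Burkholder gives $E\|H(t)-H(s)\|_2^{2N}\le C(T)(t-s)^{2N\gamma}$ (plus better terms), and Kolmogorov's criterion requires $2N\gamma>1$; hence in the subdiffusive regime the moment order must be chosen $\gamma$-dependent, $N>1/(2\gamma)$ (for $\gamma\le 1/4$ fourth moments are not enough). This $\gamma$-dependence is precisely what forces the paper's case distinction between $\gamma<1/2$ and $\gamma>1/2$, and your write-up should state it. Two smaller corrections: Lemma \ref{lemma_tecnico} is not the tool that supplies these estimates (it bounds scalar pairings $\langle\psi,\int db\,\rho^{\gamma}v\rangle$, not $L^2(\mathbb{R})$-valued increments, and the appendix does not use it), so the closing appeal to it should be replaced by the increment estimates you describe; and in Step 1 you should either drop pathwise continuity from the definition of $\mathcal{M}_T$ and prove continuity a posteriori for the fixed point (as the paper does), or acknowledge that the self-mapping property already needs the Step 3 estimates. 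Finally, continuity of $t\mapsto\rho_t^{\gamma}h_0$ in $L^2(\mathbb{R})$ at $t=0$ for $h_0$ merely in $L^2$ is an approximate-identity statement which the paper proves explicitly rather than takes as clear.
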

\begin{proof}
We prove existence by Picard iterations, uniqueness by  the Gronwall Lemma and continuity by using Kolmogorov's criterion. For the time being  $\rho_t^{\gamma}$  is either (\ref{espressione_esplicita_per_rho}) or (\ref{Espressione_esplicita_per_rho_laplaciano}), so $\gamma\in(0,1)$. \\
\textit{Existence}: construct the sequence 
$\{(X^{(n)}(t),h^{(n)}(t))\}$ such that $(X^{(0)}_t,h^{(0)}_t)=(X_0,\rho_t^{\gamma} h_0)$ and, for $n\geq1$,
\begin{equation*} 
\left\{
\begin{array}{l}
\displaystyle
X^{(n)}(t)=X_0 +b(t)+\int_0^tds\,\Upsilon(X^{(n-1)}(s),h^{(n-1)}(s))\\
\\
\displaystyle
h^{(n)}(t)=\rho_t h_0 - \int_0^t \!\!db(s) \rho_{t-s}^{\gamma}\varphi_{X^{(n-1)}(s)}-\!
\int_0^t \!\! ds\rho^{\gamma}_{t-s} \beta(X^{(n-1)}(s),h^{(n-1)}(s)),
\end{array}
\right.
\end{equation*}
where we set $\beta(X,h):=\Upsilon(X,h)\varphi_X$; notice that for a suitable constant $K>1$ we have
\begin{equation*} 
\mid\Upsilon(X,h)\mid^2+\|\beta(X,h)\|_2^2+\|\varphi_X\|_2^2\leq K
\end{equation*}
\begin{equation*} 
\lv\Upsilon(X,h)-\Upsilon(Y,g)\rv+\|\beta(X,h)-\beta(Y,g)\|_2
+\|\varphi_X-\varphi_Y\|_2\leq K\|(X,h)-(Y,g)\|_B\mbox{,}
 \end{equation*}
for any $(X,h)$ and $(Y,g)$ in $B$. Hence
$$
E\|(X^{(1)}(t), h^{(1)}(t))-( X^{(0)}(t), h^{(0)}(t) ) \|_B^2\leq 2 K^2\left(t+t^2\right)\mbox{;}
$$
moreover, by the Cauchy-Schwarz inequality,
\begin{equation*}
E\left\vert X^{^{(n+1)}}\!(t)\!-\!X^{^{(n)}}\!(t)\right\vert^2\!\!\leq t\!\!\int_0^t\!\!\! ds E\left\vert\Upsilon( X^{^{(n)}}\!(s), h^{^{(n)}}\!(s))\!-\! \Upsilon( X^{^{(n-1)}}\!(s), h^{^{(n-1)}}\!(s))\right\vert^2,
\end{equation*}
for $n\geq1$. Similarly,
$$
E\|h^{(n+1)}(t)-h^{(n)}(t)\|_2^2\leq 2E\int_0^t ds \left\|
\rho_{t-s}^{\gamma}[\varphi_{X^{(n)}(s)}-\varphi_{X^{(n-1)}(s)}] \right\|_2^2
$$
$$
+2t E\int_0^t ds \left\|\rho_{t-s}^{\gamma}[\Upsilon(X^{(n)}(s),h^{(n)}(s))-
\Upsilon(X^{(n-1)}(s),h^{(n-1)}(s))] \right\|_2^2\mbox{.}
$$
Being $\rho^{\gamma}_t$ a probability density, and  because $\|\rho^{\gamma}_t\varphi\|_2\leq\|\rho^{\gamma}_t\|_1\|\varphi\|_2$,
$\rho_t^{\gamma}$ is contractive on $L^2(\mathbb{R})$; therefore 
\begin{align*}
E\|(X^{(n+1)}&(t), \,h^{(n+1)}(t))-( X^{(n)}(t), h^{(n)}(t) )\|_B^2\\
&\leq 2K^2 (1+t) \int_0^t \!\!\!ds E\|(X^{(n)}(s), h^{(n)}(s))-( X^{(n-1)}(s), h^{(n-1)}(s) ) \|_B^2. 
\end{align*}
Iterating we end up with
$$
E\|(X^{(n+1)}(t), h^{(n+1)}(t))-( X^{(n)}(t), h^{(n)}(t) ) \|_B^2 \leq\frac{[2K^2(t+t^2)]^{n+1}}
{n!}\mbox{,}
$$
which gives uniform convergence on compacts $[0,T]$ of the sequence 
$\left(X^{(n)}(t),h^{(n)}(t)\right)$ to a limiting process, $\left(X(t),h(t)\right)$.  Such a process is therefore an $\mathcal{F}_t$-adapted  solution to \eqref{sistema_unicita}.\\
\textit{Uniqueness}: by what we have done so far, it is  clear that one can find a \hyphenation{suitable} suitable $c(t)$ uniformly bounded on compacts such that if $\left(\bar{X}(t),\bar{h}(t)\right)$ is another solution, then
$$
E\|(X(t),h(t))-(\bar{X}(t),\bar{h}(t))\|_B^2\leq c(t)
\int_0^t ds E\|(X(t),h(t))-(\bar{X}(t),\bar{h}(t))\|_B^2\mbox{,}
$$ 
hence uniqueness follows by the Gronwall Lemma; ($\ref{A.4}$) is then 
a consequence of continuity, which we are going to prove.\\
\textit{Continuity}: being $b(t)$ a.s. continuous and $\beta(X,h)$ bounded, $X(t)$ is a.s. continuous. In order to prove continuity for $h(t)$ we first need to prove that for any $g\in L^2(\mathbb{R})$
$$
\lim_{t\rightarrow 0}\| \rho_t^{\gamma}g-g\|_2=0\mbox{.}
$$
In fact, using the scaling property of the kernel and the Jensen inequality (weighted version), we get 
$$
\| \rho_t^{\gamma}g-g\|_2^2=
\int_\R dx \left[\int_\R dw \rho_1^{\gamma}(w) 
\left( g(x-wt^{\gamma})-g(x) \right) \right]^2
$$
$$
\leq \int_\R dx \int_\R dw \rho_1^{\gamma}(w)\,\left(g(x-wt^{\gamma})-g(x) \right)^2
$$
$$
=\int_\R dw \rho_1^{\gamma}(w)\,\| T_{wt^{\gamma}}g-g\|_2^2
$$
where $T_{\tau}$, $\tau\in\,\R$, is the translation $(T_{\tau}g)(x)=g(x-\tau)$.
Let us study the integrand:
$$
\| T_{\tau}g-g\|_2^2= C \|\widehat{T_{\tau}g}-\hat{g} \|_2^2=\int_\R d\xi\mid e^{-i\xi\tau}\,\hat{g}(\xi)-\hat{g}(\xi)\mid^2
$$
$
\Rightarrow\lim_{t\rightarrow 0}\| T_{wt^{\gamma}}g-g\|_2^2=0  
$ for a.e. $w$ and  
$$
\rho_1^{\gamma}(w)\,\| T_{wt^{\gamma}}g-g\|_2^2 \leq C \rho_1^{\gamma}(w)\|g\|_2^2\in L^1(\R),
$$
so we can apply the dominated convergence theorem and conclude.\\
We are left with the continuity of  
$k(t):=h(t)-\rho_t^{\gamma} h_0$. 
\begin{eqnarray*}
-k(t+\delta)+k(t)&=&\int_0^t db(s)\,(\rho^{\gamma}_{t+\delta-s}-\rho^{\gamma}_{t-s})\varphi_{X(s)}\nonumber\\
&+& \int_t^{t+\delta} db(s)\,\rho^{\gamma}_{t+\delta-s}\varphi_{X(s)}\nonumber\\
&+& \int_0^t ds\,\Upsilon(X(s),h(s))\,(\rho^{\gamma}_{t+\delta-s}-\rho^{\gamma}_{t-s})\varphi_{X(s)}\nonumber\\
&+& \int_t^{t+\delta} ds\,\Upsilon(X(s),h(s))\,\rho^{\gamma}_{t+\delta-s}\varphi_{X(s)}\nonumber\mbox{.}
\end{eqnarray*}
From now on we  treat the cases $0<\gamma<\frac{1}{2}$ and  $\frac{1}{2}<\gamma<1$ separately.\\ 
Let us start with the superdiffusion:
$$
E\parallel k(t+\delta)-k(t)\parallel_2^4 \leq C(A_1+A_2+A_3+A_4),
$$
where
$$
A_1:=E\left\|\int_0^t ds\,(\rho^{\gamma}_{t+\delta-s}-\rho^{\gamma}_{t-s})
\varphi_{X(s)}\right\|_2^4, 
$$
$$
A_2:=E\left\|\int_t^{t+\delta} ds\,\rho^{\gamma}_{t+\delta-s}\varphi_{X(s)}
\right\|_2^4, 
$$
\begin{equation*}
A_3:=E\left\|\int_0^t db(s)\,(\rho^{\gamma}_{t+\delta-s}-\rho^{\gamma}_{t-s})
\varphi_{X(s)}\right\|_2^4 ,
\end{equation*}
$$
A_4:=E\left\|\int_t^{t+\delta} db(s)\,\rho^{\gamma}_{t+\delta-s}
\varphi_{X(s)}\right\|_2^4.
$$
\nopagebreak[0]
We need to estimate all the above terms:
\begin{align*}
A_1&\leq C E\left[ \int_0^t ds \|(\rho^{\gamma}_{t+\delta-s}-\rho^{\gamma}_{t-s})
\varphi_{X(s)}\|_2\right]^4\nonumber\\
&=C E\left[\int_0^t ds  \|(\rho^{\gamma}_{s+\delta}-\rho^{\gamma}_{s})\varphi\|_2 \right]^4\nonumber\\
&= CE\!\left[ \int_0^t\!\! ds\!\left( \!\int_\R \!dx\!\left(\int_\R \!dz \rho_1^{\gamma}(z)
[\varphi(x-z(s+\delta)^{\gamma})- \varphi(x-z s^{\gamma})]\right)^2   
\right)^{\frac{1}{2}}\! \right]^4\nonumber\\ 
&\leq C E\left[\int_0^t ds \int_\R dz  \rho_1^{\gamma}(z)\|\varphi_{z(s+\delta)^{\gamma}}-
\varphi_{z s^{\gamma}} \|_2 \right]^4\nonumber\\ 
&\leq C E\left[\int_0^t ds \int_\R dz\rho_1^{\gamma}(z)\mid z\mid \delta^{\gamma} \right]^4 \leq C t^4\delta^{4\gamma},\nonumber
\end{align*}
having used the scaling property (\ref{scaling_property}) and ($\ref{varphib-varphia}$).
\begin{align*}
A_2&\leq E\left[ \int_\R dx \delta\int_t^{t+\delta}(\rho^{\gamma}_{t+\delta-s}\varphi_{X(s)})^2 ds   \right]^2\nonumber\\
&=\delta^2 E\left( \int_0^{\delta}ds \| \rho_s^{\gamma}\varphi_{X(t+\delta-s)}\|_2^2\right)^2\leq C\delta^4,\nonumber
\end{align*}
having used the Cauchy-Schwartz inequality and the contractivity. \\
In order to find estimates on the last two terms, let us choose $\psi(x)=\sqrt{1+\mid x\mid}$ so that 
$\forall f \in L^2(\R) \mbox{ , }  
\|f\|_2^4\leq \|\psi^{-2}\|_2^2 \|f\psi\|_4^4\;$. Hence, via the Burkholder inequality  and again Cauchy-Schwartz, we get
$$
A_3\leq \|\psi^{-2}\|_2^2 \,E\left\| \int_0^t db(s) \psi (\rho^{\gamma}_{t+\delta-s}-
\rho^{\gamma}_{t-s})\varphi_{X(s)} \right\|_4^4
$$
$$
\leq C E\,\left\|\int_0^t ds \left[ \psi (\rho^{\gamma}_{t+\delta-s}-
\rho^{\gamma}_{t-s})\varphi_{X(s)}\right]^2 \right\|_2^2
$$
$$
\leq C t \int_0^t ds E\int_\R dx \,\psi(x+X(s))^4 \left[(\rho^{\gamma}_{t+\delta-s}-\rho^{\gamma}_{t-s})\varphi\right]^4(x)
$$
\begin{equation*}
\leq Ct (1+E \sup_{u\in [0,t]} \mid X(u)\mid^2) \int_0^t ds \|\psi (\rho^{\gamma}_{s+\delta}-\rho\esp_s)\varphi \|_4^4,
\end{equation*}
having used $\psi(x+X)^4\leq (1+\mid X\mid^2)\psi^4 (x)$. Let us look at the integrand: since $\psi(x)\leq \psi(y)+\sqrt{\mid x-y\mid}$, we have
\begin{align}
\!\!\!\!\!\!\!\!\|\psi (\rho^{\gamma}_{s+\delta}-\rho\esp_s)\varphi \|_4^4\leq&
C\| (\rho^{\gamma}_{s+\delta}-\rho\esp_s)(\psi\varphi) \|_4^4\\
+&C\int_\R\!\! dx \left[ \int_\R\!\! dy  (\rho^{\gamma}_{s+\delta}(x-y)\!-\rho^{\gamma}_s(x-y))\sqrt{\mid x-y\mid} \varphi(y) \right]^4\label{triangle}.
\end{align}
The first addend can be estimated similarly to what we have done for $A_1$, so we get
$$
\| (\rho^{\gamma}_{s+\delta}-\rho\esp_s)(\psi\varphi) \|_4^4\leq C\delta^{4\gamma};
$$ 
for the second, after applying Cauchy-Schwartz on the integrand,  we find
\begin{eqnarray*}
(\ref{triangle})\!\!\!&\leq&\!\!\!\!\!C\int_\R dx\left\{ \left(\int_\R dy(\rho^{\gamma}_{s+\delta}-\rho^{\gamma}_s)(x-y)\,\mid x-y\mid\right )^2 
\left(\int_\R dy(\rho^{\gamma}_{s+\delta}-\rho^{\gamma}_s)(x-y)\varphi^2(y)\right)^2\right\}\nonumber\\
&\leq&\!\!\!\!\!C\left(\int_\R dz \rho_1^{\gamma}(z)\mid z\mid((s+\delta)^{\gamma}-s^{\gamma}) \right)^2\,\| (\rho^{\gamma}_{s+\delta}-\rho^{\gamma}_s)\varphi^2\|_2^2\leq C\delta^{4\gamma},\nonumber
\end{eqnarray*}
and we end up with
$$
A_3\leq Ct^2 \left( 1+E\sup_{u\in[0,T]}\mid X(u)\mid^2\right)\delta^{4\gamma}\mbox{.}
$$
For $A_4$, analogously, 
\begin{align*}
A_4&\leq C\delta \left( 1+E\sup_{u\in[0,T]}\mid X(u)\mid^2\right)\int_0^{\delta} ds \|\psi\rho_s\esp\varphi \|_4^4\\
&\leq C\delta \left( 1+E\sup_{u\in[0,T]}\mid X(u)\mid^2\right)\\
&\qquad\times\int_0^{\delta}ds\left\{ \|\psi\varphi\|_4^4+\int_\R\, dx\left(\int_\R\, dy \rho_s^{\gamma}(x-y)\sqrt{\mid x-y\mid}\varphi(y) \right)^4 \right\}\mbox{.}
\end{align*}
Now the integral on the second line is estimated from above by
$$
\int_0^{\delta} \!\!ds \left\{\|\psi\varphi\|_4^4 +
\left(\int_\R dz \rho_s^{\gamma}(z)
\mid \!z\!\mid \right)^2 \!\|\rho_s^{\gamma}\varphi^2 \|_2^2  \right\},
$$
so that
$$
A_4\leq C\delta^2 \left( 1+E\sup_{u\in[0,T]}\mid X(u)\mid^2
\right)\mbox{.}
$$
Proving continuity in the subdiffusive case is slightly more delicate; let us write 
$$
E\|k(t+\delta)-k(t)\|_2^{2N}\leq C (\mathcal{A}_1+\mathcal{A} _2 + \mathcal{A}_3 +\mathcal{A} _4),
$$
where $N=N(\gamma)$ is to be specified in the following and 
$$
\mathcal{A}_1:=E\left\|\int_0^t ds\,(\rho^{\gamma}_{t+\delta-s}-
\rho^{\gamma}_{t-s})\varphi_{X(s)}\right\|_2^{2N}, 
$$
$$
\mathcal{A}_2:=E\left\|\int_t^{t+\delta} ds\,\rho^{\gamma}_{t+\delta-s}
\varphi_{X(s)}\right\|_2^{2N}, 
$$
$$
\mathcal{A}_3:=E\left\|\int_0^t db(s)\,(\rho^{\gamma}_{t+\delta-s}-
\rho_{t-s})\varphi_{X(s)}\right\|_2^{2N}, 
$$
$$
\mathcal{A}_4:=E\left\|\int_t^{t+\delta} db(s)\,\rho^{\gamma}_{t+\delta-s}
\varphi_{X(s)}\right\|_2^{2N} \mbox{.}
$$
For $\mathcal{A}_2$:
$$
\mathcal{A}_2 \leq C \delta^{2N}\left\vert \int_0^{\delta} \|\rho^{\gamma}_s \varphi\|_2^2  \right\vert^{2N}\leq C \, \delta^{4N},
$$
so that we need $N>\frac{1}{4}$.\\
For $\mathcal{A}_3$: let us choose again $\psi(x)=\sqrt{1+\mid x\mid}$ as an auxiliary function; then
$\forall N>0,\,\, \|\psi^{-2}\|_{\frac{N}{N-1}}^{N}<\infty$ and
$
\forall f \in L^2(\R) \mbox{ , }  \; \|f\|_2^{2N}\leq \|\psi^{-2}\|_{\frac{N}{N-1}}^{N} \,\|f\psi\|_{2N}^{2N}\;
$.
Via the Burkholder inequality, using  $\psi^{2N}(x+X)\leq C \,(1+\mid X(u)\mid^{N})\psi^{2N}(x)$ and working as we did for $A_3$ we get
\begin{align}\label{cui_ricondursi}
\mathcal{A}_3&\leq C E\left\|\int_0^tds \,\psi^2 \,[(\rho^{\gamma}_{t+\delta-s}-
\rho^{\gamma}_{t-s})\varphi_{X(s)}]^2 \right\|_N^N\\
&\leq C \,t^{N-1} (1+E \sup_{u\in [0,t]} \mid X(u)\mid^{N}) \int_0^t\!\!ds\int_\R dx \,\psi^{2N}(x)\left\vert(\rho^{\gamma}_{s+\delta}-\rho^{\gamma}_{s})\varphi\right\vert^{2N}\!(x)\nonumber\\
&\leq C \,t^{N-1} (1+E \sup_{u\in [0,t]} \mid X(u)\mid^{N})\int_0^tds\int_\R dx \left\vert \psi(x)\, \int_s^{s+\delta}d\tau \rho'^{\gamma}_{\tau}\,\varphi\right\vert^{2N}\nonumber\\
&\leq C \,t^{N-1} (1+E \sup_{u\in [0,t]} \mid X(u)\mid^{N})\nonumber\\
& \qquad\times \int_0^tds \int_\R dx\,
\left\vert \psi(x)\int_s^{s+\delta}d\tau\, \frac{d}{d\tau}\int_0^{\tau} du\frac{\rho^{\gamma}_u \varphi''}{(\tau-u)^{1-2\gamma}}\right\vert^{2N}\nonumber\\
&= C \,t^{N-1} (1+E \sup_{u\in [0,t]} \mid X(u)\mid^{N})\nonumber\\
& \qquad\times \int_0^tds \int_\R dx\,\left\vert \psi(x)\left[ \int_0^{s+\delta} \frac{du\,\rho^{\gamma}_u \varphi''}{(s+\delta-u)^{1-2\gamma}} -\int_0^s \frac{du\,\rho^{\gamma}_u \varphi''}{(s-u)^{1-2\gamma}}\right]\right\vert^{2N}\nonumber\\
&\leq C \,t^{N-1} (1+E \sup_{u\in [0,t]} \mid X(u)\mid^{N}) \nonumber\\
& \qquad\times\left\{ \int_0^tds \int_\R dx\left\vert\psi(x)\int_0^s \!du \rho^{\gamma}_u \varphi''
\left(\frac{1}{(s+\delta-u)^{1-2\gamma}}-\frac{1}{(s-u)^{1-2\gamma}} \right) \right.\right\vert^{2N}\nonumber\\
& \qquad + \left.\int_0^tds \int_\R dx\,\left\vert \psi(x)\int_s^{s+\delta}du \frac{\rho^{\gamma}_u \varphi''}{(s+\delta-u)^{1-2\gamma}}\right\vert^{2N}\right\}
\nonumber\\
&\leq C \,t^{N-1} (1+E \sup_{u\in [0,t]} \mid X(u)\mid^{N})
[\mathcal{A}_{3a}+\mathcal{A}_{3b}],
\end{align}
\nopagebreak[0]
where
$$
\mathcal{A}_{3a}=\int_0^tds \int_\R dx \left\vert \int_0^s du 
\left(\frac{1}{(s+\delta-u)^{1-2\gamma}}-\frac{1}{(s-u)^{1-2\gamma}} \right)
\rho_u\esp\varphi''\psi\right\vert^{2N}
$$
$$
\quad+\!\int_0^t\!\!\!ds \!\!\int_\R \!\!dx \!\left\vert \int_0^s 
\!\!\!\!du \!\left(\!\frac{1}{(s\!+\!\delta\!-\!u)^{1-2\gamma}}\!-\!\frac{1}{(s\!-\!u)^{1-2\gamma}} \!\right)\!\!\int_\R\!\! dy\rho^{\gamma}_u (x\!-\!y)\varphi''(y)\sqrt{\mid\!\! x\!-\!y\!\!\mid}\right\vert^{2N}
$$
$$
\mathcal{A}_{3b}= \int_0^tds \int_\R dx \left\vert 
\int_s^{s+\delta} du \frac{1}{(s+\delta-u)^{1-2\gamma}}
\rho^{\gamma}_u\varphi''\psi\right\vert^{2N}
$$
$$
\quad+\int_0^tds \int_\R dx \left\vert \int_s^{s+\delta}  
du \frac{1}{(s+\delta-u)^{1-2\gamma}} \int_\R dy
\rho^{\gamma}_u (x-y)\varphi''(y)\sqrt{\mid x-y\mid}\right\vert^{2N}\mbox{.}
$$
We claim that
\begin{eqnarray}\label{max_finiti}
& &\int_\R dx \left\vert\max_{0\leq u\leq s+\delta} (\rho^{\gamma}_u
\varphi''\psi)(x)\right\vert^{2N} <\infty,\nonumber\\
& &\int_\R dx \left\vert\max_{0\leq u\leq s+\delta} (\rho^{\gamma}_u(\cdot)
\sqrt{\cdot}\ast\varphi'')(x)\right\vert^{2N} <\infty\nonumber\mbox{.}
\end{eqnarray}
Indeed, 
$
\rho^{\gamma}_u\varphi''\psi
$
is continuous in $u$, so the maximum in ($\ref{max_finiti}$)$_1$ is attained at, say, $\tilde{u}$ and  
$
\|\rho^{\gamma}_{\tilde{u}}\varphi''\psi\|_{2N}^{2N}\leq \,C
$. 
The maximum in ($\ref{max_finiti}$)$_2$ is reached at the second extremum $(s+\delta)$, in fact
$$
\int_\R dx \left( \int_\R dy \rho^{\gamma}_{u}(x-y)\sqrt{\mid x-y\mid}\,\varphi''(y)\right)^{2N}
$$
$$
\leq C \int_\R dx \left(\int_\R dz \rho^{\gamma}_{1}(z)\mid z\mid^N u^{\gamma}\right)^{2N}\mbox{.}
$$
Therefore,
$$
\mathcal{A}_{3a}\leq C \int_0^t ds \left\vert \int_0^s du \left(\frac{1}{(s+\delta-u)^{1-2\gamma}}-\frac{1}{(s-u)^{1-2\gamma}} \right)\right\vert^{2N}=C(t) \delta^{(1-2\gamma)2N},
$$
with $C(t)$ bounded on compacts and 
$$
\mathcal{A}_{3b}\leq C \int_0^t ds\left\vert \int_s^{s+\delta} du 
\frac{1}{(s+\delta-u)^{1-2\gamma}} \right\vert^{2N}=C t \delta^{4N\gamma}.
$$
In order to apply the Kolmogorov criterion we need $4N\gamma>1$ and  $(1-2\gamma)2N\!>1$.
For $\mathcal{A}_1$ and $\mathcal{A}_4$:
$$
\mathcal{A}_1 \leq C \left\|\int_0^t ds \psi(x)
(\rho^{\gamma}_{t+\delta-s}-\rho^{\gamma}_{t-s})\varphi_{X(s)}\right\|_{2N}^{2N}
$$
$$
\leq C t^{2N-1} E \int_\R dx \int_0^t ds \mid \psi(x)(\rho^{\gamma}_{t+\delta-s}-
\rho^{\gamma}_{t-s})\varphi_{X(s)}\mid^{2N},
$$
which is exactly (\ref{cui_ricondursi}).
$$
\mathcal{A}_4 \leq C (1+E \sup_{u\in [0,t]} \mid X(u)\mid^{2N})\,\delta^{2N-1}\,\int_0^{\delta} \|\psi\rho^{\gamma}_s \varphi\|_{2N}^{2N};
$$
with analogous calculations, the integrand on the right hand side is bounded,  
hence
$$
\mathcal{A}_4 \leq C (1+E \sup_{u\in [0,t]} \mid X(u)\mid^{2N})\,\delta^{2N}\mbox{.}
$$
To conclude, requiring
$$
\left\{
\begin{array}{ll}
N\geq \frac{1}{2(1-2\gamma)} & \mbox{ if }\; \gamma\geq\frac{1}{4}\\
N\geq \frac{1}{4\gamma} & \mbox{ if }\; \gamma\leq\frac{1}{4},
\end{array}
\right.
$$
continuity follows.
\end{proof}
\section{Motivation}\label{app:motivation}
In the introduction we have briefly discussed the choice of the operators of fractional differentiation and of the fractional Laplacian. In this Appendix, we want to show how the operators $D_t^{\gamma}$ and $I_t^{\gamma}$ naturally arise in the context of anomalous diffusion and explain in some more detail the link with CTRWs.\\
We want to determine
an operator $A$ s.t.
$$
\left\{
\begin{array}{l}
\partial_t\rho^{\gamma}_t(x)=A\,\rho^{\gamma}_t(x)\\
\rho^{\gamma}_t(0)=\delta_0,
\end{array}
\right.
$$
with $\rho\esp(t,x)$ enjoying the following three properties:
\begin{equation}\label{richieste_su_rho}
\int_{\R} dx \rho^{\gamma}_t(x)=1\,\mbox{,}\quad
\int_{\R} dx \rho^{\gamma}_t(x) \,x = 0 \quad \mbox{e}\quad
\int_{\R} dx \rho^{\gamma}_t(x) \,x^2 \sim t^{2\gamma}\,
\end{equation}
(notice that for $\gamma=\frac{1}{2}$ we recover the diffusion equation with $A=\Delta$). We recall that $\hat{f}$, $f^{\#}$ and $\tilde{f}$ denote  the Fourier, the Laplace and the Fourier-Laplace transform of the function $f$, respectively.\\
By  ($\ref{richieste_su_rho}$), the following must hold
$$
\hat{\rho}^{\gamma}_t(k)=1-\frac{1}{2}ct^{2\gamma}k^2+o(k^2)\qquad\mbox{and}
$$
$$
\tilde{\rho}^{\gamma}(\mu,k)=\frac{1}{\mu}-\frac{ck^2}{2\mu^{2\gamma+1}}\Gamma(2\gamma+1)=
\frac{1}{\mu}(1-c_1\mu^{-2\gamma}k^2)\mbox{,}
$$
where $c_1=\frac{1}{2}c\Gamma(2\gamma+1)$.
In definitions (\ref{eqn_soddisf_da_rho_sub_frazionaria}) and 
(\ref{eqn_soddisf_da_rho_super_frazionaria}) the constant $c_1$ should appear; we just set it equal to $1$ both for simplicity and not being interested,  in this context, in estimating the "anomalous diffusion" constant.\\ 
We can assume that the expression for $\tilde{\rho}^{\gamma}(\mu,k)$ is valid in the regime $\mu^{-2\gamma}k^2<<1$. Actually, condition
($\ref{richieste_su_rho}$)$_3$ is meant for an infinitely wide system and for long times. In other words, if $\Lambda$ is the region where the particle moves, we claim that
$$
\lim_{t\rightarrow\infty}\lim_{\Lambda\rightarrow \R}\frac{\int_{\Lambda}dx \rho^{\gamma}_t(x) \,x^2 }{t^{2\gamma}}=\mbox{const}\mbox{.}
$$
This means that we are interested in the case  $k<<\mu$. Of course one can in principle find an infinite number of functions s.t. $\tilde{\rho}^{\gamma}(\mu,k)=\frac{1}{\mu}(1-c_1\epsilon)$ for $\epsilon=\mu^{-2\gamma}k^2$. One possible choice is
\begin{equation}\label{trasf_di_Four-Laplace_di_rho}
\tilde{\rho}^{\gamma}(\mu,k)=\frac{1}{\mu(1+c_1\epsilon)}=\mu^{\gamma-1}\frac{\mu^{\gamma}}{\mu^{2\gamma}+(c_1 k)^2}=\frac{1}{\mu+c_1k^2\mu^{1-2\gamma}}\mbox{,}
\end{equation}
which leads to an integro-differential equation and, when $\gamma=\frac{1}{2}$, it coincides with the Fourier-Laplace transform of a Gaussian density.\\
We now find the operator whose fundamental solution is $\tilde{\rho}^{\gamma}(\mu,k)$. We have
$$
\mathcal{L}(\partial_t\hat{\rho}^{\gamma}(\cdot,k))(\mu)=-1+\mu\tilde{\rho}^{\gamma}(\mu,k)=-c_1k^2\mu^{1-2\gamma}
\tilde{\rho}^{\gamma}(\mu,k)\mbox{.}
$$
Let $p=2\gamma-1$ and $\phi_p(t)=\frac{t^{p-1}}{\Gamma(p)}$; then we need to distinguish two cases in order to study the right hand side of the above equation:\\
when $0<\gamma<\frac{1}{2}$ one can easily check that
$$
\mathcal{L}(\phi_p\ast\hat{\rho}^{\gamma}(k,\cdot))=\tilde{\rho}^{\gamma}(\mu,k)\mu^{-p}
$$
which implies that
$$
\tilde{\rho}^{\gamma}(\mu,k)\mu^{1-2\gamma}\;\;\;\mbox{is the Laplace transform of}\;\;\; \frac{1}{\Gamma(2\gamma-1)}\int_0^tds\frac{\hat{\rho}^{\gamma}(s,k)}{(t-s)^{2-2\gamma}}\mbox{;}
$$
when $\frac{1}{2}<\gamma<1$, instead, a straightforward calculation shows that
$$
\mathcal{L}[\partial_t(\phi_{p+1}\ast\hat{\rho}^{\gamma}(k,\cdot))]=\tilde{\rho}^{\gamma}(\mu,k)\mu^{-p}
$$
so that
$$
\tilde{\rho}^{\gamma}(\mu,k)\mu^{1-2\gamma}\;\;\;\mbox{is the Laplace transform of}\;\;\; \frac{1}{\Gamma(2\gamma)}\frac{d}{dt}\int_0^tds\frac{\hat{\rho}^{\gamma}(s,k)}
{(t-s)^{1-2\gamma}}\mbox{.}
$$
Finally, taking the inverse Fourier transform, we get that $\rho^{\gamma}(t,x)$ satisfies ($\ref{eqn_soddisf_da_rho_sub_frazionaria}$) when
$0<\gamma<\frac{1}{2}$ and ($\ref{eqn_soddisf_da_rho_super_frazionaria}$) when $\frac{1}{2}<\gamma<1$.
Moreover, the explicit expression for $\rho^{\gamma}_t(x)$ holds true: by ($\ref{trasf_di_Four-Laplace_di_rho}$) we get that
$$
\tilde{\rho}^{\gamma}(\mu,k)=
\int_R dx \,e^{ikx}\,\frac{\mu^{\gamma-1}}{2\sqrt{c_1}}\,e^{-\frac{\mu^{\gamma}}{\sqrt{c_1}}\mid x\mid}
$$
hence
$$
\rho^{\#}(x,\mu)=\frac{\mu^{\gamma-1}}{2\sqrt{c_1}}\,e^{-\frac{\mu^{\gamma}}{\sqrt{c_1}}\mid x\mid}
$$
and now, by the inverse Laplace formula, we obtain ($\ref{espressione_esplicita_per_rho}$). Obviously, the expression ($\ref{espressione_esplicita_per_rho}$) has been deduced after having chosen ($\ref{trasf_di_Four-Laplace_di_rho}$) among all possible candidates for $\tilde{\rho}^{\gamma}$ and this choice can now be justified in view of the link with CTRWs.
\clearpage

\addcontentsline{toc}{section}{Bibliografia}

\end{document}